\RequirePackage{amsmath} 
\documentclass[runningheads,envcountsame]{llncs}

\usepackage[dvipsnames]{xcolor}
\usepackage{comment} 
\usepackage{bussproofs}
\usepackage{mathtools}
\usepackage{amssymb}
\usepackage{xifthen}
\usepackage{tikz}
\usepackage{cancel}
\usepackage{enumerate}
\usepackage{placeins} 
\usepackage{hyperref} 
\usetikzlibrary{decorations.pathmorphing} 
\usetikzlibrary{arrows,automata,calc,positioning,arrows.meta,shapes,matrix}
\usepackage{graphicx} 
\usepackage{diagbox} 
\usepackage[linesnumbered,ruled]{algorithm2e}
\usepackage{multirow}

\newcommand{\todo}[1]{}

\newcommand{\atraces}{\irtraces}
\newcommand{\alts}{\mathcal{AIA}}
\newcommand{\AP}[1]{\langle #1 \rangle} 
\newcommand{\SE}{\operatorname{\mathcal{D}}}
\newcommand{\altsop}{\operatorname{AIA}}
\newcommand{\ltsop}{\operatorname{IA}}

\newcommand{\isdef}[1][0mm]{\mathrel{\hspace{#1}=_\text{}\hspace{#1}}}
\newcommand{\ifdef}[1][0mm]{\mathrel{\hspace{#1}\iff_\text{}\hspace{#1}}}
\newcommand{\ifdefsmall}[1][0mm]{\mathrel{\hspace{#1}\Leftrightarrow_\text{}\hspace{#1}}}
\newcommand{\lts}[1][]{
 \ifthenelse{\equal{#1}{}}
 {\mathcal{IA}}
 {\mathcal{IA}(#1)}
}
\newcommand{\subst}[1]{T_{#1}}
\newcommand{\after}[3][]{#2 \mathrel{\operatorname{after}_{#1}} #3}

\newcommand{\tester}{\operatorname{tester}}
\newcommand{\pass}{\mathbf{pass}}
\newcommand{\fail}{\mathbf{fail}}
\newcommand{\passes}{\mathrel{\mathbf{passes}}}
\newcommand{\fails}{\mathrel{\mathbf{fails}}}

\newcommand{\inp}{\operatorname{in}}
\newcommand{\out}{\operatorname{out}}
\renewcommand{\det}{\operatorname{det}}

\newcommand{\traces}{\operatorname{traces}}
\newcommand{\irtraces}{\operatorname{Ftraces}}

\newcommand{\ttag}[1]{\tag*{\text{[#1]}}}
\newcommand{\qedtag}{\tag*{\qed}}
\newcommand{\ir}{\le_\textit{if}}

\newcommand{\ireq}{\equiv_\textit{if}}
\newcommand{\rcl}{\operatorname{fcl}}
\newcommand{\notrel}[2][3pt]{\mathrel{\hspace{#1}\cancel{\hspace{-#1}#2\hspace{-#1}}\hspace{#1}}}

\newcommand{\iftracesdom}{\mathcal{FT}}
\newcommand{\te}{\operatorname{\rceil|}}

\makeatletter%
\@ifclassloaded{llncs}%
  {}%
  {%
    \newtheorem{definition}{Definition}[section]%
    \newtheorem{lemma}{Lemma}[section]%
    \newtheorem{theorem}{Theorem}[section]%
    \newtheorem{proposition}{Proposition}[section]%
    \newtheorem{corollary}{Corollary}[section]%
    \newtheorem{example}{Example}[section]%
    \newenvironment{remark}{\paragraph{Remark}}{$\hfill\triangle$}%
    \newenvironment{proof}{\paragraph{Proof}}{\hfill$\square$\\}%
  }%
\makeatother%

\newcommand{\Outc}[1][]{
  \ifthenelse{\equal{#1}{}}
    {\mathop{\operatorname{outc}}}
    {\mathop{\operatorname{outc}}(\stratenv^{#1},\stratsys^{#1},\stratdet^{#1},\stratrc^{#1})}
}
\newcommand{\Outcati}[1][]{
  \ifthenelse{\equal{#1}{}}
    {\mathop{\operatorname{outc}}}
    {\mathop{\operatorname{outc}}(\stratenvati^{#1},\stratsys^{#1},\stratdet^{#1},\stratrc^{#1})}
}

\tikzstyle{every state}=[initial text=,minimum size=5mm,inner sep=0pt,shape=ellipse,draw=black]
\tikzstyle{top}=[initial text=,minimum size=0.15cm,inner sep=0pt,shape=ellipse,execute at begin node=$\top$]
\tikzset{trans/.style={>={Latex[length=2mm]},-{Latex[length=2mm]}}}
\tikzset{every initial by arrow/.append style={trans}}
\tikzstyle{emptystate}=[state]
\tikzstyle{verdictstate}=[state,minimum height=5mm]

\newcommand{\tikzlts}{
 \tikzset{every loop/.append style={trans}}
 \tikzset{every edge/.append style={trans}}
}

\tikzstyle{recstate}=[state,shape=rectangle,rounded corners=6pt,draw=black,inner sep=2pt]

\newif\ifhideproofs
\newif\ifshowshortversion

\newenvironment{longversion}{\ifvmode\else\unskip\fi}{\ignorespacesafterend}
\newenvironment{shortversion}{\ifvmode\else\unskip\fi}{\ignorespacesafterend}
\ifshowshortversion
  \excludecomment{longversion}
\else
  \excludecomment{shortversion}
\fi

\ifhideproofs
  \usepackage{environ}
  \NewEnviron{hidep}{}

\fi

\title{Combining Partial Specifications using\\Alternating Interface Automata\thanks{Funded by the Netherlands Organisation of Scientific Research (NWO-TTW), project 13859: SUMBAT - SUpersizing Model-BAsed Testing}}%
\author{Ramon Janssen}%
\institute{Radboud University, Nijmegen\\\email{ramonjanssen@cs.ru.nl}}%
\authorrunning{R. Janssen}%

\begin{document}

\maketitle

\begin{abstract}
  To model real-world software systems, modelling paradigms should support a form of compositionality.
  In interface theory and model-based testing with inputs and outputs, \emph{conjunctive} operators have been introduced: the behaviour allowed by composed specification $s_1 \wedge s_2$ is the behaviour allowed by both partial models $s_1$ and $s_2$.
  The models at hand are non-deterministic \emph{interface automata}, but the interaction between non-determinism and conjunction is not yet well understood.
  On the other hand, in the theory of \emph{alternating automata}, conjunction and non-determinism are core aspects.
  Alternating automata have not been considered in the context of inputs and outputs, making them less suitable for modelling software interfaces.
  In this paper, we combine the two modelling paradigms to define \emph{alternating interface automata} (AIA).
  We equip these automata with an observational, trace-based semantics, and define testers, to establish correctness of black-box interfaces with respect to an AIA specification.
\end{abstract}


\section{Introduction}
\label{sec:introduction}


The challenge of software verification is to ensure that software systems are correct, using techniques such as model checking and model-based testing.
To use these techniques, we assume that we have an abstract specification of a system, which serves as a description of what the system should do.
A popular approach is to model a specification as an automaton.
However, the huge number of states in typical real-world software systems quickly makes modelling with explicit automata infeasible.
A form of compositionality is therefore usually required for scalability, so that a specification can be decomposed into smaller and understandable parts.
Parallel composition is based on a structural decomposition of the modelled system into components, and it thus relies on the assumption that components themselves are small and simple enough to be modelled.
This assumption is not required for logical composition, in which partial specification models of the same component or system are combined in the manner of logical conjunction.
Formally, for a composition to be conjunctive, the behaviour allowed by $s_1 \wedge s_2$ is the behaviour allowed by both partial specifications $s_1$ and $s_2$.
Such a composition is important for scalability of modelling, as it allows writing independent partial specifications, sometimes called view modelling~\cite{Benetal15}.
On a fundamental level, specifications can be seen as logical statements about software, and the existence of conjunction on such statements is only natural.
Conjunctive operators have been defined in many language-theoretic modelling frameworks, such as for regular expressions~\cite{NaYa60} and process algebras~\cite{Bri90}.

\subsection{Conjunction for Inputs and Outputs}

A conjunctive operator $\wedge$ has also been introduced in many automata frameworks for formal verification and testing, such as interface theory~\cite{Doetal08}, ioco theory~\cite{Benetal15} and the theory of substitutivity refinement~\cite{Chietal14}.
Within these theories, systems are modelled as \emph{labelled transition systems}~\cite{Tre08} or \emph{interface automata}~\cite{AlHe01} (IA), and actions are divided into inputs and outputs.

An informal example of some (partial) specification models, as could be expressed in these theories, is shown by the automata in Figure~\ref{fig:intro-coffee}, in which inputs are labelled with question marks, and outputs with exclamation marks.
The specifications represent a vending machine with two input buttons (?a and ?b), which provides coffee (!c) and tea (!t) as outputs, optionally with milk (!c+m and !t+m).
The first model, $p$, specifies that after pressing button ?a, the machine dispenses coffee.
The second model, $q$, specifies that after pressing button ?b, the machine has a choice between dispensing tea, or tea with milk.
The third model, $r$, is similar, but uses non-determinism to specify that button ?b results in coffee with milk or tea with milk.

The fourth model, $p \wedge q \wedge r$, states that all former three partial models should hold.
Here, we use the definition of $\wedge$ from~\cite{Benetal15}, but the definition from~\cite{Chietal14} is similar.
An input is specified in the combined model if it is specified in any partial model, making both buttons ?a and ?b specified.
Additionally, an output is allowed in the combined model if it is allowed by all partial models, meaning that after button ?b, only tea with milk is allowed.

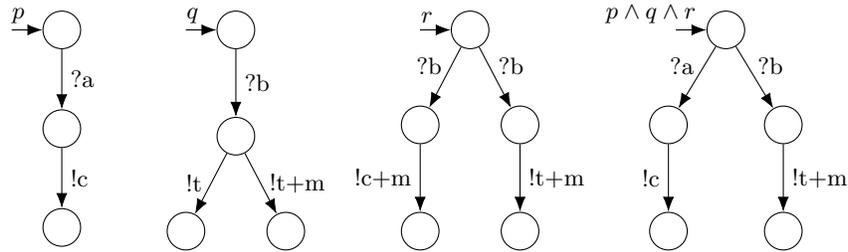
\begin{figure}
  \begin{center}
    \begin{tikzpicture}[->,node distance=11mm]
      \tikzlts
      
      \node[state,initial=left] (A1) {};
      \node (init-text) [above left=-2mm and 2mm of A1] {$p$};
      \node[state] (A2) [below=8mm of A1] {};
      \node[state] (A3) [below=8mm of A2] {};
      \path
      (A1) edge node [right] {?a} (A2)
      (A2) edge node [right] {!c} (A3)
      ;
      \node[state,initial=left] (C1) [right=18mm of A1] {};
      \node (init-text) [above left=-2mm and 2mm of C1] {$q$};
      \node[state] (C2) [below=9mm of C1] {};
      \node[state] (C3) [below left=9mm and 3mm of C2] {};
      \node[state] (C4) [below right=9mm and 3mm of C2] {};
      \path
      (C1) edge node [right] {?b} (C2)
      (C2) edge node [left] {!t} (C3)
      (C2) edge node [right] {!t+m} (C4)
      ;
      \node[state,initial=left] (B1) [right=26mm of C1] {};
      \node (init-text) [above left=-2mm and 2mm of B1] {$r$};
      \node[state] (B3) [below left=9mm and 3mm of B1] {};
      \node[state] (B4) [below right=9mm and 3mm of B1] {};
      \node[state] (B5) [below=9mm of B3] {};
      \node[state] (B6) [below=9mm of B4] {};
      \path
      (B1) edge node [above left=-1mm] {?b} (B3)
      (B1) edge node [above right=-1mm] {?b} (B4)
      (B3) edge node [left] {!c+m} (B5)
      (B4) edge node [right] {!t+m} (B6)
      ;
      \node[state,initial=left] (A1B1C1) [right=60mm of C1] {};
      \node (init-text) [above left=-2mm and 1mm of A1B1C1] {$p \wedge q \wedge r$};
      \node[state] (A2) [below left=9mm and 4mm of A1B1C1] {};
      \node[state] (A3) [below=9mm of A2] {};
      \node[state] (B2C2) [below right=9mm and 4mm of A1B1C1] {};
      \node[state] (B3C3) [below=9mm of B2C2] {};
      \path
      (A1B1C1) edge node [above left=-1mm] {?a} (A2)
      (A2) edge node [left] {!c} (A3)
      (A1B1C1) edge node [above right=-1mm] {?b} (B2C2)
      (B2C2) edge node [right] {!t+m} (B3C3)
      ;
    \end{tikzpicture}
    \caption{Three independent specifications for a vending machine, and their conjunction.}
    \label{fig:intro-coffee}
  \end{center}
\end{figure}
\vspace{-10mm}

\subsection{Conjunctions of states}

This form of conjunctive composition acts as an operator on entire models.
However, a partial specification could also describe the expected behaviour of a particular state of the system, other than the initial state.
For example, suppose that the input ?on turns the vending machine on, after which the machine should behave as specified by $p$, $q$ and $r$ from Figure~\ref{fig:intro-coffee}.
This, by itself, is also a specification, illustrated by $s$ in Figure~\ref{fig:intro-coffee-not-initial}.
However, the formal meaning of this model is unclear: transitions connect states, whereas $p \wedge q \wedge r$ is not a state but an entire automaton.
A less trivial case is partial specification $t$, also in Figure~\ref{fig:intro-coffee-not-initial}: after obtaining any drink by input ?take, we should move to a state where we can obtain a drink as described by specifications $p$, $q$, $r$ and $t$.
Thus, we combine conjunctions with a form of recursion.
This cannot easily be formalized using $\wedge$ as an operator on automata, like in~\cite{Benetal15,Chietal14,Doetal08}.
Defining conjunction as a composition on individual states would provide a formal basis for these informal examples.

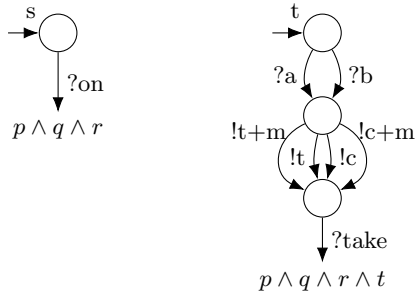
\begin{figure}
  \begin{center}
    \begin{tikzpicture}[->,node distance=11mm]
      \tikzlts
      
      \node[emptystate,initial=left] (A1) {};
      \node (init-text) [above left=-1mm and 0 of A1] {s};
      \node[] (A2) [below=8mm of A1] {$p \wedge q \wedge r$};
      \path
      (A1) edge node [right] {?on} (A2)
      ;
      \node[initial=left,emptystate] (D1) [right=30mm of A1] {};
      \node (init-text) [above left=-1mm and 0 of D1] {t};
      \node[emptystate] (D2) [below of=D1] {};
      \node[emptystate] (D3) [below of=D2] {};
      \node[] (D4) [below of=D3] {$p \wedge q \wedge r \wedge t$};
      \path
      (D1) edge [bend left] node [right] {?b} (D2)
      (D1) edge [bend right] node [left] {?a} (D2)
      (D2) edge [bend left=14] node [right] {!c} (D3)
      (D2) edge [bend left=65,looseness=1.4] node [above right=1mm and -2mm] {!c+m} (D3)
      (D2) edge [bend right=14] node [left] {!t} (D3)
      (D2) edge [bend right=65,looseness=1.4] node [above left=1mm and -2mm] {!t+m} (D3)
      (D3) edge node [right] {?take} (D4)
      ;
    \end{tikzpicture}
    \caption{Two specifications with transitions to a conjunction.}
    \label{fig:intro-coffee-not-initial}
  \end{center}
\end{figure}


Conjunctions of states are a main ingredient of \emph{alternating automata}~\cite{ChKoSt81}, in which conjunctions and non-determinism alternate.
Here, non-determism acts as logical disjunction, dually to conjunction.
Because of this duality, both conjunction and disjunction are treated analogously: both are encoded in the transition relation of the automaton.
This contrasts the approach of defining conjunction directly on IAs, where non-determinism is encoded in the transition relation of the IA, whereas conjunction is added as an operator on IAs, leaving the duality between the two unexploited.
In fact, the conjunction-operator in~\cite{Benetal15} even requires that any non-determinism in its operands is removed first, by performing an exponential determinization step.
For example, model $r$ in Figure~\ref{fig:intro-coffee} is non-deterministic, and must be determinized to the form of model $q$ before $p \wedge q \wedge r$ is computed.
This indicates that it is hard to combine conjunction and non-determinism in an elegant way, without understanding their interaction.

Despite their inherent support for conjunction, alternating automata are not entirely suitable for modeling the behaviour of software systems,
since they lack the distinction between inputs and outputs.
In this respect, alternating automata are similar to deterministic finite automata (DFAs).
Distinguishing inputs and outputs in an IA allows modelling of software systems in a less abstract way than with the homogeneous alphabet of actions of DFAs and alternating automata.

\subsection{Contributions}

We combine concepts from the worlds of interface theory and alternating automata, leading to Alternating Interface Automata (AIAs), and show how these can be used in the setting of a trace semantics for observable inputs and outputs.
We provide a solid formal basis of AIAs, by
\begin{itemize}
  \item
  combining alternation with inputs and outputs (Section~\ref{sec:aia}),
  \item
  defining a trace semantics for AIAs (Section~\ref{sec:alts-ir}), by lifting the \emph{input-failure refinement} semantics for non-deterministic interface automata~\cite{JaVaTr19} to AIAs,
  \item
  providing insight into the semantics of an AIA, by defining a determinization operator (Section~\ref{sec:determinization}) and a transformation between IAs and AIAs (Section~\ref{sec:alts-lts}), and
  \item
  defining testers (Section~\ref{sec:testers}), which represent practical testing scenarios for establishing input-failure refinement between a black-box implementation IA and a specification AIA, analogously to ioco test case generation~\cite{Tre08}.
\end{itemize}

The definition of input-failure refinement~\cite{JaVaTr19} is based upon the observation that, for a non-deterministically reached set of states $Q$, the observable outputs of that set are the union of the outputs of the individual states in $Q$, whereas the specified inputs for $Q$ are the intersection of the inputs specified in individual states in $Q$.
For conjunction, we invert this: outputs allowed by a conjunction of states are captured by the intersection, whereas specified inputs are captured by the union.
In this way, our AIAs seamlessly combine the duality between conjunction and non-determinism with the duality between inputs and outputs.

\begin{shortversion}
  Proofs can be found in the technical report~\cite{??}.
\end{shortversion}

\section{Preliminaries}
\label{sec:preliminaries}

We first recall the definition of interface automata~\cite{AlHe01} and input-failure refinement~\cite{JaVaTr19}.
The original definition of IAs~\cite{AlHe01} allows at most one initial state, but we generalize this to sets of states.
Moreover, \cite{AlHe01} supports
internal actions, which we do not need.
Transitions are commonly encoded by a relation, whereas we use a function.

\begin{definition}
  \label{def:ia}
  An Interface Automaton (IA) is a 5-tuple $(Q,I,O,T,Q^0)$, where
  \begin{itemize}
    \item $Q$ is a set of states,
    \item $I$ and $O$ are disjoint sets of input and output actions, respectively,
    \item 
    $T : Q \times (I \cup O\}) \rightarrow \mathcal{P}(Q)$ is an image-finite transition function (meaning that $T(q,\ell)$ is finite for all $q$ and $\ell$), and
    \item $Q^0 \subseteq Q$ is a finite set of initial states.
  \end{itemize}
  
  The domain of IAs is denoted $\lts$.
  For $s \in \lts$, we refer to its respective elements by $Q_s$, $I_s$, $O_s$, $T_s$, $Q_s^0$. 
  For $s_1, s_2,\ldots, s_A, s_B, \dots$ a family of IAs, we write $Q_j$, $I_j$, $O_j$, $T_j$ and $Q_j^0$ to refer to the respective elements, for $j = 1, 2,\ldots, A, B, \dots$.
  
\end{definition}

In examples, we represent IAs graphically as in Figure~\ref{fig:intro-coffee}.
For the remainder of this paper, we assume fixed input and output alphabets $I$ and $O$ for IAs, with $L = I \cup O$.
For (sets of) sequences of actions, $*$ denotes the Kleene star, and $\epsilon$ denotes the empty sequence.
We define auxiliary notation in the style of~\cite{Tre08}.

\renewcommand{\xRightarrow}{\xrightarrow}
\begin{definition}
  \label{def:lts-definitions}
  Let $s \in \lts$, $Q \subseteq Q_s$, $q,q' \in Q_s$, 
  $\ell \in L$ and $\sigma \in L^*$.
  We define
  \begin{align*}
    q \xrightarrow{\epsilon}_s q'\:\: &\ifdefsmall q = q' &
    \hspace{-4mm}q \xrightarrow{\sigma\ell}_s q'\:\: &\ifdefsmall \exists r \in Q_s: q \xrightarrow{\sigma}_s r \wedge q' \in T_s(r,\ell)\\
    q \xRightarrow{\sigma}_s \:\: &\ifdefsmall \exists r \in Q_s: q \xRightarrow{\sigma}_s r &
    q \not\xRightarrow{\sigma}_s\:\: &\ifdefsmall\:\: \neg (q \xrightarrow{\sigma}_s)\\
    \traces_s(q) &\isdef \{\sigma \in L^* \mid q \xRightarrow{\sigma}_s\} &
    \after[s]{Q}{\sigma} &\isdef \{r \in Q_s \mid \exists r' \in Q: r' \xRightarrow{\sigma}_s r\} \\
    \traces(s) &\isdef \bigcup_{q \in Q_s^0} \traces_s(q) &
    \after{s}{\sigma} &\isdef \after[s]{Q_s^0}{\sigma}\\
    \out_s(Q) &\isdef \{x \in O \mid \exists q \in Q: q \xRightarrow{x}_s \} &
    \inp_s(Q) &\isdef \{a \in I \mid \forall q \in Q: q \xRightarrow{a}_s\}
  \end{align*}
  \vspace{-8mm}
  \begin{align*}
    \text{$q$ is a \emph{sink-state} of $s$} &\ifdef \forall \ell \in L: T_s(q,\ell) \subseteq \{q\}\\
    \text{$s$ is \emph{input-enabled}} &\ifdef \forall q \in Q_s: \inp_{s}(q) = I\\
    \text{$s$ is \emph{deterministic}} &\ifdef \forall \sigma \in L^*, |\after{s}{\sigma}| \le 1
  \end{align*}
\end{definition}

We omit the subscript for interface automaton $s$ when clear from the context.


We use IAs to represent black-box systems, which can produce outputs, and consume or refuse inputs from the environment.
This entails a notion of observable behaviour, which we define in terms of \emph{input-failure traces}~\cite{JaVaTr19}.

\begin{definition}
  \label{def:refusal}
  For any input action $a$, we denote the \emph{input-failure of $a$} as $\overline a$.
  Likewise, for any set of inputs $A$, we define $\overline A \isdef \{\overline a \mid a \in A\}$.
  The domain of \emph{input-failure traces} is defined as 
  $\iftracesdom_{I,O} \isdef[0mm] L^* \cup L^* \cdot \overline I$. 
  For $s \in \lts$, we define
  \[
    \irtraces(s) \isdef \traces(s) \cup \{\sigma \overline a \mid \sigma \in L^*, 
    a \in I, a \not \in \inp(\after{s}{\sigma})
    \}
  \]
\end{definition}

Thus, a trace $\sigma\overline{a}$ indicates that $\sigma$ leads  to a state where $a$ is not accepted, e.g. a greyed-out button which cannot be clicked.

Any such set of input-failure traces is prefix-closed.
Input-failure traces are the basis of \emph{input-failure refinement}, which we will now explain briefly.
This refinement relation was introduced in~\cite{JaVaTr19} to bridge the gap between alternating refinements~\cite{AlHe01,Aluetal98} and ioco theory~\cite{Tre08}.
Similarly to normal trace inclusion, the idea is that an implementation may only show a trace if a specification also shows this trace.
Moreover, the most permissive treatment of an input is to fail it, so if a specification allows an input failure, then it also must allow acceptance of that input, as expressed by the \emph{input-failure closure}.

\begin{definition}
  \label{def:ir-inclusion}
  Set $S \subseteq \iftracesdom_{I,O}$ of input-failure traces is \emph{input-failure closed} if, for all $\sigma \in L^*$, $a \in I$ and $\rho \in \iftracesdom_{I,O}$,
  $ \sigma\overline a \in S \implies \sigma a \rho \in S$.
  The \emph{input-failure closure} of $S$ is the smallest input-failure closed superset of $S$, that is, $\rcl(S) \isdef S \cup \{\sigma a \rho \mid \sigma\overline a \in S, \rho \in \iftracesdom_{I,O}\}$.

  \emph{Input-failure refinement} and \emph{input-failure equivalence} on IAs are respectively defined as
  \begin{align*}
    s_1 \ir s_2 &\ifdef \irtraces(s_1) \subseteq \rcl(\irtraces(s_2))\text{, and}\\
    s_1 \ireq s_2 &\ifdef s_1 \ir s_2 \wedge s_2 \ir s_1.
  \end{align*}
\end{definition}

The input-failure closure of the $\irtraces$ serves as a canonical representation of the behaviour of an IA.
That is, two models are input-failure equivalent if and only if the closure of their input-failure traces is the same, as stated in Proposition~\ref{pro:rcl-irtraces-canonical}.

\begin{proposition}
  \cite{JaVaTr19}
  \label{pro:rcl-irtraces-canonical}
  Let $s_1, s_2 \in \lts$.
  Then 
  \begin{align*}
    s_1 \ir s_2 \iff& \rcl(\irtraces(s_1)) \subseteq \rcl(\irtraces(s_2))\\
    s_1 \ireq s_2 \iff& \rcl(\irtraces(s_1)) = \rcl(\irtraces(s_2))
  \end{align*}
\end{proposition}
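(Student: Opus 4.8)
The plan is to reduce both biconditionals to the fact that $\rcl$ is a \emph{closure operator} on $\iftracesdom_{I,O}$, i.e.\ that it is extensive ($S \subseteq \rcl(S)$), monotone ($S \subseteq T$ implies $\rcl(S) \subseteq \rcl(T)$) and idempotent ($\rcl(\rcl(S)) = \rcl(S)$). These three properties are exactly what makes the statement work, and once they are available the argument is short. The cleanest route is to observe that the input-failure closed subsets of $\iftracesdom_{I,O}$ form a closure system: the defining condition ``$\sigma\overline a \in S \implies \sigma a \rho \in S$ for all $\sigma,a,\rho$'' is a conjunction of implications and is therefore preserved under arbitrary intersections, and $\iftracesdom_{I,O}$ itself is input-failure closed (since $\sigma a \rho \in L^*\cdot\iftracesdom_{I,O} \subseteq \iftracesdom_{I,O}$). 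Hence the set of input-failure closed supersets of any $S$ has a least element, ``the smallest input-failure closed superset'' is well defined, and $\rcl$, being the closure operator associated with this closure system, is automatically extensive, monotone and idempotent.

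If one instead works directly from the explicit formula $\rcl(S) = S \cup \{\sigma a \rho \mid \sigma\overline a \in S,\ \rho \in \iftracesdom_{I,O}\}$, then extensiveness and monotonicity are immediate by inspection, and idempotence amounts to checking that $\rcl(S)$ is input-failure closed. This is the only step requiring any care, and I expect it to be the main obstacle, though it stays routine: suppose $\sigma\overline a \in \rcl(S)$; if $\sigma\overline a \in S$ we are done by the formula, and otherwise $\sigma\overline a = \tau b \rho'$ for some $\tau\overline b \in S$ and $\rho' \in \iftracesdom_{I,O}$. Since $\sigma\overline a$ ends in the input-failure symbol $\overline a$ while $\tau b \in L^*$, the suffix $\rho'$ cannot lie in $L^*$, so $\rho' = \rho''\overline a$ with $\rho'' \in L^*$ and $\tau b \rho'' = \sigma$. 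Then for every $\rho \in \iftracesdom_{I,O}$ we have $\sigma a \rho = \tau b (\rho'' a \rho)$ with $\rho'' a \rho \in \iftracesdom_{I,O}$, hence $\sigma a \rho \in \rcl(S)$ because $\tau\overline b \in S$. This case analysis is the only place where the syntactic shape of $\iftracesdom_{I,O}$ enters.

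Finally I would derive the two equivalences from these properties. For the first: if $\rcl(\irtraces(s_1)) \subseteq \rcl(\irtraces(s_2))$ then, by extensiveness, $\irtraces(s_1) \subseteq \rcl(\irtraces(s_1)) \subseteq \rcl(\irtraces(s_2))$, which is exactly the definition of $s_1 \ir s_2$; conversely, if $s_1 \ir s_2$, i.e.\ $\irtraces(s_1) \subseteq \rcl(\irtraces(s_2))$, then applying $\rcl$ and using monotonicity followed by idempotence gives $\rcl(\irtraces(s_1)) \subseteq \rcl(\rcl(\irtraces(s_2))) = \rcl(\irtraces(s_2))$. The second equivalence is then purely propositional: $s_1 \ireq s_2$ unfolds to $s_1 \ir s_2 \wedge s_2 \ir s_1$, which by the first equivalence holds iff $\rcl(\irtraces(s_1)) \subseteq \rcl(\irtraces(s_2))$ and $\rcl(\irtraces(s_2)) \subseteq \rcl(\irtraces(s_1))$, i.e.\ iff $\rcl(\irtraces(s_1)) = \rcl(\irtraces(s_2))$.
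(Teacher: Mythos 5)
Your proposal is correct. Note that the paper does not prove this proposition at all --- it is quoted from~\cite{JaVaTr19} --- so there is no in-paper argument to compare against; your self-contained closure-operator proof is the natural one. The reduction of both equivalences to extensiveness, monotonicity and idempotence of $\rcl$ is sound, and you correctly identify the only non-trivial step: showing that $\rcl(S)$ as given by the explicit formula is itself input-failure closed (equivalently, that the formula really computes the smallest input-failure closed superset, which also justifies the paper's Definition~\ref{def:ir-inclusion} being well posed). Your case analysis there is right: if $\sigma\overline a \in \rcl(S)\setminus S$ then $\sigma\overline a = \tau b \rho'$ with $\tau\overline b \in S$, and since $\tau b \in L^*$ while the word ends in a failure symbol, $\rho'$ must have the form $\rho''\overline a$ with $\rho''\in L^*$, so $\sigma = \tau b \rho''$ and every extension $\sigma a \rho = \tau b(\rho'' a\rho)$ again falls under the closure formula because $\rho'' a \rho \in \iftracesdom_{I,O}$. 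The remaining pieces (extensiveness, monotonicity, $\rcl(T)=T$ for closed $T$, and the purely propositional derivation of the second equivalence from the first) are all routine and stated accurately, so the argument goes through as written.
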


Proposition~\ref{pro:rcl-irtraces-canonical} implies that relation $\ir$ is reflexive ($s \ir s$) and transitive ($s_1 \ir s_2 \wedge s_2 \ir s_3 \implies s_1 \ir s_3$).
Formally, it is thus a preorder, making it suitable for stepwise refinement.

\section{Alternating Interface Automata}

Real software systems are always in a single state, but the precise state of a system cannot always be derived from an observed trace.
Due to non-determinism, a trace may lead to multiple states.
In IAs, this is modelled as a set of states, such as the set of initial states, the set $T(q,\ell)$ for state $q$ and action $\ell$, and the set $\after{s}{\sigma}$ for IA $s$ and trace $\sigma$.
The domain of such non-deterministic views on an IA with states $Q$ is thus the powerset of states, $\mathcal{P}(Q)$.
In set of states $Q$, traces from any individual state in $Q$ may be observed.


\subsection{Alternation}
Alternation generalizes this view on automata: a system may not only be non-deterministically in multiple states, but also conjunctively.
When conjunctively in multiple states, only traces which are in \emph{all} these states may be observed.
Alternation is formalized by exchanging the domain $\mathcal{P}(Q)$ for the domain $\SE(Q)$.
Formally, $\mathcal{D}(Q)$ is the \emph{free distributive lattice}, which exist for any set $Q$~\cite{Pri90}.

\newcommand{\qmid}{\;\mid\;}
\begin{definition}
  \label{def:free-distributive-lattice}
  For any set $Q$, $\SE(Q)$ denotes the \emph{free distributive lattice generated by $Q$}.
  That is, $\SE(Q)$ is the domain of equivalence classes of terms, inductively defined by the the grammar
  \[e \quad=\quad \top \qmid \bot \qmid \AP{q} \qmid e_1 \vee e_2 \qmid e_1 \wedge e_2 \hspace{12mm} \text{with $q \in Q$,}\]
  where equivalence of terms is completely defined by the following axioms:

  \def\arraystretch{1.3}
  \noindent
  \begin{tabular}{ccr}
    $e_1 \vee e_2 = e_2 \vee e_1$
      & $e_1 \wedge e_2 = e_2 \wedge e_1$
      & \hspace{-3mm}[Commutativity]\\
    $e_1 \vee (e_2 \vee e_3) = (e_1 \vee e_2) \vee e_3$
      & \hspace{4mm} $e_1 \wedge (e_2 \wedge e_3) = (e_1 \wedge e_2) \wedge e_3$
      & [Associativity]\\
    $e_1 \vee (e_1 \wedge e_2) = e_1$
      & $e_1 \wedge (e_1 \vee e_2) = e_1$
      & [Absorption]\\
    $e \vee e = e$
      & $e \wedge e = e$
      & [Idempotence]\\
    \multicolumn{3}{l}{$e_1 \vee (e_2 \wedge e_3) = (e_1 \vee e_2) \wedge (e_1 \vee e_3)$ \qquad $e_1 \wedge (e_2 \vee e_3) = (e_1 \wedge e_2) \vee (e_1 \wedge e_3)$} \\
    &&[Distributivity]\\
    $e \vee \top = \top$
      & $e \wedge \bot = \bot$
      & [Identity]
  \end{tabular}
  In short, ($\SE(Q)$, $\vee$, $\wedge$, $\bot$, $\top$) forms a distributive lattice.
  Expression $\AP{q}$ is named the embedding of $q$ in $\SE(Q)$, and operators $\vee$ and $\wedge$ are named disjunction and conjunction, respectively.
  For the remainder of this paper, we make no distinction between expressions and their equivalence classes.
  
  For finite $n$, we introduce the shorthand $n$-ary operators $\bigvee$ and $\bigwedge$, as follows:
  \begin{align*}
    \bigvee \{e_1, e_2, \dots e_n\} &\isdef e_1 \vee e_2 \vee \dots e_n
      & \bigvee \emptyset &\isdef \bot\\
    \bigwedge \{e_1, e_2, \dots e_n\} &\isdef e_1 \wedge e_2 \wedge \dots e_n
      & \bigwedge \emptyset &\isdef \top
  \end{align*}
\end{definition}

\begin{longversion}
  \begin{remark}
    Identifying expressions and their equivalence classes requires that the equivalence relation on expressions is a congruence for all functions that we define on $\SE(Q)$.
    We do not write out explicit checks for congruence.
    \todo{does this need more explanation?}
  \end{remark}
\end{longversion}

We distinguish the embedding $\AP{q} \in \SE(Q)$ from $q$ itself.
We require this distinction only in Definition~\ref{def:alts-determinization}, where we will point this out.
Otherwise, we do not need this distinction, so we write $q$ instead of $\AP{q}$.

Intuitively, disjunction $q_1 \vee q_2$ replaces the non-deterministic set $\{q_1,q_2\}$.
This is formalized by extending IAs with alternation.
\label{sec:aia}


\begin{definition}
 \label{def:alts}
 An \emph{alternating interface automaton} (AIA) is defined as a 5-tuple $(Q,I,O,T,e^0)$ where
 \begin{itemize}
  \item $Q$ is a set of states, and elements of $\SE(Q)$ are referred to as \emph{configurations},
  \item $I$ and $O$ are disjoint sets of input and output actions, respectively, 
  \item $T : Q \times (I \cup O) \rightarrow \SE(Q)$ is a transition function, with $T(q,a) \neq \bot$ for all $a \in I$, and
  \item $e^0 \in \SE(Q)$ is the initial configuration.
 \end{itemize}
 
 The domain of AIAs is denoted by $\alts$.
 Notations for IAs are reused for AIAs, if this causes no ambiguity.
 For $\ell \in L$, we define $\subst{\ell} : Q \rightarrow \SE(Q)$ by $\subst{\ell}(q) = T(q,\ell)$.
\end{definition}


Configurations $\top$ and $\bot$ are analogous to the empty set of states in an IA $s$: if $T_s(q,\ell) = \emptyset$, this means that state $q$ does not have a transition for $\ell$.
In terms of input-failure refinement, not having a transition for an input means that the input is underspecified, whereas not having a transition for an output means that the output is forbidden.
This distinction is made explicit in AIA by using $\top$ to represent underspecification and $\bot$ to represent forbidden behaviour.
We will formalize this in Section~\ref{sec:alts-ir}.
Definition~\ref{def:alts} also allows output transitions to $\top$, meaning that  the behaviour is unspecified after that output.
Automata models which do not allow distinct configurations $\top$ and $\bot$ commonly represent such underspecified behaviour with an explicit chaotic state~\cite{Benetal15,BiReTr04} instead.

We graphically represent AIAs in a similar way as IAs, with some additional rules.
A transition $T(q^0,\ell) = \AP{q^1}$ is represented by a single arrow from $q^0$ to $q^1$.
We represent $T(q^0,\ell) = q^1 \vee q^2$ by two arrows $q^0 \xrightarrow{\ell} q^1$ and $q^0 \xrightarrow{\ell} q^2$, analogous to non-determinism in IAs.
Conjunction $T(q^0,\ell) = q^1 \wedge q^2$ is shown by adding an arc between the arrows.
Nested expressions are represented by successive splits, as shown in Example~\ref{exa:aia}.
A state $q$ without outgoing arrow for an output $\ell \in O$ represents $T(q,\ell) = \bot$, and a state without input transitions for input $\ell$ indicates $T(q,\ell) = \top$.
For $\ell \in O$, a transitions $T(q,\ell) = \top$ is shown with an arrow to $\top$, denoting underspecification, but note that $\top$ is a configuration, not a state.

\begin{example}
  \label{exa:aia}
  Figure~\ref{fig:alts-example} shows AIA $s_A$, with $Q_A = \{q_A^0, q_A^1, q_A^2\}$, $I = \{\text{?a},\text{?b}\}$, $O = \{\text{!x},\text{!y}\}$, $e_A^0 = q_A^0$ and $T$ given by the following table:
  
  \setlength{\tabcolsep}{3mm}
    \begin{tabular}{|r | c c c c |}\hline
    \diagbox[width=6em]{state}{action} & ?a & ?b & !x & !y \\ \hline
    $q_A^0$ & $q_A^0 \wedge (q_A^1 \vee q_A^2)$ & $\top$ & $q_A^0$ & $q_A^0$ \\
    $q_A^1$ & $\top$ & $\top$ & $\top$ & $\bot$ \\
    $q_A^2$ & $\top$ & $q_A^0$ & $\bot$ & $q_A^2$ \\\hline
  \end{tabular}
  
  Moreover, AIA $s_B$ combines the partial specifications from Section~\ref{sec:introduction}.
  
  \begin{figure}
    \strut
    \hspace{-20mm}
    \tikzlts
    \begin{tikzpicture}[node distance=16mm]
      \node [state] [initial,initial text=] (0) {$q_A^0$};
      \node (init-text) [above left=-2mm and 0.5mm of 0] {$s_A$};
      \node (0b) [right of=0] {};
      \node [state] (1) [above right of=0b] {$q_A^1$};
      \node [state] (2) [below right of=0b] {$q_A^2$};
      \node [top] (3) [right of=1] {};
      \node (dummy) [below right=7mm and 10.5mm of 0] {};
      \draw[-] ($(0)!0.45!(0b)$) to[bend left=30] ($(0)!0.35!(dummy)$);
      \path
      (0) edge [-] node [below] {?a} (0b.center)
      (0) edge [loop above,in=60,out=85,looseness=10] node {\vphantom{!y}!x} (0)
      (0) edge [loop above,in=120,out=95,looseness=10] node {!y} (0)
      (0.east) edge [in=-70,out=-45,looseness=5] node [below] {} (0)
      (0b.center) edge node [above left] {} (1)
      (0b.center) edge node [below left=0mm and -2mm] {} (2)
      (1) edge [below] node {!x} (3)
      (2) edge [loop right] node {!y} (2)
      (2) edge [out=180,in=-110] node [below] {?b} (0)
      ;

      \tikzset{node distance=11mm}
      \node[initial left,state,inner sep=1pt, below left=15mm and 3mm of 0] (init) {$q_B^0$};
      \node (init-text) [above left=-2mm and 0.5mm of init] {$s_B$};
      \node[state] (A1) [below left=11mm and 18mm of init] {$q_B^1$};
      \node[state] (A2) [below of=A1] {$q_B^2$};
      \node[top] (A3) [below of=A2] {};
      \path
      (A1) edge node [right] {?a} (A2)
      (A2) edge node [right] {!c} (A3)
      ;
      \node[state] (C1) [below=9mm of init] {$q_B^3$};
      \node[state] (C2) [below=7mm of C1] {$q_B^4$};
      \node[top] (C3) [below left=7mm and 3mm of C2] {};
      \node[top] (C3b) [below right=7mm and 3mm of C2] {};
      \path
      (C1) edge node [left] {?b} (C2)
      (C2) edge node [left=0.5mm] {!t} (C3)
      (C2) edge node [right=0.5mm] {!t+m} (C3b)
      ;
      \node[state] (B1) [below right=11mm and 30mm of init] {$q_B^5$};
      \node[state] (B3) [below left=7mm and 3mm of B1] {$q_B^6$};
      \node[state] (B4) [below right=7mm and 3mm of B1] {$q_B^7$};
      \node[top] (B5) [below=7mm of B3] {};
      \node[top] (B6) [below=7mm of B4] {};
      \path
      (B1) edge node [left] {?b} (B3)
      (B1) edge node [right] {?b} (B4)
      (B3) edge node [left] {!c+m} (B5)
      (B4) edge node [right] {!t+m} (B6)
      ;
      \node[state] (D1) [below right=11mm and 62mm of init] {$q_B^8$};
      \node[state] (D2) [below of=D1] {$q_B^9$};
      \node[state] (D3) [below of=D2] {$q_B^{10}$};
      \path
      (D1) edge [bend left] node [right] {?b} (D2)
      (D1) edge [bend right] node [left] {?a} (D2)
      (D2) edge [bend left=14] node [right] {!c} (D3)
      (D2) edge [bend left=70,looseness=2] node [above right=1mm and -2mm] {!c+m} (D3)
      (D2) edge [bend right=14] node [left] {!t} (D3)
      (D2) edge [bend right=70,looseness=2] node [above left=1mm and -2mm] {!t+m} (D3)
      ;
      \path
      (init) edge node [above left=-1.4mm] {} (A1)
      (init) edge node [above right=-1mm and -0.6mm] {} (B1)
      (init) edge node [above left=-1mm and 0mm] {?on} (C1)
      (init) edge node [above right=-1mm] {} (D1)
      ;
      \draw[-] ($(init)!0.2!(A1)$) to[bend right=45] ($(init)!0.1!(D1)$);
      \path
      (D3) edge [out=-155,in=-130,looseness=1.2] (A1)
      (D3) edge [out=-180,in=-15,looseness=1.6] (B1)
      (D3) edge [out=-170,in=-50,looseness=1.6] (C1)
      (D3) edge [out=-20,in=0,looseness=2.3] node [below left=11mm and 13mm] {?take} (D1)
      ;
      \node (aux1) [above left=-2mm and 20mm of D3] {};
      \node (aux2) [below right=5mm and 20mm of D3] {};
      \draw[-] ($(D3)!0.3!(aux1)$) to[bend right=45] ($(D3)!0.25!(aux2)$);
    \end{tikzpicture}
    \vspace{-15mm}
    \caption{Example AIAs $s_A$ and $s_B$.}
    \label{fig:alts-example}
    \label{fig:ALTS-coffee}
  \end{figure}
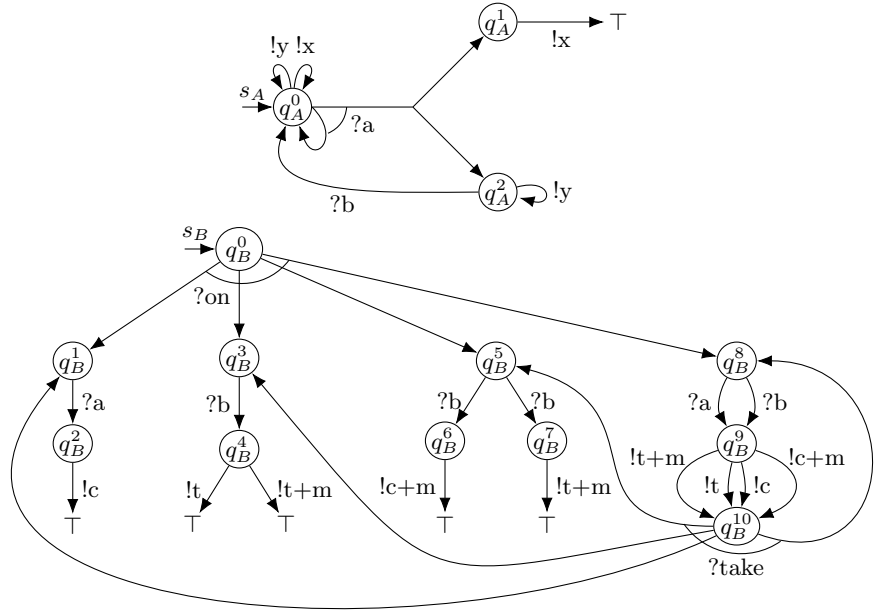
\end{example}

Before defining trace semantics for AIAs, we extend the transition function from single actions to sequences of actions, by defining an $\after{}{}$-function on AIAs.
This function transforms configurations by substituting every state according to the transition function, similarly to the approach for alternating automata in~\cite{ChKoSt81}.

\begin{definition}
  \label{def:substitution}
  Let $f : Q \rightarrow \SE(Q)$ and $e \in \SE(Q)$.
  Then \emph{substitution} $e[f]$ is equal to $e$ with all atomic propositions replaced by $f(e)$.
  Formally, $[f] : \SE(Q) \rightarrow \SE(Q)$ is a postfix operator defined by\\
  \begin{tabular}{cc}
    $(e_1 \vee  e_2) [f] = e_1[f] \vee e_2[f]$ &\quad $(e_1 \wedge  e_2) [f] = e_1[f] \wedge e_2[f]$\\
    \multicolumn{2}{c}{\begin{tabular}{ccc}$\top[f] = \top$ &\quad $\bot[f] = \bot$ &\quad $\AP{q}[f] = f(q)$\end{tabular}}
  \end{tabular}
\end{definition}


\begin{definition}
 \label{def:sla-after}
 Let $s \in \alts$.
 We define $\after{}{} : \SE(Q_s) \times L^* \rightarrow \SE(Q_s)$ as
 \[
  \after[s]{e}{\epsilon} = e \hspace{1cm}
  \after[s]{e}{(\ell \cdot \sigma)} = \after[s]{e[\subst{\ell}]\,}{\sigma}
 \]
 Like before, we omit the subscript if clear from the context.
 We also define $(\after{s}{\sigma}) = \after[s]{e_s^0}{\sigma}$.
\end{definition}



\begin{example}
  \label{exa:after-set}
  Consider $s_B$ in Figure~\ref{fig:ALTS-coffee}.
  We evaluate $\after{s_B}{\text{?on}\,\text{?b}\,\text{!t}}$, as follows:
  \begin{align*}
     & \after{q_B^0}{\text{?on}\,\text{?b}\,\text{!t}}
       = \after{q_B^0[\subst{\text{?on}}]}{\text{?b}\,\text{!t}}
       = \after{T(q_B^0,\text{?on})}{\text{?b}\,\text{!t}}\\
    =\;& \after{(q_B^1 \wedge q_B^3 \wedge q_B^5 \wedge q_B^8)}{ \text{?b !t}} = \after{(q_B^1 \wedge q_B^3 \wedge q_B^5 \wedge q_B^8)[\subst{\text{?b}}]}{ \text{!t}} \\
    =\;& \after{(\top \wedge q_B^4 \wedge (q_B^6 \vee q_B^7) \wedge q_B^9)}{\text{!t}}
       = (q_B^4 \wedge (q_B^6 \vee q_B^7) \wedge q_B^9)[\subst{\text{!t}}]\\
    =\;& (\top \wedge (\bot \vee \bot) \wedge q_B^{10})
    = \bot
  \end{align*}
  Intuitively, this means that giving a tea without milk after $\text{?on}\,\text{?b}$ is forbidden.
  In contrast, tea with milk is allowed, and leads to configuration $q_B^{10}$:
  \begin{align*}
     & \after{q_B^0}{\text{?on}\,\text{?b}\,\text{!t+m}}
      = (q_B^4 \wedge (q_B^6 \vee q_B^7) \wedge q_B^9)[\subst{\text{!t+m}}]
      = \top \wedge (\bot \vee \top) \wedge q_B^{10}
      = q_B^{10}
  \end{align*}
\end{example}


\begin{longversion}
  Before we define the semantics of AIA, we establish some essential properties of the $\after{}{}$-function, in Lemmas~\ref{lem:after-distributes-over-expressions} and~\ref{lem:after-distributes-over-traces}.
\end{longversion}

\begin{lemma}
  \label{lem:after-distributes-over-expressions}
  Let $s \in \alts$, $\sigma \in L^*$ and $e_1, e_2 \in \alts$.
  Then
  \begin{align*}
    \after{(e_1 \vee e_2)}{\sigma} &= (\after{e_1}{\sigma}) \vee (\after{e_2}{\sigma})\\
    \after{(e_1 \wedge e_2)}{\sigma} &= (\after{e_1}{\sigma}) \wedge (\after{e_2}{\sigma})
  \end{align*}
\end{lemma}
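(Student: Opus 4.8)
The plan is to prove both identities simultaneously by induction on the length of the trace $\sigma$, since the two operators $\vee$ and $\wedge$ are treated symmetrically by $\after{}{}$, and the inductive step is identical for both. The key observation is that $\after{e}{\sigma}$ is built from two operations: substitution $e[T_\ell]$ and the recursive call $\after{}{}$ on the remaining trace, so I need lemmas saying each of these distributes over $\vee$ and $\wedge$.

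First I would handle the base case $\sigma = \epsilon$: here $\after{(e_1 \vee e_2)}{\epsilon} = e_1 \vee e_2 = (\after{e_1}{\epsilon}) \vee (\after{e_2}{\epsilon})$ directly from Definition~\ref{def:sla-after}, and likewise for $\wedge$. For the inductive step, write $\sigma = \ell \cdot \sigma'$. By Definition~\ref{def:sla-after}, $\after{(e_1 \vee e_2)}{(\ell \cdot \sigma')} = \after{(e_1 \vee e_2)[T_\ell]}{\sigma'}$. By Definition~\ref{def:substitution}, substitution is defined homomorphically over $\vee$, so $(e_1 \vee e_2)[T_\ell] = e_1[T_\ell] \vee e_2[T_\ell]$. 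Then applying the induction hypothesis to the shorter trace $\sigma'$ and the configurations $e_1[T_\ell]$ and $e_2[T_\ell]$ gives $\after{(e_1[T_\ell] \vee e_2[T_\ell])}{\sigma'} = (\after{e_1[T_\ell]}{\sigma'}) \vee (\after{e_2[T_\ell]}{\sigma'})$, and each disjunct is $\after{e_i}{(\ell \cdot \sigma')}$ by Definition~\ref{def:sla-after} again. The $\wedge$ case is verbatim the same argument with $\wedge$ replacing $\vee$.

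The only genuine subtlety — which is more a matter of bookkeeping than a real obstacle — is that we are working in the free distributive lattice $\SE(Q)$, i.e. with equivalence classes of expressions rather than expressions themselves. So every step above must respect the lattice congruence: the claim that $(e_1 \vee e_2)[T_\ell] = e_1[T_\ell] \vee e_2[T_\ell]$ is an equality of equivalence classes, which is exactly the congruence property of substitution that the remark following Definition~\ref{def:free-distributive-lattice} says we take for granted. Given that, the argument is a clean two-line induction. I would also note that the typo "$e_1, e_2 \in \alts$" in the statement should read $e_1, e_2 \in \SE(Q_s)$; the proof treats them as configurations. I expect no hard part here — this lemma is purely a structural warm-up, and its real purpose is to feed into Lemma~\ref{lem:after-distributes-over-traces} and the subsequent development of the trace semantics.
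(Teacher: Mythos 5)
Your proof is correct and is essentially the same argument as the paper's: the paper notes that substitution distributes over $\vee$ and $\wedge$ by Definition~\ref{def:substitution} and that the result then follows by induction on the length of $\sigma$, which is exactly the induction you have written out in full.
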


\begin{proof}
  Any substitution $[f]$ distributes over $\vee$ and $\wedge$ by Definition~\ref{def:substitution} of substitution.
  Since the after-function for fixed $\sigma$ is a successive application of substitutions, this also distributes over $\vee$ and $\wedge$ (formally proven by induction on the length of $\sigma$).
  \qed
\end{proof}

\begin{lemma}
  \label{lem:after-distributes-over-traces}
  Let $s \in \alts$, $e \in \SE(Q_s)$ and $\sigma_1, \sigma_2 \in L^*$.
  Then \[\after{e}{(\sigma_1 \sigma_2)} = \after{(\after{e}{\sigma_1})}{\sigma_2}.\]
\end{lemma}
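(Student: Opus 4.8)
The plan is to prove the concatenation identity $\after{e}{(\sigma_1 \sigma_2)} = \after{(\after{e}{\sigma_1})}{\sigma_2}$ by induction on the length of $\sigma_1$, treating $e$ and $\sigma_2$ as arbitrary (or, more conveniently, fixing $\sigma_2$ and $e$ and doing the induction on $\sigma_1$; since the statement is over all $e$, the induction hypothesis will be quantified over all configurations). The base case $\sigma_1 = \epsilon$ is immediate: $\after{e}{(\epsilon\sigma_2)} = \after{e}{\sigma_2}$ on the left, and $\after{(\after{e}{\epsilon})}{\sigma_2} = \after{e}{\sigma_2}$ on the right, using the defining clause $\after{e}{\epsilon} = e$ from Definition~\ref{def:sla-after}.

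For the inductive step, write $\sigma_1 = \ell \cdot \rho$ with $\ell \in L$ and $\rho \in L^*$. Then by the recursive clause of Definition~\ref{def:sla-after},
\[
  \after{e}{(\sigma_1\sigma_2)} = \after{e}{(\ell\cdot\rho\sigma_2)} = \after{e[\subst{\ell}]}{(\rho\sigma_2)}.
\]
Applying the induction hypothesis to the configuration $e[\subst{\ell}]$ and the shorter sequence $\rho$, this equals $\after{(\after{e[\subst{\ell}]}{\rho})}{\sigma_2}$. Finally, $\after{e[\subst{\ell}]}{\rho} = \after{e}{(\ell\cdot\rho)} = \after{e}{\sigma_1}$ again by the recursive clause, so the whole expression is $\after{(\after{e}{\sigma_1})}{\sigma_2}$, as required.

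This is a routine unfolding argument and I do not anticipate a genuine obstacle; the only point requiring a little care is making sure the induction hypothesis is stated with $e$ universally quantified (since the step applies it to $e[\subst{\ell}]$, not to $e$ itself), and that $\sigma_2$ is held fixed throughout — or equivalently quantified as well, which does no harm. No appeal to Lemma~\ref{lem:after-distributes-over-expressions} or to the lattice axioms is needed here, since the identity is purely about the associativity of iterated substitution along a sequence; it follows directly from the two defining equations of $\after{}{}$.
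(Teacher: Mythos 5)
Your proof is correct and follows essentially the same route as the paper's: induction on the length of $\sigma_1$, unfolding the recursive clause of Definition~\ref{def:sla-after} and applying the induction hypothesis (with $e$ universally quantified) to the configuration $e[\subst{\ell}]$. Your remark that the hypothesis must be quantified over all configurations matches the paper's own phrasing of the inductive hypothesis.
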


\begin{proof}
  By induction on the length of $\sigma_1$.
  For base case $\sigma_1 = \epsilon$, we have
  \[\after{e}{\sigma_1 \sigma_2}
    = \after{(\after{e}{\epsilon})}{\sigma_2}
    = \after{e}{(\epsilon \cdot \sigma_2)}
    = \after{e}{(\sigma_1 \cdot \sigma_2)}
  .\]
  For the inductive case, let $\sigma_1 = \ell\sigma_1'$ for some $\ell$ and $\sigma_1'$, and assume as inductive hypothesis (IH) that $\after{e}{(\sigma_1'\sigma_2)} = \after{(\after{e}{\sigma_1'})}{\sigma_2}$ for all state expressions $e$. Then
  \begin{align*}
      & \after{e}{\sigma_1 \sigma_2}
    = \after{e}{\ell\sigma_1'\sigma_2}
    = \after{e[T_\ell]}{\sigma_1'\sigma_2}\\
    =\;& \after{(\after{e[T_\ell]}{\sigma_1'})}{\sigma_2} \ttag{assumption (IH)}\\
    =\;& \after{(\after{e}{\ell\sigma_1'})}{\sigma_2}
    = \after{(\after{e}{\sigma_1})}{\sigma_2} \qedtag
  \end{align*}
\end{proof}


\subsection{Input-Failure Semantics for AIAs}
\label{sec:alts-ir}

IAs are equipped with input-failure semantics, based on the traces and underspecified inputs of the IA.
We lift this to AIAs via the $\after{}{}$-function,
using that $\bot$ indicates forbidden behaviour, and $\top$ indicates underspecified behaviour.

\begin{definition}
  \label{def:atraces-alts}
  \todo{not too happy with the prime of $s'$}
  Let $s, s' \in \alts$, and $e \in \SE(Q_s)$.
  Then we define
  \begin{align*}
    \irtraces_{s}(e) &\isdef \{\sigma \in L^* \mid (\after[s]{e}{\sigma}) \neq \bot\} \cup \{\sigma \overline a \in L^* \cdot \overline{I} \mid (\after[s]{e}{\sigma a}) = \top\}\\
    \irtraces(s) &\isdef \irtraces_{s}(e_s^0)\\
    s \ir s' &\ifdef \irtraces(s) \subseteq \irtraces(s')\\
    s \ireq s' &\ifdef \irtraces(s) = \irtraces(s')
  \end{align*}
\end{definition}

Compare Definition~\ref{def:ir-inclusion} and Definition~\ref{def:atraces-alts} for input-failure refinement for IAs and for AIAs.
For AIAs, refinement is defined directly over their $\irtraces$, whereas for IA, the input-failure closure of the $\irtraces$ is used for the right-hand model (and optionally for the left-hand model, according to Proposition~\ref{pro:rcl-irtraces-canonical}).
In this regard, AIAs are a more direct and natural representation of input-failure traces, since the input-failure closure is not needed.

\begin{proposition}
  \label{pro:aia-is-ir-closed}
  For $s \in \alts$, $\atraces(s)$ is input-failure closed.
\end{proposition}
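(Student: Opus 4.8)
The plan is to unfold the definition of input-failure closedness (Definition~\ref{def:ir-inclusion}) and reduce everything to a single ``sink'' property of the configuration $\top$ under the $\after{}{}$-function. Recalling that $\atraces(s) = \irtraces_s(e_s^0)$ from Definition~\ref{def:atraces-alts}, the goal is: whenever $\sigma\overline a \in \irtraces_s(e_s^0)$ for some $\sigma \in L^*$ and $a \in I$, then $\sigma a \rho \in \irtraces_s(e_s^0)$ for every $\rho \in \iftracesdom_{I,O}$.

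First I would record the auxiliary fact that $\after[s]{\top}{\tau} = \top$ for every $\tau \in L^*$. This is a routine induction on the length of $\tau$: the base case $\tau = \epsilon$ is immediate, and the inductive step uses $\top[\subst{\ell}] = \top$, which is part of Definition~\ref{def:substitution}. I would also note the standing fact that $\top \neq \bot$ in $\SE(Q_s)$ — otherwise the lattice, and the whole AIA semantics, degenerates — since this is what lets us conclude membership via the first clause of $\irtraces_s$.

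Next, assume $\sigma\overline a \in \irtraces_s(e_s^0)$. Since the first clause of $\irtraces_s(e_s^0)$ contributes only elements of $L^*$ and $\sigma\overline a \notin L^*$, the membership must come from the second clause; hence $\after[s]{e_s^0}{\sigma a} = \top$. Now fix $\rho \in \iftracesdom_{I,O} = L^* \cup L^* \cdot \overline I$ and split into two cases. If $\rho \in L^*$, then by Lemma~\ref{lem:after-distributes-over-traces} we have $\after[s]{e_s^0}{\sigma a \rho} = \after[s]{(\after[s]{e_s^0}{\sigma a})}{\rho} = \after[s]{\top}{\rho} = \top \neq \bot$, so $\sigma a \rho \in \irtraces_s(e_s^0)$ via the first clause. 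If instead $\rho = \rho'\overline b$ with $\rho' \in L^*$ and $b \in I$, then similarly $\after[s]{e_s^0}{\sigma a \rho' b} = \after[s]{\top}{\rho' b} = \top$, so $\sigma a \rho'\overline b \in \irtraces_s(e_s^0)$ via the second clause. In both cases $\sigma a \rho \in \irtraces_s(e_s^0)$, as required; membership in $\iftracesdom_{I,O}$ is automatic since $\sigma a \rho$ is either in $L^*$ or in $L^* \cdot \overline I$.

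I do not expect a genuine obstacle here; the argument is essentially bookkeeping. The only two points that require care are (i) observing that an input-failure trace $\sigma\overline a$ can only arise from the $\top$-clause of Definition~\ref{def:atraces-alts}, which pins down $\after[s]{e_s^0}{\sigma a} = \top$, and (ii) invoking Lemma~\ref{lem:after-distributes-over-traces} to factor the $\after{}{}$-computation through the prefix $\sigma a$ so that the $\top$-sink property can be applied. Everything else is the routine case distinction on the shape of $\rho$.
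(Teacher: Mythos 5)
Your proof is correct and follows essentially the same route as the paper's: extract $\after[s]{e_s^0}{\sigma a} = \top$ from the second clause of Definition~\ref{def:atraces-alts}, then case-split on whether $\rho \in L^*$ or $\rho = \rho'\overline b$ and use that $\top$ is preserved under further $\after{}{}$-steps. You are in fact slightly more explicit than the paper, which leaves the $\top$-sink property and the appeal to Lemma~\ref{lem:after-distributes-over-traces} implicit (and states the key fact as $(\after{s}{\sigma}) = \top$ where $(\after{s}{\sigma a}) = \top$ is meant), so no changes are needed.
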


\begin{proof}
  To prove input-refusal closedness of $\atraces(s)$, we follow Definition~\ref{def:ir-inclusion} and assume some $\sigma \in L$ and $a \in I_s$ with $\sigma \overline{a} \in \atraces(s)$, and $\rho \in \iftracesdom_{I_s,O_s}$, for which we prove $\sigma a \rho \in \atraces(s)$.
  From $\sigma \overline{a} \in \atraces(s)$ and Definition~\ref{def:atraces-alts}, we find $(\after{s}{\sigma}) = \top$.
  We now distinguish two cases:
  \begin{itemize}
    \item 
      If $\rho \in L_s^*$, then $(\after{s}{\sigma}) = \top$ implies $(\after{s}{\sigma a \rho}) = \top \neq \bot$, which implies $\sigma a \rho \in \irtraces(s)$ by Definition~\ref{def:atraces-alts}.
    \item
      If $\rho = \rho'\overline{b}$, then $(\after{s}{\sigma}) = \top$ implies $(\after{s}{\sigma a \rho' b}) = \top$, which implies $\sigma a \rho = \sigma a \rho' \overline{b} \in \atraces(s)$ by Definition~\ref{def:atraces-alts}.
  \end{itemize}
  So indeed, $\sigma a \rho \in \irtraces(s)$ holds in both cases, proving the proposition.
  \qed
\end{proof}

Another motivation to represent input-failure traces with AIAs is the connection between the distributive lattice $\mathcal{D}(Q)$ and the lattice of sets of input-failure traces: $\wedge$ and $\vee$ are connected to intersection and union of input-failure traces, respectively, and $\top$ and $\bot$ represent the largest and smallest possible input-failure trace sets.

\begin{proposition}
  \label{pro:order-isomorphism}
  
  Let $s \in \alts$, and $e,e' \in \SE(Q_s)$. Then
  \begin{enumerate}
    \item \label{pro:order-isomorphism-wedge}
      $\atraces(e \wedge e') = \atraces(e) \cap \atraces(e')$
    \item \label{pro:order-isomorphism-vee}
      $\atraces(e \vee e') = \atraces(e) \cup \atraces(e')$
    \item \label{pro:order-isomorphism-F}
      $\atraces(\bot) = \emptyset$
    \item \label{pro:order-isomorphism-T}
      $\atraces(\top) = \iftracesdom_{I,O}$
    \item \label{pro:order-isomorphism-trans}
      $\atraces(e) = \{\epsilon\} \cup \{\overline{a} \in \overline{I_s} \mid \after{e}{a} = \top \} \\\strut\hspace{26mm}\cup (\bigcup_{\ell \in L_s} \ell \cdot \atraces(\after{e}{\ell})) \text{ \qquad if $e \neq \bot$}$
  \end{enumerate}
\end{proposition}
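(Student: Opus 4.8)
The plan is to prove the five items of Proposition~\ref{pro:order-isomorphism} by exploiting the definitions of $\atraces_s(e)$ (Definition~\ref{def:atraces-alts}), the distributivity lemma for the after-function (Lemma~\ref{lem:after-distributes-over-expressions}), and the trace-concatenation lemma (Lemma~\ref{lem:after-distributes-over-traces}). Throughout, recall that $\atraces_s(e)$ is partitioned into two kinds of traces: plain traces $\sigma \in L^*$ with $(\after{e}{\sigma}) \neq \bot$, and input-failure traces $\sigma\overline a$ with $(\after{e}{\sigma a}) = \top$. Items \ref{pro:order-isomorphism-wedge} and \ref{pro:order-isomorphism-vee} are the structural core; items \ref{pro:order-isomorphism-F} and \ref{pro:order-isomorphism-T} are essentially immediate from the identity axioms of the lattice; item \ref{pro:order-isomorphism-trans} is a recursive unfolding.

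\textbf{Items 1 and 2.} Fix $\sigma \in L^*$. By Lemma~\ref{lem:after-distributes-over-expressions}, $\after{(e \wedge e')}{\sigma} = (\after{e}{\sigma}) \wedge (\after{e'}{\sigma})$, and similarly for $\vee$. For item~\ref{pro:order-isomorphism-wedge} I would use the semantic fact that in the free distributive lattice $\SE(Q_s)$, a conjunction $a \wedge b$ equals $\bot$ iff $a = \bot$ or $b = \bot$, and equals $\top$ iff $a = \top$ and $b = \top$; dually for item~\ref{pro:order-isomorphism-vee}, $a \vee b = \bot$ iff $a = b = \bot$, and $a \vee b = \top$ iff $a = \top$ or $b = \top$. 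These four facts are exactly what is needed to translate membership of plain traces and of input-failure traces through the two clauses of Definition~\ref{def:atraces-alts}. Concretely, for item~\ref{pro:order-isomorphism-wedge}: $\sigma \in \atraces(e \wedge e')$ iff $(\after{e}{\sigma}) \wedge (\after{e'}{\sigma}) \neq \bot$ iff both $(\after{e}{\sigma}) \neq \bot$ and $(\after{e'}{\sigma}) \neq \bot$ iff $\sigma \in \atraces(e) \cap \atraces(e')$; and $\sigma\overline a \in \atraces(e \wedge e')$ iff $(\after{e}{\sigma a}) \wedge (\after{e'}{\sigma a}) = \top$ iff both are $\top$ iff $\sigma\overline a \in \atraces(e) \cap \atraces(e')$. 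Item~\ref{pro:order-isomorphism-vee} is the dual argument with union in place of intersection.

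\textbf{Items 3, 4, 5.} For item~\ref{pro:order-isomorphism-F}: by the identity/idempotence axioms, $\after{\bot}{\sigma} = \bot$ for every $\sigma$ (a trivial induction using $\bot[f] = \bot$ from Definition~\ref{def:substitution}), so neither clause of Definition~\ref{def:atraces-alts} can ever be satisfied, giving $\atraces(\bot) = \emptyset$. For item~\ref{pro:order-isomorphism-T}: similarly $\after{\top}{\sigma} = \top$ for every $\sigma$, so every $\sigma \in L^*$ qualifies as a plain trace and every $\sigma\overline a \in L^* \cdot \overline I$ qualifies (since $\after{\top}{\sigma a} = \top$), hence $\atraces(\top) = L^* \cup L^* \cdot \overline I = \iftracesdom_{I,O}$. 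For item~\ref{pro:order-isomorphism-trans}, assume $e \neq \bot$ and split $\atraces(e)$ according to the length of the trace: the empty trace $\epsilon$ is always present since $\after{e}{\epsilon} = e \neq \bot$; a length-zero input-failure trace $\overline a$ is present iff $\after{e}{a} = \top$; and any trace of length $\geq 1$ has the form $\ell\rho$ with $\ell \in L_s$, where by Lemma~\ref{lem:after-distributes-over-traces} we have $\after{e}{\ell\rho} = \after{(\after{e}{\ell})}{\rho}$, so $\ell\rho \in \atraces(e)$ (as plain trace or input-failure trace) iff $\rho \in \atraces(\after{e}{\ell})$. Collecting the three cases gives the stated equation.

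\textbf{Main obstacle.} The only genuinely non-routine point is establishing the four ``$\bot$/$\top$ absorption'' facts about the free distributive lattice used in items~\ref{pro:order-isomorphism-wedge} and~\ref{pro:order-isomorphism-vee} — i.e., that $a \wedge b = \bot$ forces $a = \bot$ or $b = \bot$, and $a \vee b = \top$ forces $a = \top$ or $b = \top$ (the other two directions are immediate from the identity axioms). These are standard properties of the free bounded distributive lattice on a set of generators — $\bot$ is join-irreducible and $\top$ is meet-irreducible in such a lattice — and I would either cite this from the lattice-theory reference~\cite{Pri90} already used in Definition~\ref{def:free-distributive-lattice}, or prove it via the canonical representation of elements of $\SE(Q_s)$ as finite joins of finite meets of generators (and dually), for which $a \wedge b = \bot$ is possible only when one conjunct is the empty join $\bot$. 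Once these facts are in hand, the rest of the proposition is a direct, if slightly tedious, case analysis over the two clauses of Definition~\ref{def:atraces-alts}.
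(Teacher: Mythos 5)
Your proposal is correct and follows essentially the same route as the paper's proof: items 1--2 via Lemma~\ref{lem:after-distributes-over-expressions} together with the facts that $e_1 \wedge e_2 = \bot$ iff one conjunct is $\bot$ and $e_1 \wedge e_2 = \top$ iff both are $\top$ (dually for $\vee$), items 3--4 via $(\after{\bot}{\sigma}) = \bot$ and $(\after{\top}{\sigma}) = \top$, and item 5 by splitting traces into $\epsilon$, $\overline{a}$, and $\ell\rho$ and applying Lemma~\ref{lem:after-distributes-over-traces}. If anything, you are more careful than the paper on the one non-routine point: the paper attributes the $\bot$/$\top$ (ir)reducibility facts simply to the axioms of Definition~\ref{def:free-distributive-lattice}, whereas you rightly note that join-irreducibility of $\bot$ and meet-irreducibility of $\top$ in the free distributive lattice warrant a citation or a short normal-form argument.
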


\begin{proof}
  \begin{enumerate}
    \item
      We prove
      $\sigma \in \atraces(e \wedge e') \iff \sigma \in (\atraces(e) \cap \atraces(e'))$ for all $\sigma \in \iftracesdom_{I,O}$,
      as follows:
      
      \begin{align*}
	    & \sigma \in \atraces(e \wedge e')\\
	\iff&
	    (\sigma \in L^* \text{ and } (\after{(e \wedge e')}{\sigma}) \neq \bot)\\
	    & \;\text{ or } (\sigma = \sigma'\overline{a} \text{ and } (\after{(e \wedge e')}{\sigma'a}) = \top)
        \ttag{Definition~\ref{def:atraces-alts}}\\
	\iff&
	    (\sigma \in L^* \text{ and } ((\after{e}{\sigma}) \wedge (\after{e'}{\sigma})) \neq \bot)\\
	    & \;\text{ or } (\sigma = \sigma'\overline{a} \text{ and } ((\after{e}{\sigma'a}) \wedge (\after{e'}{\sigma'a})) = \top)
        \ttag{Lemma~\ref{lem:after-distributes-over-expressions}}\\
	\iff&
	    (\sigma \in L^* \text{ and } ((\after{e}{\sigma}) \neq \bot \text{ and } (\after{e'}{\sigma}) \neq \bot))\\
	    & \;\text{ or } (\sigma = \sigma'\overline{a} \text{ and } ((\after{e}{\sigma'a}) = \top \text{ and } (\after{e'}{\sigma'a}) = \top))
        \ttag{\hspace{0.2mm}$e_1 \wedge e_2 = \bot \text{ if and only if } e_1 = \bot \text{ or } e_2 = \bot$}\\
        \ttag{\hspace{0.2mm}$e_1 \wedge e_2 = \top \text{ if and only if } e_1 = \top \text{ and } e_2 = \top$}\\
        &\ttag{by the axioms in Definition~\ref{def:free-distributive-lattice}}\\
	\iff&
	    (\;(\sigma \in L^* \text{ and } (\after{e}{\sigma}) \neq \bot)\\
	    & \quad\text{ or } (\sigma = \sigma'\overline{a} \text{ and } (\after{e}{\sigma'a}) = \top))\\
	    &\text{and}\\
	    &(\;(\sigma \in L^* \text{ and } (\after{e'}{\sigma}) \neq \bot)\\
	    & \quad\text{ or } (\sigma = \sigma'\overline{a} \text{ and } (\after{e'}{\sigma'a}) = \top))
        \ttag{distributivity of logical conjunction and disjunction}\\
	\iff&
	    \sigma \in \atraces(e) \text{ and } \sigma \in \atraces(e') 
        \ttag{Definition~\ref{def:atraces-alts}}\\
	\iff&
	    \sigma \in \atraces(e) \cap \atraces(e') \ttag{set theory}
      \end{align*}
    \item
      Proof is analogous to the proof for $e \wedge e'$.
    \item
      As $(\after{\bot}{\sigma}) = \bot$ for any $\sigma \in L^*$, this is trivial from Definition~\ref{def:atraces-alts}.
    \item
      As $(\after{\top}{\sigma}) = \top$ for any $\sigma \in L^*$, this is trivial from Definition~\ref{def:atraces-alts}.
    \item
      Assume $e \neq \bot$. Then
      \begin{align*}
         &
	  \atraces(e) \\
        =&
	  \{\sigma \in L^* \mid (\after{e}{\sigma}) \neq \bot\} \cup \{\sigma \overline a \in L^* \cdot \overline{I} \mid (\after{e}{\sigma a}) = \top\} \ttag{Definition~\ref{def:atraces-alts}}\\
        =&
	  \{\sigma \in \{\epsilon\} \mid (\after{e}{\sigma}) \neq \bot\} \\
	 &\cup
	  \{\sigma \in \bigcup_{\ell \in L}(\ell \cdot L^*) \mid (\after{e}{\sigma}) \neq \bot\}\\
	 &\cup
	  \{\sigma \overline a \in \overline{I} \mid (\after{e}{\sigma a}) = \top\}\\
	 &\cup
	  \{\sigma \overline a \in \bigcup_{\ell \in L}(\ell \cdot L^* \cdot \overline{I}) \mid (\after{e}{\sigma a}) = \top\}
	    \ttag{$L^* = \{\epsilon\} \cup \bigcup_{\ell \in L} (\ell \cdot L^*)$ and $L^* \cdot \overline{I} = \overline{I} \cup \bigcup_{\ell \in L} (\ell \cdot L^* \cdot \overline{I})$}\\
	=&
	  \{\epsilon \mid (\after{e}{\epsilon}) \neq \bot\} \\
	 &\cup
	  \bigcup_{\ell \in L} (\ell \cdot \{\sigma \in L^* \mid (\after{e}{\ell\sigma}) \neq \bot\})\\
	 &\cup
	  \{\overline a \in \overline{I} \mid (\after{e}{a}) = \top\}\\
	 &\cup
	  \bigcup_{\ell \in L} (\ell \cdot \{\sigma \overline a \in (L^* \cdot \overline{I}) \mid (\after{e}{\ell \sigma a}) = \top\})
	    \ttag{rewriting}\\
	=&
	  \{\epsilon \mid e \neq \bot\} \cup
	  \{\overline a \in \overline{I} \mid (\after{e}{a}) = \top\}\\
	 &\cup
	  \bigcup_{\ell \in L} \ell \cdot (\{\sigma \in L^* \mid (\after{(\after{e}{\ell})}{\sigma}) \neq \bot\} \\[-3mm]
	  &\hspace{15mm}\cup \{\sigma \overline a \in L^* \cdot \overline{I} \mid (\after{(\after{e}{\ell})}{\sigma a}) = \top\})
	    \ttag{Lemma~\ref{lem:after-distributes-over-traces}, and $(\after{e}{\epsilon}) = e$}\\
	=&
	  \{\epsilon\} \cup
	  \{\overline a \in \overline{I} \mid (\after{e}{a}) = \top\}
	  \cup \bigcup_{\ell \in L} \ell \cdot (\atraces(\after{e}{\ell}))
	    \ttag{Definition~\ref{def:atraces-alts}, and the assumption $e \neq \bot$}
      \end{align*}
  \end{enumerate}
  \strut\qed
\end{proof}

Propositions~\ref{pro:order-isomorphism}.\ref{pro:order-isomorphism-F} and \ref{pro:order-isomorphism}.\ref{pro:order-isomorphism-trans} show why Definition~\ref{def:alts} does not allow transitions to $T(q,a) = \bot$ for an input $a$: in that case, $\irtraces(q)$ would contain trace $\epsilon$, but it would not contain extension $a$ nor $\overline{a}$ of $\epsilon$, meaning that after trace $\epsilon$ it is not allowed to accept nor to refuse $a$.

We can lift configurations $\top$ and $\bot$, as well as $\wedge$ and $\vee$, to the level of AIAs.
This provides the building blocks to compose specifications.
Specifications $s_\top$ and $s_\bot$ can be used to specify that any or no behaviour is considered correct, respectively.
The operators $\wedge$ and $\vee$ on specifications fulfill the same role as existing operators in substitutivity refinement~\cite{Chietal14}, and have similar properties, described in Proposition~\ref{pro:order-isomorphism}.

\begin{definition}
  Let $s_1, s_2 \in \alts$.
  Without loss of generality\footnote{
  If $Q_1$ and $Q_2$ are not disjoint, the disjoint union $Q_1 \uplus Q_2$ can be used instead of $Q_1 \cup Q_2$.
  The transition functions of $s_1 \wedge s_2$ and $s_1 \vee s_2$ should be adjusted accordingly.
  },
  assume that $Q_1$ and $Q_2$ are disjoint.
  We define 
  \begin{align*}
    s_\top &\isdef (\emptyset, I, O, \emptyset, \top)&
    s_1 \wedge s_2 &\isdef (Q_1 \cup Q_2, I, O, T_1 \cup T_2, e_1^0 \wedge e_2^0)\\
    s_\bot &\isdef (\emptyset, I, O, \emptyset, \bot)&
    s_1 \vee s_2 &\isdef (Q_1 \cup Q_2, I, O, T_1 \cup T_2, e_1^0 \vee e_2^0)
  \end{align*}
\end{definition}

\begin{proposition}
  \label{pro:order-isomorphism-aia}
  Let $i, i', s, s'\in \alts$.
  Then
  \begin{align*}
    i \ir s \text{ and } i \ir s' &\iff i \ir (s \wedge s')\\
    i \ir s \text{ or } i \ir s' &\implies i \ir (s \vee s')\\
    i \ir s \text{ and } i' \ir s &\iff (i \vee i') \ir s\\
    i \ir s \text{ or } i' \ir s &\implies (i \wedge i') \ir s\\
    i &\ir s_\top\\
    i &\not\ir s_\bot \text{ \qquad if $e_i^0 \neq \bot$}
  \end{align*}
\end{proposition}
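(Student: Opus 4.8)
The plan is to reduce each of the six lines to an elementary containment statement about sets of input-failure traces, using Proposition~\ref{pro:order-isomorphism} together with the defining equivalence $s \ir s' \iff \irtraces(s) \subseteq \irtraces(s')$ from Definition~\ref{def:atraces-alts}.

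The first thing I would do is record a preliminary fact about the composed automata. Since $Q_1$ and $Q_2$ are disjoint and the transition function of $s_1 \wedge s_2$ (and of $s_1 \vee s_2$) is $T_1 \cup T_2$, any configuration $e \in \SE(Q_1)$ satisfies $\after[s_1 \wedge s_2]{e}{\sigma} = \after[s_1]{e}{\sigma}$ for all $\sigma \in L^*$: the substitution $[T_{1,\ell} \cup T_{2,\ell}]$ acts on a term over $Q_1$ exactly as $[T_{1,\ell}]$ does and keeps the result in $\SE(Q_1)$, and the general statement follows by induction on $|\sigma|$ using Definition~\ref{def:sla-after}; symmetrically for $Q_2$, $s_2$. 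Hence $\irtraces_{s_1 \wedge s_2}(e_1^0) = \irtraces(s_1)$ and $\irtraces_{s_1 \wedge s_2}(e_2^0) = \irtraces(s_2)$, and likewise for $\vee$. Combining this with Proposition~\ref{pro:order-isomorphism}, parts~\ref{pro:order-isomorphism-wedge}--\ref{pro:order-isomorphism-T}, yields
\[
  \irtraces(s_1 \wedge s_2) = \irtraces(s_1) \cap \irtraces(s_2), \qquad
  \irtraces(s_1 \vee s_2) = \irtraces(s_1) \cup \irtraces(s_2),
\]
and $\irtraces(s_\bot) = \emptyset$, $\irtraces(s_\top) = \iftracesdom_{I,O}$.

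With these identities each line is then immediate. For line one, $i \ir (s \wedge s') \iff \irtraces(i) \subseteq \irtraces(s) \cap \irtraces(s') \iff \irtraces(i) \subseteq \irtraces(s)$ and $\irtraces(i) \subseteq \irtraces(s')$. For line two, $\irtraces(i) \subseteq \irtraces(s)$ (or $\subseteq \irtraces(s')$) entails $\irtraces(i) \subseteq \irtraces(s) \cup \irtraces(s') = \irtraces(s \vee s')$. For line three, $(i \vee i') \ir s \iff \irtraces(i) \cup \irtraces(i') \subseteq \irtraces(s) \iff \irtraces(i) \subseteq \irtraces(s)$ and $\irtraces(i') \subseteq \irtraces(s)$. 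For line four, $\irtraces(i \wedge i') = \irtraces(i) \cap \irtraces(i') \subseteq \irtraces(i)$, so $i \ir s$ gives $\irtraces(i \wedge i') \subseteq \irtraces(s)$, and symmetrically for $i'$. For line five, every $\irtraces$-set is a subset of $\iftracesdom_{I,O} = \irtraces(s_\top)$ directly from Definition~\ref{def:atraces-alts}. For the last line, if $e_i^0 \neq \bot$ then $\epsilon \in \irtraces(i)$ by Proposition~\ref{pro:order-isomorphism}.\ref{pro:order-isomorphism-trans}, so $\irtraces(i) \neq \emptyset = \irtraces(s_\bot)$ and thus $i \not\ir s_\bot$.

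I expect no genuine obstacle: the only step worth writing out rather than merely quoting is the preliminary observation that the $\after{}{}$-evaluation of a configuration over $Q_1$ inside $s_1 \wedge s_2$ never reaches the states contributed by $s_2$, which is exactly what the disjointness convention in the definition of $s_1 \wedge s_2$ and $s_1 \vee s_2$ is there to guarantee. Everything after that is a translation through Proposition~\ref{pro:order-isomorphism} followed by routine set-theoretic reasoning.
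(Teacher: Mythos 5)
Your proposal is correct and follows essentially the same route as the paper: reduce each line to a containment of $\irtraces$-sets via Definition~\ref{def:atraces-alts}, use Proposition~\ref{pro:order-isomorphism} to turn $\irtraces(s \wedge s')$ and $\irtraces(s \vee s')$ into intersections and unions, and finish with elementary set theory. Your explicit preliminary observation that, thanks to disjointness of the state sets, a configuration over $Q_1$ evaluates identically under $T_1 \cup T_2$ and under $T_1$ is a point the paper leaves implicit when it invokes Proposition~\ref{pro:order-isomorphism}, so making it explicit is a welcome (but not divergent) refinement.
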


\begin{proof}
  We prove the first statement:
  \begin{align*}
    \iff & i \ir s \text{ and } i \ir s' \\
    \iff & \irtraces(i) \subseteq \irtraces(s) \text{ and } \irtraces(i) \subseteq \irtraces(s') \ttag{Definition~\ref{def:atraces-alts}}\\
    \iff & \irtraces(i) \subseteq \irtraces(s) \cap \irtraces(s') \ttag{basic set theory}\\
    \iff & \irtraces(i) \subseteq \irtraces(s \wedge s') \ttag{Proposition~\ref{pro:order-isomorphism}}\\
    \iff & i \ir (s \wedge s') \ttag{Definition~\ref{def:atraces-alts}}
  \end{align*}
  The other statements can be derived analogously.
  \qed
\end{proof}

The converse of statement (2) does not hold: if $\irtraces(i) = \{\epsilon,x,y\}$, $\irtraces(s_1) = \{\epsilon,x\}$ and $\irtraces(s_2) = \{\epsilon,y\}$, then $i \ir s_1 \vee s_2$ holds, but  $i \not\ir s_1$ and $i \not\ir s_2$.
The converse of statement (4) can be disproven similarly.

\subsection{AIA Determinization}
\label{sec:determinization}

In case of nestings of $\wedge$ and $\vee$, the after-set $\after{s}{\sigma}$ may not be clear immediately, so a transition function producing configurations without $\wedge$ and $\vee$ is easier to interpret.
For this reason, we lift the notions of determinism and determinization from IAs~\cite{JaVaTr19} to the alternating setting.

\begin{definition}
  \label{def:alts-deterministic}
  Let $s \in \alts$ and $e \in \SE(Q_s)$.
  Then $e$ is \emph{deterministic} if $e = \top$ or $e = \bot$ or $e = \AP{q}$ for some $q \in Q_s$.
  Furthermore, $s$ is deterministic if for all $\sigma \in L^*$, configuration $\after{s}{\sigma}$ is deterministic.
\end{definition}

Compare the notions of determinism for IAs and AIAs.
For every trace $\sigma$, a deterministic IA $s$ is in a singleton state $(\after{s}{\sigma}) =\{q\}$, unless $(\after{s}{\sigma}) = \emptyset$ (that is, $\sigma$ is not a trace of $s$).
For AIAs, this singleton set $\{q\}$ is replaced by the embedding $\AP{q}$, and $\emptyset$ is replaced by $\top$ or $\bot$, depending on whether this set was reached by an undespecified action or a forbidden action.

We now define determinization, where we require the distinction between $\AP{q}$ and $q$ to avoid ambiguity.

\begin{definition}
  \label{def:alts-determinization}
  Let $s \in \alts$.
  We define $\det : \SE(Q_s) \rightarrow \SE(\SE(Q_s) \setminus \{\top,\bot\})$ as
  \begin{align*}
    \det(e) \isdef& \begin{cases}
      \top &\text{ if $e = \top$}\\
      \bot &\text{ if $e = \bot$}\\
      \AP{e} &\text{ otherwise}
    \end{cases}
  \end{align*}
  The \emph{determinization of $s$}, or $\det(s) \in \alts$, is defined as
  \begin{align*}
    \det(s) &\isdef (\SE(Q_s) \setminus \{\top,\bot\},I,O,T_{\det(s)},\det(e_s^0))\text{, with}\\
    T_{\det(s)}(e,\ell) &\isdef \det(\after[s]{e}{\ell}) \text{\qquad for $\ell \in L$}\\
  \end{align*}
\end{definition}

\begin{proposition}
  \label{def:alts-det-is-deterministic}
  For $s \in \alts$, $\det(s)$ is deterministic.
\end{proposition}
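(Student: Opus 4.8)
The plan is to show that for every trace $\sigma \in L^*$, the configuration $\after{\det(s)}{\sigma}$ is deterministic in the sense of Definition~\ref{def:alts-deterministic}, i.e. it is $\top$, $\bot$, or a single embedding $\AP{e}$ for some $e \in \SE(Q_s) \setminus \{\top,\bot\}$. The natural route is induction on the length of $\sigma$, using a slightly stronger invariant that also pins down which configuration we land in: namely, I would prove that $\after{\det(s)}{\sigma} = \det(\after{s}{\sigma})$ for all $\sigma$. Since $\det$ always produces something of the form $\top$, $\bot$, or $\AP{\cdot}$ by its very definition, establishing this equality immediately gives determinism of $\det(s)$ as a corollary.

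The base case $\sigma = \epsilon$ is immediate: $\after{\det(s)}{\epsilon} = e_{\det(s)}^0 = \det(e_s^0) = \det(\after{s}{\epsilon})$ by Definition~\ref{def:alts-determinization} and Definition~\ref{def:sla-after}. For the inductive step, write $\sigma = \sigma' \ell$ (appending on the right is convenient, using Lemma~\ref{lem:after-distributes-over-traces}), and assume $\after{\det(s)}{\sigma'} = \det(\after{s}{\sigma'})$. By the inductive hypothesis this configuration is $\top$, $\bot$, or $\AP{e}$ for $e = \after{s}{\sigma'} \notin \{\top,\bot\}$. In the first two cases, applying $[\subst{\ell}]$ in $\det(s)$ leaves $\top$ resp. $\bot$ fixed (Definition~\ref{def:substitution}), and on the $s$-side $\after{s}{\sigma}$ is also $\top$ resp. $\bot$, and $\det$ fixes these, so the equality holds. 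In the case $\AP{e}$, the substitution $T_{\subst{\ell}}^{\det(s)}$ sends $\AP{e}$ to $T_{\det(s)}(e,\ell) = \det(\after[s]{e}{\ell}) = \det(\after{s}{\sigma' \ell}) = \det(\after{s}{\sigma})$, where the middle equality uses Lemma~\ref{lem:after-distributes-over-traces}. That closes the induction.

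The one point that needs care — and what I expect to be the main obstacle — is the bookkeeping around the distinction between a state $q$ (here an element $e \in \SE(Q_s)\setminus\{\top,\bot\}$ playing the role of a ``state'' of $\det(s)$) and its embedding $\AP{e}$ in $\SE(\SE(Q_s)\setminus\{\top,\bot\})$. The substitution operator $[\subst{\ell}]$ of $\det(s)$ acts on embeddings $\AP{e}$, not on the inner lattice structure of $e$: that is precisely why $\det(s)$ behaves deterministically even though $e$ itself may be a nested $\wedge/\vee$-term. One must be careful that $\after{\det(s)}{\sigma'}$, when it is $\AP{e}$, is treated as an atom of the outer lattice, so that a single substitution step produces exactly $T_{\det(s)}(e,\ell)$ and nothing more. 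Once this is kept straight, the argument is a routine induction, and determinism of $\det(s)$ follows since $\det(\cdot)$ only ever outputs $\top$, $\bot$, or a bare embedding.
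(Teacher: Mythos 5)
Your proof is correct, and it is a mild but genuine reorganisation of the paper's argument. The paper proves the proposition by induction on $|\sigma|$ with the weaker invariant ``$\after{\det(s)}{\sigma}$ is deterministic'', case-splitting on whether that configuration is $\top$, $\bot$, or an embedding $\AP{e}$, and only later (Proposition~\ref{pro:det-after}) separately proves the identity $\after{\det(s)}{\sigma} = \det(\after{s}{\sigma})$ by what is essentially the same induction again. You instead strengthen the inductive invariant to that identity from the start and read off determinism as an immediate corollary, since $\det(\cdot)$ only ever produces $\top$, $\bot$, or a bare embedding. The case analysis, the use of Lemma~\ref{lem:after-distributes-over-traces} to peel off the last action, and the key step $\after[\det(s)]{\AP{e}}{\ell} = T_{\det(s)}(e,\ell) = \det(\after[s]{e}{\ell})$ are identical to the paper's; what your route buys is that Proposition~\ref{pro:det-after} comes for free rather than requiring a second induction, at the small cost of carrying the stronger hypothesis through the step. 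Your cautionary remark about treating $\AP{e}$ as an atom of the outer lattice $\SE(\SE(Q_s)\setminus\{\top,\bot\})$ is exactly the point the paper's definitions are set up to handle, and your handling of it is right.
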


\begin{proof}
  By Definition~\ref{def:alts-deterministic}, $\det(s)$ is deterministic if $\after{s}{\sigma}$ is deterministic for all $\sigma \in L^*$.
  We prove this by induction to the length of $\sigma$:
  \begin{itemize}
    \item
      For $\sigma = \epsilon$, we have $(\after{s}{\sigma}) = e_{\det(s)}^0 = \det(e_s^0)$, which is deterministic by Definition~\ref{def:alts-determinization}.
    \item
      For $\sigma = \sigma'\ell$, assume as induction hypothesis that $\after{s}{\sigma'}$ is deterministic.
      We distinguish two cases:
      \begin{itemize}
        \item
        If $\after{\det(s)}{\sigma'}$ is $\top$ or $\bot$, then $\after{\det(s)}{\sigma}$ is respectively $\top$ or $\bot$ as well, by Definition~\ref{def:sla-after}, and in particular it is deterministic.
        \item
        If $(\after{\det(s)}{\sigma'}) = \AP{e}$ for $e \in \SE(Q_s) \setminus \{\top,\bot\}$, then 
        \begin{align*}
          & (\after{\det(s)}{\sigma})\\
          =& \after{(\after{\det(s)}{\sigma'})}{\ell} \ttag{Lemma~\ref{lem:after-distributes-over-traces}}\\
          =& \after[\det(s)]{\AP{e}}{\ell} \ttag{case assumption}\\
          =& \AP{e}[{T_{\det(s)}}_\ell] \ttag{Definition~\ref{def:sla-after}}\\
          =& T_{\det(s)}(e,\ell) \ttag{Definition~\ref{def:substitution}}\\
          =& \det(\after[s]{e}{\ell}) \ttag{Definition~\ref{def:alts-determinization}}
        \end{align*}
        which is deterministic.
      \end{itemize}
  \end{itemize}
  Thus, $\after{\det(s)}{\sigma}$ is always deterministic, so $\det(s)$ is deterministic as well, by Definition~\ref{def:alts-deterministic} of determinism.
  \qed
\end{proof}

\begin{example}
  \label{exa:determinization}
  Figure~\ref{fig:example-determinizations} shows (the reachable part of) the determinizations of $s_A$ and $s_B$ from Figure~\ref{fig:alts-example}.
  In $\det(s_A)$, state $q_A^0 \wedge q_A^2$ has no outgoing !x-transition.
  This expresses $T_{\det(s_A)}(q_A^0 \wedge q_A^2, \text{!x}) = \bot$, which is because $q_A^2$ has no $x$-transition, $T_A(q_A^0, \text{!x}) = \bot$.
  In contrast, state $q_A^0 \wedge q_A^2$ has an outgoing ?a-transition, $T_{\det(s_A)}(q_A^0 \wedge q_A^2, \text{?a}) \neq \top$, because $q_A^0$ has an ?a-transition, $T_A(q_A^0, \text{?a}) \neq \top$.
  
  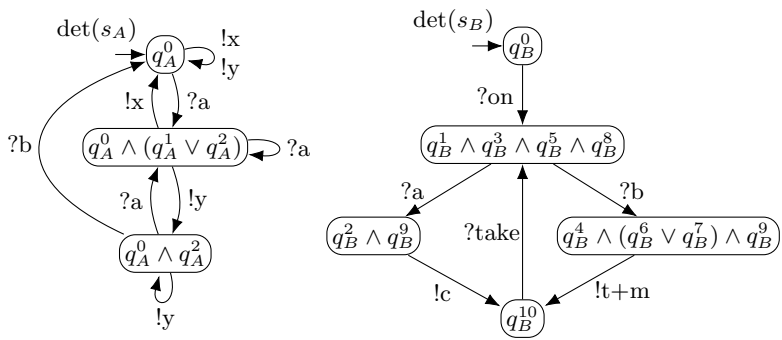
\begin{figure}
    \begin{center}
      \tikzlts
      \begin{tikzpicture}[node distance=20mm]
        \node [recstate] [initial,initial text=] (0) {${q_A^0}$};
        \node (init-text) [above left=-2mm and 0 of 0] {$\det(s_A)$};
        \node [recstate] (0_12) [below=7.2mm of 0] {${q_A^0} \wedge ({q_A^1} \vee {q_A^2})$};
        \node [recstate] (0_2) [below=9mm of 0_12] {${q_A^0} \wedge {q_A^2}$};
        \path
    (0) edge [bend left=20] node [right] {?a} (0_12)
    (0) edge [loop right] node [align=left] {!x\\!y} (0)
    (0_12) edge [loop right,in=-5,out=5,looseness=7] node {?a} (0_12)
    (0_12) edge [bend left=20] node [left] {!x} (0)
    (0_12) edge [bend left=20] node [right] {!y} (0_2)
    (0_2) edge [bend left=20] node [left] {?a} (0_12)
    (0_2) edge [loop below] node {!y} (0_2)
    (0_2) edge [in=200,out=155,looseness=2] node [left] {?b} (0)
        ;
      \end{tikzpicture}
      \begin{tikzpicture}[->,node distance=18mm]
        \tikzlts
        \node[initial left,recstate] (init) {${q_B^0}$};
        \node (init-text) [above left=-2mm and 0 of init] {$\det(s_B)$};
        \node[recstate] (on) [below=8mm of init] {${q_B^{1}} \wedge {q_B^3} \wedge {q_B^5} \wedge {q_B^8}$};
        \node[recstate] (A) [below left=7mm and 0mm of on] {${q_B^{2}} \wedge {q_B^9}$};
        \node[recstate] (B) [below right=7mm and -9mm of on, align=left] {${q_B^4} \wedge ({q_B^6} \vee {q_B^7}) \wedge q_B^9$};
        \node[recstate] (take) [below=18mm of on] {${q_B^{10}}$};
        \path
        (init) edge node [left] {?on} (on)
        (on) edge node [left=2mm] {?a} (A)
        (on) edge node [right=2mm] {?b} (B)
        (A) edge node [below left=-1mm] {!c} (take)
        (B) edge node [below right=-1mm and -0.5mm] {!t+m} (take)
        (take) edge node [left=-0.8mm] {?take} (on)
        ;
      \end{tikzpicture}
      \caption{
      Examples of determinization.
      }
      \label{fig:example-determinizations}
    \end{center}
  \end{figure}
\end{example}

Example~\ref{exa:determinization} shows that an input is specified by a conjunction of states in the determinization if \emph{any} of the individual state specify this input, whereas an output is allowed by a conjunction of states only if \emph{all} of the individual state allow this output.
In the setting of IA, \cite{JaVaTr19} already established that this works in a reversed way for non-determinism, following their definition of determinization: \emph{all} individual states of a disjunction should specify an input to specify it in the determinization, and \emph{any} individual state should allow an output to allow it in the determinization.
Their so-called \emph{input-universal determinization} is an instance of the determinization from Definition~\ref{def:alts-determinization}, using only disjunctions.

This duality arises from Definition~\ref{def:sla-after} of $\after{}{}$, since the determinization directly represents the $\after{}{}$-function: the determinizations in Example~\ref{exa:determinization} correspond to the $\after{}{}$-sets such as those derived in Example~\ref{exa:after-set}.
This correspondence is formalized in Proposition~\ref{pro:det-after}.
  
\begin{proposition}
  \label{pro:det-after}
  Let $s \in \alts$ and $\sigma \in L^*$.
  Then
  \[(\after{\det(s)}{\sigma}) = \det(\after{s}{\sigma}).\]
\end{proposition}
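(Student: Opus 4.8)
The claim is that determinization commutes with the after-function: $(\after{\det(s)}{\sigma}) = \det(\after{s}{\sigma})$ for all $\sigma \in L^*$. The natural approach is induction on the length of $\sigma$, mirroring the structure of the proof of Proposition~\ref{def:alts-det-is-deterministic}, which already did most of the required case analysis. First I would handle the base case $\sigma = \epsilon$: here $(\after{\det(s)}{\epsilon}) = e_{\det(s)}^0 = \det(e_s^0) = \det(\after{s}{\epsilon})$ directly by Definition~\ref{def:sla-after} and Definition~\ref{def:alts-determinization}.

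For the inductive step, write $\sigma = \sigma'\ell$ and assume $(\after{\det(s)}{\sigma'}) = \det(\after{s}{\sigma'})$ as induction hypothesis. Using Lemma~\ref{lem:after-distributes-over-traces}, I would rewrite $(\after{\det(s)}{\sigma}) = \after[\det(s)]{(\after{\det(s)}{\sigma'})}{\ell} = \after[\det(s)]{\det(\after{s}{\sigma'})}{\ell}$. Now I split into cases according to whether $\det(\after{s}{\sigma'})$ is $\top$, $\bot$, or $\AP{e}$ for some $e = \after{s}{\sigma'} \in \SE(Q_s) \setminus \{\top,\bot\}$. In the first two cases, applying $\subst{\ell}$ keeps $\top$ as $\top$ and $\bot$ as $\bot$ (Definition~\ref{def:substitution}), and on the right-hand side $\det(\after{s}{\sigma}) = \det(\after[s]{(\after{s}{\sigma'})}{\ell}) = \det(\after{\top}{\ell})$ (resp.\ $\bot$) also evaluates to $\top$ (resp.\ $\bot$) by Definition~\ref{def:alts-determinization}, so both sides agree. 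The interesting case is $\AP{e}$: then $\after[\det(s)]{\AP{e}}{\ell} = \AP{e}[{T_{\det(s)}}_\ell] = T_{\det(s)}(e,\ell) = \det(\after[s]{e}{\ell})$ by Definitions~\ref{def:sla-after}, \ref{def:substitution} and~\ref{def:alts-determinization}; and the right-hand side is $\det(\after{s}{\sigma'\ell}) = \det(\after[s]{(\after{s}{\sigma'})}{\ell}) = \det(\after[s]{e}{\ell})$ by Lemma~\ref{lem:after-distributes-over-traces}. So both sides coincide.

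The only subtle point — and the step I would be most careful about — is the case split in the inductive step: it relies on the fact that $\det(x)$ is always one of $\top$, $\bot$, or a single embedding $\AP{x}$, and that in the embedding case the value $x$ being embedded is exactly $\after{s}{\sigma'}$, so that $T_{\det(s)}$ unfolds back to $\det$ applied to the $s$-after of that value. Everything else is a routine chase through the definitions, with Lemma~\ref{lem:after-distributes-over-traces} doing the work of peeling off the last action on both the $\det(s)$ side and the $s$ side. No new ideas beyond those already used in the proof of Proposition~\ref{def:alts-det-is-deterministic} should be needed; this proposition is essentially a strengthening of that one, recording not just that $\after{\det(s)}{\sigma}$ is deterministic but precisely which deterministic configuration it is.
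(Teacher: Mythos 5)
Your proof is correct and follows essentially the same route as the paper: induction on the length of $\sigma$, using Lemma~\ref{lem:after-distributes-over-traces} to peel off the last action and a case split on whether $\after{s}{\sigma'}$ is $\top$, $\bot$, or an embedded configuration, with the same definitional unfolding of $T_{\det(s)}$ in the embedding case. The only (inessential) difference is that the paper first proves the statement generalized to an arbitrary configuration $e$ in place of $e_s^0$ and then specializes, whereas you induct directly on the statement for the initial configuration; since the induction hypothesis is only ever needed for the same starting configuration, both versions go through.
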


\begin{proof}
  For $e \in \SE(Q_s)$, we first prove 
  \[(\after[\det(s)]{\det(e)}{\sigma}) = \det(\after[s]{e}{\sigma})\]
  by induction on the length of $\sigma$.
  \begin{itemize}
    \item
      If $\sigma = \epsilon$, then 
      $(\after[\det(s)]{\det(e)}{\sigma})
        = \det(e)
        = \det(\after[s]{e}{\sigma})$,
      by Definition~\ref{def:sla-after}.
    \item
      If $\sigma = \sigma'\ell$, assume as induction hypothesis $H_1$ that the lemma holds for $\sigma'$.
      Then we distinguish the following cases:
      \begin{itemize}
        \item
          If $(\after[s]{e}{\sigma'}) = \top$, then:
          \begin{align*}
             & (\after[\det(s)]{\det(e)}{\sigma})\\
            =& \after[\det(s)]{(\after[\det(s)]{\det(e)}{\sigma'})}{\ell} \ttag{Definition~\ref{def:sla-after} and $\sigma=\sigma'\ell$}\\
            =& \after[\det(s)]{(\det(\after[s]{e}{\sigma'}))}{\ell} \ttag{$H_1$}\\
            =& \after[\det(s)]{\det(\top)}{\ell} \ttag{case assumption}\\
            =& \after[\det(s)]{\top}{\ell} \ttag{Definition~\ref{def:alts-determinization} of $\det$}\\
            =& \top \ttag{Definition~\ref{def:sla-after}}\\
            =& \det(\top) \ttag{Definition~\ref{def:alts-determinization} of $\det$}\\
            =& \det(\after[s]{\top}{\ell}) \ttag{Definition~\ref{def:sla-after}}\\
            =& \det(\after[s]{(\after[s]{e}{\sigma'})}{\ell}) \ttag{case assumption}\\
            =& \det(\after[s]{e}{\sigma}) \ttag{Definition~\ref{def:sla-after} and $\sigma = \sigma'\ell$}
          \end{align*}
        \item If $(\after[s]{e}{\sigma'}) = \bot$, then the lemma holds analogously to the previous case.
        \item If $(\after[s]{e}{\sigma'}) \neq \top$ and $(\after[s]{e}{\sigma'}) \neq \bot$, then:
        \begin{align*}
      & \after[\det(s)]{\det(e)}{\sigma} \\
    =& \after[\det(s)]{(\after[\det(s)]{\det(e)}{\sigma'})}{\ell} \ttag{Definition~\ref{def:sla-after} and $\sigma = \sigma'\ell$}\\
    =& \after[\det(s)]{\det(\after[s]{e}{\sigma'})}{\ell} \ttag{$H_1$}\\
    =& \after[\det(s)]{\AP{\after[s]{e}{\sigma'}}}{\ell} \ttag{Definition~\ref{def:alts-determinization} of $\det$ and case assumption}\\
    =& T_{\det(s)}(\after[s]{e}{\sigma'}, \ell) \ttag{Definition~\ref{def:sla-after}}\\
    =& \det(\after[s]{(\after[s]{e}{\sigma'})}{\ell}) \ttag{Definition~\ref{def:alts-determinization} of $T_{\det(s)}$}\\
    =& \det(\after[s]{e}{\sigma}) \ttag{Definition~\ref{def:sla-after} and $\sigma = \sigma'\ell$}
  \end{align*}
      \end{itemize}
  \end{itemize}
  The proposition then follows:
  \begin{align*}
     & \after{\det(s)}{\sigma}\\
    =& \after{e_{\det(s)}^0}{\sigma} \ttag{Definition~\ref{def:sla-after}}\\
    =& \after{\det(e_s^0)}{\sigma} \ttag{Definition~\ref{def:alts-determinization}}\\
    =& \det(\after{e_s^0}{\sigma}) \qedtag
  \end{align*}
\end{proof}

\begin{proposition}
  \label{pro:determinization-preserves-irtraces}
  Let $s \in \alts$. Then
  $\atraces(s) = \atraces(\det(s))$.
\end{proposition}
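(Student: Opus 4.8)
The plan is to reduce the statement to Proposition~\ref{pro:det-after}, which already relates the $\after{}{}$-functions of $s$ and $\det(s)$, together with the elementary observation that the operator $\det$ of Definition~\ref{def:alts-determinization} not only preserves but also \emph{reflects} the two extremal values $\top$ and $\bot$.

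First I would unfold Definition~\ref{def:atraces-alts} for both automata, writing
\[\atraces(\det(s)) = \{\sigma \in L^* \mid (\after{\det(s)}{\sigma}) \neq \bot\} \cup \{\sigma\overline a \in L^* \cdot \overline I \mid (\after{\det(s)}{\sigma a}) = \top\}\]
and likewise for $\atraces(s)$ in terms of $\after{s}{}$. By Proposition~\ref{pro:det-after}, $(\after{\det(s)}{\rho}) = \det(\after{s}{\rho})$ for every $\rho \in L^*$, so the whole statement follows once we show, for all $e \in \SE(Q_s)$,
\[\det(e) = \bot \iff e = \bot \qquad\text{and}\qquad \det(e) = \top \iff e = \top .\]
The right-to-left implications are immediate from the definition of $\det$. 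For the converses, the point is that the third (``otherwise'') case of the definition produces an embedding $\AP{e} \in \SE(\SE(Q_s)\setminus\{\top,\bot\})$, and an embedding of a generator into a free distributive lattice over a nonempty generating set is neither $\top$ nor $\bot$; concretely, the universal property of $\SE(-)$ yields lattice homomorphisms into the two-element lattice induced by the two constant valuations, which separate any generator from $\top$ and from $\bot$. This argument needs $\SE(Q_s)\setminus\{\top,\bot\}$ to be nonempty, which holds exactly when $Q_s \neq \emptyset$ (take $\AP q$ for any $q \in Q_s$). The case $Q_s = \emptyset$ is degenerate: then $\SE(Q_s) = \{\top,\bot\}$, so $e_s^0 \in \{\top,\bot\}$ and $\det(s)$ agrees with $s$ up to the trivial renaming of states, and the biconditionals hold vacuously.

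Combining these facts: for $\sigma \in L^*$ we get $\sigma \in \atraces(\det(s))$ iff $\det(\after{s}{\sigma}) \neq \bot$ iff $\after{s}{\sigma} \neq \bot$ iff $\sigma \in \atraces(s)$; and for $\sigma\overline a \in L^* \cdot \overline I$ we get $\sigma\overline a \in \atraces(\det(s))$ iff $\det(\after{s}{\sigma a}) = \top$ iff $\after{s}{\sigma a} = \top$ iff $\sigma\overline a \in \atraces(s)$. Hence the two $\irtraces$-sets coincide. I do not expect a genuine obstacle: the only point needing care is that $\det$ reflects (not merely preserves) $\top$ and $\bot$, i.e. that the fresh embeddings $\AP e$ are non-degenerate in the new free distributive lattice, together with dispatching the empty-state corner case separately.
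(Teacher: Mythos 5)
Your proposal is correct and follows essentially the same route as the paper: the paper's proof likewise reduces the statement to Proposition~\ref{pro:det-after} together with the observation that $(\after{s}{\sigma})$ being $\top$ or $\bot$ is both preserved and reflected by $\det$, so that membership in $\irtraces$ is unchanged. Your extra care in justifying that an embedding $\AP{e}$ is neither $\top$ nor $\bot$ in the free distributive lattice (and the empty-state corner case) merely spells out what the paper takes for granted.
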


\begin{proof}
  In Definition~\ref{def:atraces-alts}, we observe that whether a trace $\sigma \in \mathcal{AT}_{I_s,O_s}$ is an alternating trace of $s$ depends only on whether $(\after{s}{\sigma})$ is $\top$ or $\bot$.
  From Lemma~\ref{pro:det-after}, we know that this property is preserved by determinization:   \begin{align*}
    (\after{s}{\sigma}) = \top &\iff (\after{\det(s)}{\sigma}) = \top\\
    (\after{s}{\sigma}) = \bot &\iff (\after{\det(s)}{\sigma}) = \bot
  \end{align*}
  This implies that the alternating traces are also preserved.
  \qed
\end{proof}

\begin{corollary}
  Let $s \in \alts$. Then $s \ireq \det(s)$.
\end{corollary}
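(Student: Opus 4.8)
The plan is to observe that this corollary is a one-line consequence of Proposition~\ref{pro:determinization-preserves-irtraces}, which already delivers the equality $\atraces(s) = \atraces(\det(s))$ on the nose, as sets of input-failure traces. Since Definition~\ref{def:atraces-alts} defines input-failure equivalence on AIAs \emph{directly} as equality of these trace sets --- $s \ireq s' \ifdefsmall \irtraces(s) = \irtraces(s')$, with no input-failure closure needed, unlike the IA case --- it suffices to instantiate $s' := \det(s)$ and quote the equality. So the entire proof is: by Proposition~\ref{pro:determinization-preserves-irtraces}, $\irtraces(s) = \irtraces(\det(s))$; hence $s \ireq \det(s)$ by Definition~\ref{def:atraces-alts}.

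If one preferred a more self-contained argument that does not cite Proposition~\ref{pro:determinization-preserves-irtraces} as a black box, the route would be to first recall from Definition~\ref{def:atraces-alts} that membership of a trace $\sigma$ (or of an input failure $\sigma\overline a$) in $\irtraces(s)$ depends only on whether the configuration $\after{s}{\sigma}$ (respectively $\after{s}{\sigma a}$) equals $\bot$ or equals $\top$; then invoke Proposition~\ref{pro:det-after}, $(\after{\det(s)}{\rho}) = \det(\after{s}{\rho})$, together with the fact that $\det(e) = \top$ iff $e = \top$ and $\det(e) = \bot$ iff $e = \bot$ (immediate from Definition~\ref{def:alts-determinization}). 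Combining these, $\after{s}{\rho}$ and $\after{\det(s)}{\rho}$ agree on being $\top$ or on being $\bot$ for every $\rho \in L^*$, so the two trace sets coincide --- which is exactly the argument already given for Proposition~\ref{pro:determinization-preserves-irtraces}, and that is why quoting it is the cleaner option.

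There is essentially no obstacle here: all the genuine content lives in Propositions~\ref{pro:det-after} and~\ref{pro:determinization-preserves-irtraces}, and the corollary merely re-packages $\ireq$ via $\irtraces$. The only point to keep straight is not to conflate the AIA notion of $\ireq$ (plain trace-set equality) with the IA notion, which additionally passes through the input-failure closure $\rcl$; by Proposition~\ref{pro:aia-is-ir-closed} the $\irtraces$ of an AIA is already input-failure closed, so the two notions do line up, but for this corollary only the AIA definition from Definition~\ref{def:atraces-alts} is needed.
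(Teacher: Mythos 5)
Your proposal is correct and matches the paper's (implicit) argument exactly: the corollary is stated as an immediate consequence of Proposition~\ref{pro:determinization-preserves-irtraces} together with the definition of $\ireq$ on AIAs as plain equality of $\irtraces$ sets. Your remark distinguishing the AIA notion of equivalence from the IA notion (which goes through $\rcl$) is accurate but not needed for the step itself.
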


A known result~\cite{ChKoSt81} is that alternating automata are exponentially more succinct than non-deterministic automata, and double exponentially more succinct than deterministic automata.
Although alternating automata are not a special case of AIAs (as AIAs lack the accepting and non-accepting states of alternating automata), we expect AIAs to be exponentially more succinct than IAs, as well.
\subsection{Connections between IAs and AIAs}
\label{sec:alts-lts}

IAs and AIAs are used to represent sets of input-failure traces, and are in that sense interchangeable.
First, we show that any IA can be translated to an AIA.

\begin{definition}
 \label{def:alts-induced}
 For $s \in \lts$, the \emph{AIA induced by $s$} is defined as $\altsop(s) \isdef (Q_s,I_s, O_s,T,\bigvee Q_s^0) \in \alts$, where for all $q \in Q_s$ and $\ell \in L$:
 \[T(q,\ell) \isdef \begin{cases}
        \top &\text{ if $\ell \in I$ and $q \notrel{\xrightarrow{\ell}}$}\\
        \bigvee\, T_s(q,\ell) &\text{ otherwise}
        \end{cases}
 \]
\end{definition}

\begin{longversion}
  Translating an IA to an AIA should preserve input-failure traces. 
  In order to prove this in Proposition~\ref{pro:atraces-lts-alts}, we need some auxiliary lemmas.
\end{longversion}

\begin{lemma}
  \label{lem:aia-induced-transition}
  Let $s \in \lts$, $\ell \in L$ and let $Q \subseteq Q_s$ be finite.
  Then
  \[
  (\bigvee Q)[{T_{\altsop(s)}}_\ell] = 
    \begin{cases}
      \top & \text{if $\ell \in I \setminus \inp_{s}(Q)$}\\
      \bigvee \{q \in Q_s \mid \exists q' \in Q: q' \xrightarrow{\ell}_s q\} & \text{otherwise}
    \end{cases}
  \]
\end{lemma}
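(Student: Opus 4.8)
The plan is to unfold the definition of substitution on the disjunction $\bigvee Q$ and reduce to the single-state case $\AP{q}[{T_{\altsop(s)}}_\ell] = T_{\altsop(s)}(q,\ell)$, then apply Definition~\ref{def:alts-induced} of the induced AIA. First I would observe that $[{T_{\altsop(s)}}_\ell]$ distributes over $\vee$ (Definition~\ref{def:substitution}), so $(\bigvee Q)[{T_{\altsop(s)}}_\ell] = \bigvee_{q \in Q} T_{\altsop(s)}(q,\ell)$, where I should handle the edge case $Q = \emptyset$ separately (then both sides are $\bot$, matching the ``otherwise'' branch since $\inp_s(\emptyset) = I$ — wait, actually $\emptyset \subseteq Q_s$ gives $\inp_s(\emptyset) = I$ by the universal quantifier, so $\ell \in I \setminus \inp_s(\emptyset)$ is impossible and we are in the ``otherwise'' case, giving $\bigvee\emptyset = \bot$; consistent).

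Next I would do a case split matching the statement. Case $\ell \in I \setminus \inp_s(Q)$: by definition of $\inp_s$, there exists $q' \in Q$ with $q' \not\xrightarrow{\ell}_s$, i.e. $T_s(q',\ell) = \emptyset$, so $T_{\altsop(s)}(q',\ell) = \top$ by the first branch of Definition~\ref{def:alts-induced}. Since $\top$ is absorbing for $\vee$ (the Identity axiom $e \vee \top = \top$), the whole disjunction collapses to $\top$, giving the first branch. Case $\ell \notin I \setminus \inp_s(Q)$: this means either $\ell \in O$, or $\ell \in I$ and $\ell \in \inp_s(Q)$, i.e. every $q \in Q$ has $q \xrightarrow{\ell}_s$. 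In the output subcase and in this input subcase alike, for each $q \in Q$ we have $T_{\altsop(s)}(q,\ell) = \bigvee T_s(q,\ell)$ (the ``otherwise'' branch; note in the input subcase $q \xrightarrow{\ell}_s$ precisely excludes the first branch). Then $(\bigvee Q)[{T_{\altsop(s)}}_\ell] = \bigvee_{q \in Q} \bigvee T_s(q,\ell) = \bigvee \bigcup_{q \in Q} T_s(q,\ell)$ by associativity, commutativity and idempotence of $\vee$, and $\bigcup_{q \in Q} T_s(q,\ell) = \{q'' \in Q_s \mid \exists q' \in Q : q' \xrightarrow{\ell}_s q''\}$ by definition of the one-step transition relation, yielding the second branch.

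I expect the main obstacle to be purely bookkeeping rather than conceptual: being careful that the case distinction in the lemma (``$\ell \in I \setminus \inp_s(Q)$'' versus otherwise) lines up exactly with the case distinction in Definition~\ref{def:alts-induced} (``$\ell \in I$ and $q \not\xrightarrow{\ell}$'' versus otherwise), which differ because the lemma quantifies existentially over $q \in Q$ while the definition is stated per-state. The reconciliation is that a single witnessing state $q'$ with no $\ell$-transition suffices to push the disjunction to $\top$, while if no such witness exists then the ``otherwise'' branch applies uniformly to all $q \in Q$ — this is exactly where the $\top$-absorption axiom does the work, and I would make sure to invoke it explicitly. A minor point to state clearly is that all sets involved are finite (so $\bigvee$ and $\bigcup$ are well-defined), which holds since $Q$ is assumed finite and $T_s$ is image-finite.
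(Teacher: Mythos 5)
Your proposal is correct and follows essentially the same route as the paper's proof: distribute the substitution over the disjunction, use a witnessing state with $T_{\altsop(s)}(q,\ell)=\top$ together with the identity law $e \vee \top = \top$ in the first case, and flatten the nested disjunctions via the lattice laws and Definition~\ref{def:alts-induced} in the second. Your explicit treatment of $Q=\emptyset$ and of finiteness is a minor addition the paper leaves implicit, but the argument is the same.
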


\begin{proof}
  We prove the first part of the equality by assuming $\ell \in I \setminus \inp_s(Q)$.
  \begin{align*}
        & \ell \in I \setminus \inp_s(Q)\\
    \implies& \ell \in I \wedge \exists q \in Q: \ell \not\in \inp_s(q)  
      \ttag{Definition~\ref{def:lts-definitions}}\\
    \implies& \exists q \in Q: \ell \in I \setminus \inp_s(q)
      \ttag{Definition~\ref{def:lts-definitions}}\\
    \implies& \exists q \in Q: T_{\altsop(s)}(q,\ell) = \top
      \ttag{Definition~\ref{def:alts-induced}}\\
    \implies& \exists q \in Q: q[{T_{\altsop(s)}}_\ell] = \top
      \ttag{Definition~\ref{def:substitution}}\\
    \implies& \bigvee \{q [{T_{\altsop(s)}}_\ell] \mid q \in Q\} = \top
      \ttag{identity law of Definition~\ref{def:free-distributive-lattice} (formally, by induction on the size of $Q$)}\\
    \implies& \bigvee Q [{T_{\altsop(s)}}_\ell]  = \top
      \ttag{Definition~\ref{def:substitution} (formally, by induction)}
  \end{align*}
  For the second part of the equality, assume $\ell \not\in I \setminus \inp_s(Q)$.
  Then
  \begin{align*}
    &  \bigvee \{q \in Q_s \mid \exists q' \in Q: q' \xrightarrow{\ell}_s q\}\\
    =& \bigvee \{\bigvee \{q' \in Q_s \mid q \xrightarrow{\ell} q'\} \mid q \in Q\}
      \ttag{Laws of $\vee$ in Definition~\ref{def:free-distributive-lattice}}\\
    =& \bigvee \{T_{\altsop(s)}(q,\ell) \mid q \in Q\} 
      \ttag{Definition~\ref{def:alts-induced} and assumption $\ell \not\in I \setminus \inp_s(Q)$}\\
    =& (\bigvee Q)[{T_{\altsop(s)}}_\ell] .
      \ttag{Definition~\ref{def:substitution} (formally, by induction)}
  \end{align*}
  \strut\qed
\end{proof}

\begin{lemma}
  \label{lem:input-universal-induced-aia}
  Let $s \in \lts$, $\sigma = \ell^1 \dots \ell^n \in L^*$, and $Q \subseteq Q_s$.
  If we define that $\sigma$ is called \emph{$s$-universal for $Q$} if 
  \[\forall j \in \{1 \dots n\}: \ell^j \in \inp(\after{Q}{\ell^1\dots\ell^{j-1}}) \cup O\]
  then the following holds:
  \begin{align*}
    (\text{$\sigma$ is $s$-input-universal for $Q_s^0$}) &\iff \after{\altsop(s)}{\sigma} = \bigvee (\after{s}{\sigma})\\
    (\text{$\sigma$ is not $s$-input-universal for $Q_s^0$}) &\iff \after{\altsop(s)}{\sigma} = \top
  \end{align*}
\end{lemma}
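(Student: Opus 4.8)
The plan is to strengthen the statement to arbitrary finite sets of states and prove it by induction on the length $n$ of $\sigma=\ell^1\cdots\ell^n$. Concretely, I would show: for every finite $Q\subseteq Q_s$ and every $\sigma\in L^*$, if $\sigma$ is $s$-input-universal for $Q$ then $\after[\altsop(s)]{\bigvee Q}{\sigma}=\bigvee(\after[s]{Q}{\sigma})$, and otherwise $\after[\altsop(s)]{\bigvee Q}{\sigma}=\top$. (Here ``$s$-input-universal'' is the property called ``$s$-universal'' in the statement.) The lemma is then the instance $Q:=Q_s^0$: this set is finite by Definition~\ref{def:ia}, and $\bigvee Q_s^0=e^0_{\altsop(s)}$ by Definition~\ref{def:alts-induced}, so $\after[\altsop(s)]{\bigvee Q_s^0}{\sigma}=\after{\altsop(s)}{\sigma}$ and $\after[s]{Q_s^0}{\sigma}=\after{s}{\sigma}$. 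A join of state-embeddings in $\SE(Q_s)$ is never $\top$ (the lattice homomorphism $\SE(Q_s)\to\{0,1\}$ sending every generator to $0$ sends such a join to $0\neq 1$), so the two conclusions are mutually exclusive; combined with the fact that the two hypotheses are exhaustive, the two implications of the strengthened statement yield all four implications of the lemma.

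The base case $n=0$ is immediate: $\epsilon$ is vacuously $s$-input-universal for every $Q$, and $\after[\altsop(s)]{\bigvee Q}{\epsilon}=\bigvee Q=\bigvee(\after[s]{Q}{\epsilon})$. For the step, write $\sigma=\ell\sigma'$ with $\ell=\ell^1$ and use $\after[\altsop(s)]{\bigvee Q}{\sigma}=\after[\altsop(s)]{(\bigvee Q)[{T_{\altsop(s)}}_\ell]}{\sigma'}$ from Definition~\ref{def:sla-after}. If $\ell\in I\setminus\inp_s(Q)$, then $\sigma$ is not $s$-input-universal for $Q$ (its clause for $j=1$ fails) and Lemma~\ref{lem:aia-induced-transition} gives $(\bigvee Q)[{T_{\altsop(s)}}_\ell]=\top$, hence $\after[\altsop(s)]{\bigvee Q}{\sigma}=\after[\altsop(s)]{\top}{\sigma'}=\top$ since $\top[f]=\top$. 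Otherwise $\ell\in\inp_s(Q)\cup O$, and Lemma~\ref{lem:aia-induced-transition} gives $(\bigvee Q)[{T_{\altsop(s)}}_\ell]=\bigvee Q_1$ where $Q_1:=\after[s]{Q}{\ell}$, which is finite because $T_s$ is image-finite; applying the induction hypothesis to $Q_1$ and $\sigma'$ then finishes the case, using two facts: that $\sigma$ is $s$-input-universal for $Q$ iff $\sigma'$ is $s$-input-universal for $Q_1$ (the clause for $j=1$ of $(\sigma,Q)$ holds, and the clauses for $j=2,\dots,n$ of $(\sigma,Q)$ are exactly those for $j=1,\dots,n-1$ of $(\sigma',Q_1)$, since $\after[s]{Q}{\ell^1\cdots\ell^k}=\after[s]{Q_1}{\ell^2\cdots\ell^k}$), and that $\after[s]{Q_1}{\sigma'}=\after[s]{Q}{\sigma}$ by the standard decomposition of $\after{}{}$ on IAs.

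The main obstacle is this last bookkeeping in the inductive step: correctly aligning the index shift in the definition of $s$-input-universality with the removal of the leading action $\ell$ and the accompanying change of reference set from $Q$ to $\after[s]{Q}{\ell}$, while maintaining finiteness of the reference set so that Lemma~\ref{lem:aia-induced-transition} keeps applying. The remaining ingredients — the base case, the absorption $\after[\altsop(s)]{\top}{\sigma'}=\top$, the decomposition property of $\after{}{}$ on IAs, and deriving the biconditionals from the two implications — are routine.
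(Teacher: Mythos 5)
Your proposal is correct and takes essentially the same route as the paper's proof: strengthen the claim to finite subsets $Q$, induct on the length of $\sigma$, handle a single step via Lemma~\ref{lem:aia-induced-transition} with a case split on whether $\ell \in I \setminus \inp_s(Q)$, and recover the biconditionals from the implications because a finite join of state-embeddings can never equal $\top$. Your explicit index-shift equivalence (merging the paper's two remaining subcases) and the homomorphism argument for $\bigvee(\after{s}{\sigma}) \neq \top$ only spell out details the paper leaves implicit.
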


\begin{proof}
  We first prove the statements that for finite $Q \subseteq Q_s$, 
  \begin{align*}
    (\text{$\sigma$ is $s$-input-universal for $Q$}) &\implies \after[\altsop(s)]{(\bigvee Q)}{\sigma} = \bigvee (\after[s]{Q}{\sigma})\\
    (\text{$\sigma$ is not $s$-input-universal for $Q$}) &\implies \after[\altsop(s)]{(\bigvee Q)}{\sigma} = \top
  \end{align*}
  holds, by induction on the length of $\sigma$.
  The base case $\sigma = \epsilon$ is vacuously $s$-input universal, so the implication follows directly from Definition~\ref{def:alts-induced}.
  For the inductive step, let $\sigma = \ell\sigma'$ with $\ell \in L$ and assume that the statement holds for $\sigma'$ (IH).
  First, we show that (1):
  \begin{align*}
      & \after[\altsop(s)]{(\bigvee Q)}{\sigma}\\
    =& \after[\altsop(s)]{(\bigvee Q)}{\ell\sigma'}
      \ttag{$\sigma = \ell\sigma'$}\\
    =& \after[\altsop(s)]{(\after[\altsop(s)]{(\bigvee Q)}{\ell})}{\sigma'}
      \ttag{Lemma~\ref{lem:after-distributes-over-traces}}\\
    =& \after[\altsop(s)]{(\bigvee Q)[{T_{\altsop(s)}}_\ell]}{\sigma'}
      \ttag{Definition~\ref{def:sla-after}}
  \end{align*}
  Now we distinguish two cases:
  \begin{itemize}
    \item
      If $\ell \in I \setminus \inp_{s}(Q)$, then $\sigma$ is not input-universal (2).
      This implies
      \begin{align*}
         & \after[\altsop(s)]{(\bigvee Q)}{\sigma}\\
        =& \after[\altsop(s)]{(\bigvee Q)[{T_{\altsop(s)}}_\ell]}{\sigma'}
          \ttag{Observation (1)}\\
        =& \after[\altsop(s)]{\top}{\sigma'}
          \ttag{Lemma~\ref{lem:aia-induced-transition}}\\
        =& \top \qquad(3)
      \end{align*}
      Combining observations (2) and (3), we find that the lemma holds for the inductive step.
    \item
      If $\sigma'$ is not input-universal, then $\sigma$ is also not input-universal by the definition of input-universality (4).
      By (IH), we then have that $\after[\altsop(s)]{(\bigvee Q)}{\sigma'} = \top$, and therefore also $\after[\altsop(s)]{(\bigvee Q)}{\sigma} = \top$ (5).
      Together, (4) and (5) imply that the lemma holds for the inductive step.
    \item
      If $\ell \not\in I\setminus_s(Q)$ and $\sigma'$ is input-universal, then we first observe that $\sigma'$ is also input-universal (6).
      Moreover,
      \begin{align*}
         & \after[\altsop(s)]{(\bigvee Q)}{\sigma}\\
        =& \after[\altsop(s)]{(\bigvee Q)[{T_{\altsop(s)}}_\ell]}{\sigma'}
          \ttag{Observation (1)}\\
        =& \after[\altsop(s)]{\bigvee \{q \in Q_s \mid \exists q' \in Q: q' \xrightarrow{\ell}_s q\}}{\sigma'}
          \ttag{Lemma~\ref{lem:aia-induced-transition}}\\
        =& \after[\altsop(s)]{(\bigvee \after[s]{Q}{\ell})}{\sigma'}
          \ttag{Definition~\ref{def:lts-definitions}}\\
        =& \bigvee (\after[s]{(\after[s]{Q}{\ell})}{\sigma'})
          \ttag{(IH)}\\
        =& \bigvee (\after[s]{Q}{\sigma}) \qquad(7)
      \end{align*}
      Combining (6) and (7), we again find that the statements holds for the inductive step.
  \end{itemize}
  
  To conclude the lemma, we first observe that the proven statements directly imply the statements in the lemma in one direction $(\implies)$ (8).
  The other direction can be proven by contradiction: assume that $\after{\alts(s)}{\sigma} = \top$ (9), then $\after{\alts(s)}{\sigma} \neq \bigvee(\after{s}{\sigma})$ (10) holds, since $\bigvee(\after{s}{\sigma}) \neq \top$.
  Therefore $\sigma$ cannot be $s$-input-universal for $Q_s^0$, as otherwise (8) and (10) would contradict.
  Consequently, we find that (9) implies that $\sigma$ is not input-universal for $Q_s^0$.
  Analogously, we can prove that $\after{\alts(s)}{\sigma} = \bigvee(\after{s}{\sigma})$ implies that $\sigma$ is input-universal for $Q_s^0$, proving the lemma.
  \qed
\end{proof}

\begin{proposition}
  \label{pro:atraces-lts-alts}
  Let $s \in \lts$. Then
  $\irtraces(\altsop(s)) = \rcl(\irtraces(s))$.
\end{proposition}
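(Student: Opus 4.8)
The plan is to prove the two inclusions $\irtraces(\altsop(s)) \subseteq \rcl(\irtraces(s))$ and $\rcl(\irtraces(s)) \subseteq \irtraces(\altsop(s))$ separately, in both cases working trace-by-trace and using Lemma~\ref{lem:input-universal-induced-aia} as the bridge between the $\after{}{}$-function on $\altsop(s)$ and the $\after{}{}$-function on $s$. The key observation to set up first is the dichotomy supplied by that lemma: for any $\sigma \in L^*$, either $\sigma$ is $s$-input-universal for $Q_s^0$ and then $\after{\altsop(s)}{\sigma} = \bigvee(\after{s}{\sigma})$, or $\sigma$ is not, and then $\after{\altsop(s)}{\sigma} = \top$. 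In the first case, $\bigvee(\after{s}{\sigma})$ equals $\bot$ exactly when $\after{s}{\sigma} = \emptyset$, i.e. exactly when $\sigma \notin \traces(s)$; and it is never $\top$ (it is a disjunction of embeddings $\AP{q}$, which by the lattice axioms cannot collapse to $\top$). I would record these facts as a short preliminary observation, since every case below reduces to them.

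For the inclusion $\irtraces(\altsop(s)) \subseteq \rcl(\irtraces(s))$, take $\rho \in \irtraces(\altsop(s))$. Split on whether $\rho \in L^*$ or $\rho = \sigma\overline a$ with $\sigma \in L^*$, $a \in I$. In the first case, $\after{\altsop(s)}{\rho} \neq \bot$; if $\rho$ is not $s$-input-universal for $Q_s^0$, let $\sigma$ be the shortest non-universal prefix, so $\sigma = \sigma' a$ with $a \in I$, $\sigma'$ universal, and $a \notin \inp(\after{s}{\sigma'})$ — then $\sigma'\overline a \in \irtraces(s)$, hence $\rho = \sigma' a \rho'' \in \rcl(\irtraces(s))$ by definition of the input-failure closure. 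If $\rho$ is $s$-input-universal, then $\after{\altsop(s)}{\rho} = \bigvee(\after{s}{\rho}) \neq \bot$, so $\after{s}{\rho} \neq \emptyset$, i.e. $\rho \in \traces(s) \subseteq \irtraces(s)$. The case $\rho = \sigma\overline a$ is similar: $\after{\altsop(s)}{\sigma a} = \top$ forces (by the lemma and the preliminary observation, since $\bigvee(\cdots)$ is never $\top$) that $\sigma a$ is not $s$-input-universal for $Q_s^0$; taking again the shortest non-universal prefix either yields $\sigma\overline a \in \irtraces(s)$ directly (when $\sigma$ itself is universal but $a \notin \inp(\after{s}{\sigma})$), or yields a strictly earlier input failure $\tau\overline b \in \irtraces(s)$ with $\tau b$ a prefix of $\sigma$, whence $\sigma\overline a \in \rcl(\irtraces(s))$.

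For the reverse inclusion $\rcl(\irtraces(s)) \subseteq \irtraces(\altsop(s))$, I would use that $\irtraces(\altsop(s))$ is input-failure closed (Proposition~\ref{pro:aia-is-ir-closed}), so it suffices to show $\irtraces(s) \subseteq \irtraces(\altsop(s))$. Take $\rho \in \irtraces(s)$: if $\rho \in \traces(s)$, then every prefix is $s$-input-universal for $Q_s^0$ (an input only occurs after it is enabled in all reached states — this needs a small argument, that being a trace of $s$ implies input-universality of all its prefixes), so $\after{\altsop(s)}{\rho} = \bigvee(\after{s}{\rho}) \neq \bot$, giving $\rho \in \irtraces(\altsop(s))$. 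If $\rho = \sigma\overline a$ with $a \notin \inp(\after{s}{\sigma})$ and $\sigma \in \traces(s)$, then $\sigma$ is $s$-input-universal but $\sigma a$ is not, so $\after{\altsop(s)}{\sigma a} = \top$, giving $\sigma\overline a \in \irtraces(\altsop(s))$. The main obstacle I anticipate is the bookkeeping around "shortest non-universal prefix" and the auxiliary claim that $\traces(s)$-membership is equivalent to input-universality of all prefixes together with non-emptiness of the after-sets; getting the quantifier structure of $s$-input-universality to line up cleanly with $\irtraces(s)$ (which closes outputs universally and inputs existentially in the refusal sense) is where the real care is needed, though Lemma~\ref{lem:input-universal-induced-aia} does most of the heavy lifting.
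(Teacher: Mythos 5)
Your overall strategy is sound and is essentially the paper's: everything pivots on Lemma~\ref{lem:input-universal-induced-aia}, the definition of $s$-input-universality, and the explicit form of $\rcl$. The paper runs a single chain of equivalences per case instead of two separate inclusions, and your idea of discharging $\rcl(\irtraces(s)) \subseteq \irtraces(\altsop(s))$ via Proposition~\ref{pro:aia-is-ir-closed} (closedness of $\irtraces(\altsop(s))$ plus $\irtraces(s) \subseteq \irtraces(\altsop(s))$) is a perfectly good mild variation. Your forward inclusion, including the ``shortest non-universal prefix'' bookkeeping, is correct.

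There is, however, one genuinely wrong step in the reverse inclusion: the auxiliary claim that $\rho \in \traces(s)$ implies that $\rho$ (and all its prefixes) is $s$-input-universal for $Q_s^0$, and likewise that $\sigma \in \traces(s)$ forces $\sigma$ to be universal in the refusal case. This is false for non-deterministic $s$: a trace only requires each input to be enabled in \emph{some} state of the current after-set, whereas input-universality requires it in \emph{all} of them. Concretely, take $Q_s^0 = \{q_1,q_2\}$ with $a \in I$ enabled only in $q_1$; then $a \in \traces(s)$ but $a \notin \inp(\after{s}{\epsilon})$, so $a$ is not $s$-input-universal, and indeed $\after{\altsop(s)}{a} = \top \neq \bigvee(\after{s}{a})$. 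The conclusion you need nevertheless survives, by the very dichotomy you set up at the start: for $\rho \in \traces(s)$, either $\rho$ is universal, and then $\after{\altsop(s)}{\rho} = \bigvee(\after{s}{\rho}) \neq \bot$ because $\after{s}{\rho} \neq \emptyset$, or it is not, and then $\after{\altsop(s)}{\rho} = \top \neq \bot$; either way $\rho \in \irtraces(\altsop(s))$. Similarly, for $\sigma\overline{a} \in \irtraces(s)$ the hypothesis $a \notin \inp(\after{s}{\sigma})$ already makes $\sigma a$ non-universal (the failure occurs at the last position), irrespective of whether $\sigma$ is universal, so $\after{\altsop(s)}{\sigma a} = \top$ and $\sigma\overline{a} \in \irtraces(\altsop(s))$. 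So: delete the false ``traces are input-universal'' claim and replace it by this two-case argument; as written, that step would not go through.
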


\begin{proof}
  We prove $\sigma \in \irtraces(\altsop(s)) \iff \sigma \in \rcl(\irtraces(s))$ for some $\sigma \in \iftracesdom_{I,O}$,  
  We make a case distinction:
  \begin{itemize}
    \item
      If $\sigma = \sigma'\overline{a} \in L^* \cdot \overline{I}$, then
      \begin{align*}
             &
          \sigma \in \irtraces(\alts(s))
            \\
        \iff &
          \after{\alts(s)}{\sigma'a} = \top
            \ttag{Definition~\ref{def:atraces-alts}}\\
        \iff &
          \text{$\sigma'a$ is not $s$-input-universal for $Q_s^0$}
            \ttag{Lemma~\ref{lem:input-universal-induced-aia}}\\
        \iff &
          \text{there is a decomposition $\sigma'a = \rho b \rho'$ with $b \in I$ and $\rho,\rho' \in L^*$ such }\\&\text{that $b \not \in \inp(\after{Q_s^0}{\rho})$}
          \ttag{Definition of input-universality}\\
        \iff &
          \text{there is a decomposition $\sigma'a = \rho b \rho'$ such that $\rho\overline{b} \in \irtraces(s)$}
          \ttag{Definition~\ref{def:ir-inclusion}}\\
        \iff &
          \sigma \in \rcl(\irtraces(s))
          \ttag{Definition~\ref{def:ir-inclusion}}
      \end{align*}
    \item
      If $\sigma \in L^*$, then
      \begin{align*}
             &
          \sigma \in \irtraces(\alts(s))
            \\
        \iff &
          \after{\alts(s)}{\sigma} \neq \bot
          \ttag{Definition~\ref{def:atraces-alts} and case assumption $\sigma \in L^*$}\\
        \iff &
          \after{\alts(s)}{\sigma} = \top\\
          &\text{or }(\after{\alts(s)}{\sigma} \neq \bot \text{ and } \after{\alts(s)}{\sigma} \neq \top)
            \ttag{Basic logic}\\
        \iff &
          \text{$\sigma$ is not $s$-input-universal for $Q_s^0$}\\
          &\text{or }(\after{\alts(s)}{\sigma} \neq \bot \text{ and $\sigma$ is $s$-input-universal for $Q_s^0$})
            \ttag{Lemma~\ref{lem:input-universal-induced-aia}}\\
        \iff &
          \text{$\sigma$ is not $s$-input-universal for $Q_s^0$}\\
          & \text{or } (\after{s}{\sigma} \neq \emptyset \text{ and $\sigma$ is $s$-input-universal for $Q_s^0$})
            \ttag{$\after{\alts(s)}{\sigma} = \bigvee(\after{s}{\sigma})$ by Lemma~\ref{lem:input-universal-induced-aia}, and Definition~\ref{def:free-distributive-lattice} of $\bigvee$}\\
        \iff &
          \text{there is a decomposition $\sigma = \rho a \rho'$ with $a \in I$ and $\rho,\rho' \in L^*$ such }\\&\text{that $a \not \in \inp(\after{Q_s^0}{\rho})$}
          \ttag{Definition of input-universality}\\
          & \text{or } (\sigma \in \traces(s))
            \ttag{$\after{s}{\sigma} = \emptyset \iff \sigma \in \traces(s)$, Lemma~\ref{lem:input-universal-induced-aia}}\\
          & \text{and there is no decomposition $\sigma = \rho a \rho'$ such that $a \not \in \inp(\after{Q_s^0}{\rho})$)}
            \ttag{Definition of input-universality}\\
        \iff &
          \text{there is a decomposition $\sigma = \rho a \rho'$ such that $\rho\overline{a} \in \irtraces(s)$}\\
          &\text{or } (\sigma \in \irtraces(s)
          \ttag{Definition~\ref{def:ir-inclusion} and case assumption $\sigma \in L^*$}\\
          & \text{and there is no decomposition $\sigma = \rho a \rho'$ such that $a \not \in \inp(\after{Q_s^0}{\rho})$)}\\
        \iff &
          \sigma \in \rcl(\irtraces(s))
          \ttag{Definition~\ref{def:ir-inclusion}}
      \end{align*}
  \end{itemize}
  \strut\qed
\end{proof}

\begin{corollary}
  Let $s_1, s_2 \in \lts$.
  Then
  $s_1 \ir s_2 \iff \altsop(s_1) \ir \altsop(s_2)$
\end{corollary}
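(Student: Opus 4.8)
The plan is to reduce the statement to Proposition~\ref{pro:atraces-lts-alts} together with Proposition~\ref{pro:rcl-irtraces-canonical}. Recall that $s_1 \ir s_2$ for IAs is \emph{defined} (Definition~\ref{def:ir-inclusion}) as $\irtraces(s_1) \subseteq \rcl(\irtraces(s_2))$, whereas $\altsop(s_1) \ir \altsop(s_2)$ for AIAs is defined (Definition~\ref{def:atraces-alts}) as $\irtraces(\altsop(s_1)) \subseteq \irtraces(\altsop(s_2))$. So the whole task is to connect these two inclusions through the translation $\altsop$.

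The key computation is just to substitute Proposition~\ref{pro:atraces-lts-alts}, which states $\irtraces(\altsop(s)) = \rcl(\irtraces(s))$ for every IA $s$. Applying this to both $s_1$ and $s_2$ turns $\altsop(s_1) \ir \altsop(s_2)$ into $\rcl(\irtraces(s_1)) \subseteq \rcl(\irtraces(s_2))$. By Proposition~\ref{pro:rcl-irtraces-canonical}, this closure inclusion is equivalent to $s_1 \ir s_2$. Concretely I would write the chain
\[
  \altsop(s_1) \ir \altsop(s_2)
  \iff \irtraces(\altsop(s_1)) \subseteq \irtraces(\altsop(s_2))
  \iff \rcl(\irtraces(s_1)) \subseteq \rcl(\irtraces(s_2))
  \iff s_1 \ir s_2,
\]
citing Definition~\ref{def:atraces-alts}, Proposition~\ref{pro:atraces-lts-alts}, and Proposition~\ref{pro:rcl-irtraces-canonical} for the three steps, respectively.

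There is essentially no obstacle here: all the real work was already done in Proposition~\ref{pro:atraces-lts-alts} (which in turn rests on Lemmas~\ref{lem:aia-induced-transition} and~\ref{lem:input-universal-induced-aia}). The only thing to be slightly careful about is not to confuse the two meanings of $\ir$ — on the left of each biconditional it is the AIA relation from Definition~\ref{def:atraces-alts}, on the right the IA relation from Definition~\ref{def:ir-inclusion} — and to make sure the alphabets match, which they do since $\altsop(s)$ keeps $I_s$ and $O_s$ by Definition~\ref{def:alts-induced}. So the proof is a three-line citation argument.
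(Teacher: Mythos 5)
Your proposal is correct and matches the paper's intended argument: the corollary is exactly the immediate consequence of Proposition~\ref{pro:atraces-lts-alts} obtained by rewriting $\irtraces(\altsop(s_j))$ as $\rcl(\irtraces(s_j))$ and then invoking Proposition~\ref{pro:rcl-irtraces-canonical} together with the definitions of $\ir$ on IAs and AIAs. Nothing is missing.
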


Definition~\ref{def:lts-induced} formalizes how disjunction in an AIA corresponds to non-determinism in IA.
Specifically, if no transitions are present for some output in an IA, then the transition function of the corresponding AIA gives $\bigvee \emptyset = \bot$ for this output, analogous to the explicit case $\top$ for inputs.
Note that the graphical representation of an IA and that of its induced AIA are the same.

The translation from AIAs to IAs is more involved.
For disjunctions of states $(\after{q}{\ell}) = q_1 \vee q_2$, the translation of Definition~\ref{def:alts-induced} can simply be inverted, but this is not possible for conjunctions.
As such, we represent any configuration by its unique disjunctive normal form.

\begin{definition}
 Let $e \in \mathcal{D}(Q)$.
 Then $\operatorname{DNF}(e)$ is the smallest set in $\mathcal{P}(\mathcal{P}(Q))$ such that 
 $e = \bigvee \{\bigwedge Q' \mid Q' \in \operatorname{DNF}(e)\}$.
\end{definition}

The set $\operatorname{DNF}(e)$ can be constructed by using the axioms from Definition~\ref{def:free-distributive-lattice}.

\begin{example}
To find $\operatorname{DNF}(q^1 \vee (q^2 \wedge (q^1 \vee q^3)))$, we first rewrite the expression by using distributivity, associativity, commutativity and absorbtion, as follows:
 \[
     q^1 \vee (q^2 \wedge (q^1 \vee q^3))
   = q^1 \vee (q^2 \wedge q^1) \vee (q^2 \wedge q^3)
   = q^1 \vee (q^2 \wedge q^3)
 \]
 So we find $\operatorname{DNF}(q^1 \vee (q^2 \wedge (q^1 \vee q^3))) = \{\{q^1\}, \{q^2, q^3\}\}$.
 Two other examples are $\operatorname{DNF}(\bot) = \operatorname{DNF}(\bigvee \emptyset) = \emptyset$ and $\operatorname{DNF}(\top) = \operatorname{DNF}(\bigvee \{\bigwedge \emptyset\}) = \{\emptyset\}$.
\end{example}

\begin{definition}
 \label{def:lts-induced}
 Let $s \in \alts$.
 Then the \emph{induced IA} of $s$ is defined as
 \begin{align*}
 \ltsop(s) \isdef& (\mathcal{P}(Q_s),I, O,T,\operatorname{DNF}(e_s^0)) \in \lts \text{, with for $Q \subseteq Q_s$ and $\ell \in L$:}\\
 T(Q,\ell) =&
  \begin{cases}
    \operatorname{DNF}((\bigwedge Q) [{T_s}_\ell]) \setminus \{\emptyset\} & \text{if $\ell \in I$}\\
    \operatorname{DNF}((\bigwedge Q) [{T_s}_\ell]) & \text{if $\ell \in O$}\\
  \end{cases}
 \end{align*}
\end{definition}

A state of $\ltsop(s)$ acts as the conjunction of the corresponding states in $s$.
In particular, a singleton state $\{q\}$ in $\ltsop(s)$ acts as the contained state $q$ in $s$, and state $\emptyset$ in $\ltsop(s)$ acts as a chaotic state, having $\irtraces_{\ltsop(s)}(\emptyset) = \iftracesdom_{I,O}$.

\begin{proposition}
  \label{pro:induced-ia-preserves-traces}
  Let $s \in \alts$.
  Then
  $\irtraces(s) = \rcl(\irtraces(\ltsop(s)))$.
\end{proposition}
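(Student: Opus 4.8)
The plan is to show that $\irtraces(s) = \rcl(\irtraces(\ltsop(s)))$ for $s \in \alts$, by relating the $\after{}{}$-function on configurations of $s$ to the $\after{}{}$-function on states of the induced IA $\ltsop(s)$, using the disjunctive normal form as the bridge. The key structural fact I would first establish, as a lemma, is that for any configuration $e \in \SE(Q_s)$ and trace $\sigma \in L^*$ that is "input-universal" along the way (in the sense that no input along $\sigma$ was underspecified, i.e. $\after{s}{}$ never becomes $\top$ en route), the set $\after[\ltsop(s)]{\operatorname{DNF}(e)}{\sigma}$ equals $\operatorname{DNF}(\after[s]{e}{\sigma})$, while as soon as $\after[s]{e}{}$ hits $\top$ the IA-state is $\emptyset$ (the chaotic state, with $\irtraces_{\ltsop(s)}(\emptyset) = \iftracesdom_{I,O}$), and as soon as it hits $\bot$ the corresponding IA-state set becomes empty (no transitions possible). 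This mirrors Lemma~\ref{lem:input-universal-induced-aia} but in the opposite translation direction. The proof of this lemma is by induction on $|\sigma|$, using Lemma~\ref{lem:after-distributes-over-expressions} (substitution distributes over $\vee,\wedge$, hence over DNF), Lemma~\ref{lem:after-distributes-over-traces}, and the fact that $\operatorname{DNF}((\bigwedge Q)[{T_s}_\ell])$ is exactly how $T_{\ltsop(s)}$ is defined, with the $\setminus\{\emptyset\}$ in the input case accounting for the fact that $\emptyset \in \operatorname{DNF}(e)$ iff $e$ "contains" $\top$ as a disjunct — i.e. is $\top$ itself (since $\emptyset \in \operatorname{DNF}(e)$ forces $e = \top$ by absorption).

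Given that lemma, I would prove the proposition by the same case split on $\sigma \in \iftracesdom_{I,O}$ used in Proposition~\ref{pro:atraces-lts-alts}. For $\sigma = \sigma'\overline{a} \in L^*\cdot\overline{I}$: $\sigma \in \irtraces(s)$ iff $\after[s]{e_s^0}{\sigma'a} = \top$ (Definition~\ref{def:atraces-alts}); by the lemma this happens iff $\sigma'a$ leads $\ltsop(s)$ through a $\top$-point, i.e. there is a prefix $\rho b$ of $\sigma'a$ with $b \in I$ underspecified at $\after{\ltsop(s)}{\rho}$, equivalently $\rho\overline b \in \irtraces(\ltsop(s))$, equivalently $\sigma \in \rcl(\irtraces(\ltsop(s)))$ by Definition~\ref{def:ir-inclusion}. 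For $\sigma \in L^*$: $\sigma \in \irtraces(s)$ iff $\after[s]{e_s^0}{\sigma} \neq \bot$; splitting on whether $\sigma$ passed through a $\top$ (in which case the IA is in the chaotic state $\emptyset$ and $\sigma \in \rcl(\irtraces(\ltsop(s)))$ via the input-failure closure) or stayed "proper" (in which case $\after[\ltsop(s)]{}{\sigma} = \operatorname{DNF}(\after[s]{e_s^0}{\sigma})$, which is nonempty iff $\after[s]{e_s^0}{\sigma} \neq \bot$, iff $\sigma \in \traces(\ltsop(s)) \subseteq \irtraces(\ltsop(s))$), one assembles exactly $\sigma \in \rcl(\irtraces(\ltsop(s)))$.

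I expect the main obstacle to be the bookkeeping around $\top$ and $\bot$ in the DNF translation, specifically: (i) verifying that $\operatorname{DNF}(e) = \emptyset \iff e = \bot$ and $\emptyset \in \operatorname{DNF}(e) \iff e = \top$, and that these interact correctly with the two-case definition of $T_{\ltsop(s)}$ (the $\setminus\{\emptyset\}$ for inputs is precisely what makes an underspecified input produce no IA-transition, matching $T(q,a)\neq\bot$ never being required on the AIA side but $\top$ being legal); and (ii) showing that once $\after[s]{e}{}$ reaches $\bot$ or $\top$ it is "absorbing" in the right way on the IA side — $\bot$ gives the empty state-set (a dead end, so no further traces), while $\top$ gives $\emptyset \in \operatorname{DNF}$, i.e. the chaotic state whose traces are all of $\iftracesdom_{I,O}$, matching $\rcl$. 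Getting the induction hypothesis strong enough to carry both the "proper" and the "already-chaotic" regimes simultaneously — ideally phrased as: $\after[\ltsop(s)]{\operatorname{DNF}(e)}{\sigma} = \operatorname{DNF}(\after[s]{e}{\sigma})$ whenever $\after[s]{e}{}$ does not reach $\top$ strictly before the end of $\sigma$, and $= \{\emptyset\}$-reachable otherwise — is the delicate part; everything else is routine unwinding of the definitions exactly as in the proof of Proposition~\ref{pro:atraces-lts-alts}.
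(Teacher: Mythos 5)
Your overall route---a bridging lemma between the two $\after{}{}$-functions followed by the same case split as in Proposition~\ref{pro:atraces-lts-alts}---is in the spirit of the paper, whose proof is exactly ``analogously to Proposition~\ref{pro:atraces-lts-alts}'', keyed on the identity $\rcl(\irtraces_{\ltsop(s)}(\{q_1,\dots,q_n\})) = \irtraces_s(q_1\wedge\dots\wedge q_n)$. However, your central lemma is stated as a set equality that is false: $\after[\ltsop(s)]{\operatorname{DNF}(e)}{\sigma} = \operatorname{DNF}(\after[s]{e}{\sigma})$ does not hold in general, because $T_{\ltsop(s)}$ applies $\operatorname{DNF}$ separately in each state and the after-set of the IA just unions the results, whereas $\operatorname{DNF}(\after[s]{e}{\sigma})$ is re-minimized globally (absorption). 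Concretely, take $e = \AP{q_1}\vee\AP{q_2}$, an output $x$, $T_s(q_1,x)=\AP{q_3}$ and $T_s(q_2,x)=\AP{q_3}\wedge\AP{q_4}$. Then $\after[s]{e}{x} = q_3\vee(q_3\wedge q_4) = q_3$, so the right-hand side is $\{\{q_3\}\}$, while the left-hand side is $\{\{q_3\},\{q_3,q_4\}\}$. No $\top$ or $\bot$ is involved, so your input-universality proviso does not rescue the statement, and an induction carrying this hypothesis already fails after one step.

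The gap is repairable, because you only ever use the IA after-set through (i) whether it is empty and (ii) which inputs it refuses, and both depend only on the lattice element $\bigvee\{\bigwedge Q' \mid Q'\in (\after{\ltsop(s)}{\sigma})\}$. So either weaken your lemma to the lattice-level equality of that join with $\after{s}{\sigma}$ (for $\sigma$ that has not passed through an underspecified input), or argue at the trace level as the paper does, via $\rcl(\irtraces_{\ltsop(s)}(\{q_1,\dots,q_n\})) = \irtraces_s(q_1\wedge\dots\wedge q_n)$ together with Proposition~\ref{pro:order-isomorphism}.\ref{pro:order-isomorphism-vee}; the subsumed extra states in my counterexample are then harmless, since $\irtraces_s(q_3\wedge q_4)\subseteq\irtraces_s(q_3)$, so they contribute nothing new under $\rcl$. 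One further slip to fix in your case analysis: when $\after[s]{e_s^0}{}$ first becomes $\top$ at an \emph{input}, $\ltsop(s)$ has no transition there at all (that is what $\setminus\{\emptyset\}$ does), so the IA is not ``in the chaotic state''; the continuations of $\sigma$ are recovered only through $\rcl$ applied to the input-failure trace at that point. The chaotic state $\emptyset$ is entered only when $\top$ is reached by an \emph{output} (or initially), so your induction must indeed treat these two $\top$-regimes differently rather than uniformly through the state $\emptyset$.
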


\begin{proof}
  \todo{is this too short?}
  Analogously to Proposition~ \ref{pro:atraces-lts-alts}, using that
  \[\rcl(\irtraces_{\ltsop(s)}(\{q_1, \dots, q_n\})) = \irtraces_{s}(q_1 \wedge \dots \wedge q_n) \qedtag\]
\end{proof}

\begin{corollary}
  Let $s_1, s_2 \in \alts$.
  Then
  $s_1 \ir s_2 \iff \ltsop(s_1) \ir \ltsop(s_2)$
\end{corollary}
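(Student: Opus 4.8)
The plan is to derive the corollary directly from Proposition~\ref{pro:induced-ia-preserves-traces}, which already does the substantive work, combined with the canonicity of the input-failure closure for IAs (Proposition~\ref{pro:rcl-irtraces-canonical}). The one point that needs care is that input-failure refinement is defined slightly differently for the two classes of automata: for AIAs, Definition~\ref{def:atraces-alts} sets $s_1 \ir s_2 \iff \irtraces(s_1) \subseteq \irtraces(s_2)$ with no closure, whereas for IAs, Definition~\ref{def:ir-inclusion} closes the right-hand operand. Reconciling these is exactly what Proposition~\ref{pro:rcl-irtraces-canonical} lets us do.

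Concretely, I would first note that, applying Proposition~\ref{pro:rcl-irtraces-canonical} to the IAs $\ltsop(s_1)$ and $\ltsop(s_2)$,
\[
  \ltsop(s_1) \ir \ltsop(s_2)
  \iff \rcl(\irtraces(\ltsop(s_1))) \subseteq \rcl(\irtraces(\ltsop(s_2))).
\]
Then I would rewrite each side using Proposition~\ref{pro:induced-ia-preserves-traces}, which gives $\rcl(\irtraces(\ltsop(s_i))) = \irtraces(s_i)$ for $i \in \{1,2\}$, obtaining the equivalence with $\irtraces(s_1) \subseteq \irtraces(s_2)$, which is precisely $s_1 \ir s_2$ by Definition~\ref{def:atraces-alts}. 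Chaining these steps yields the claimed biimplication; the corresponding statement for $\ireq$ would then follow by applying this in both directions.

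I do not expect any real obstacle here, since all the heavy lifting is packaged in Proposition~\ref{pro:induced-ia-preserves-traces}; the only thing to get right is invoking Proposition~\ref{pro:rcl-irtraces-canonical} rather than unfolding Definition~\ref{def:ir-inclusion} naively, so that the asymmetric closure in the IA definition is handled uniformly on both sides. As a sanity check one can also use Proposition~\ref{pro:aia-is-ir-closed} (the $\irtraces$ of an AIA are already input-failure closed, hence $\rcl(\irtraces(s_i)) = \irtraces(s_i)$), which makes the identification $\rcl(\irtraces(\ltsop(s_i))) = \irtraces(s_i)$ visibly consistent with the closed form appearing on the left. This mirrors exactly the corollary following Proposition~\ref{pro:atraces-lts-alts}, so the argument is by now a routine instance of the pattern used there.
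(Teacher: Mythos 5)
Your argument is correct and is exactly the intended derivation: the paper states this as an immediate corollary of Proposition~\ref{pro:induced-ia-preserves-traces}, and the implicit proof is precisely your chain of rewriting via Proposition~\ref{pro:rcl-irtraces-canonical} to handle the closure on the IA side and then substituting $\rcl(\irtraces(\ltsop(s_i))) = \irtraces(s_i)$. Your handling of the asymmetry between Definition~\ref{def:ir-inclusion} and Definition~\ref{def:atraces-alts} is the right (and only) point of care, so nothing is missing.
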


\section{Testing Input-Failure Refinement}
\label{sec:testers}

So far, we have introduced refinement as a way of specifying correctness of one model with respect to another.
Often, a specification is indeed a model, but we use it to ensure correctness of a real-world software implementation.
To this end, we assume that this implementation behaves like a IA.
We cannot see the actual states and transitions of this IA, but we can provide inputs to it and observe its outputs.
We assume that this IA must have an initial state, i.e. it is \emph{non-empty}.

\begin{definition}
  \label{def:ia-nonempty}
  \cite{AlHe01}
  An IA $i$ is \emph{empty} if $Q_i^0 = \emptyset$.
\end{definition}

In this section, we introduce a basis for \emph{model-based testing} with AIAs, analogously to ioco test case generation~\cite{Tre08}.
Given a specification AIA, we derive a testing experiment on non-empty implementation IAs, in order to observe whether input-failure refinement holds with respect to the specification.
This requires an extension of input-failure refinement to these domains.

\begin{definition}
  \label{def:ir-on-ia-aia}
  Let $i \in \lts$ and $s \in \alts$.
  Then 
  \[i \ir s \iff \irtraces(i) \subseteq \irtraces(s).\]
\end{definition}

\subsection{Testers for AIA Specifications}

From a given specification AIA, we derive a tester.
We model this tester as an IA as well, which can communicate with an implementation IA through a form of parallel composition.
The tester eventually concludes a verdict, indicating whether the observed behaviour is allowed.
To communicate, the inputs of the implementation must be outputs for the tester, and vice versa (note that $I$ and $O$ denote the inputs and outputs for the \emph{implementation}, respectively).
The tester should not block or ignore outputs from the implementation, meaning that the tester should be input-enabled.
If the tester intends to supply an input to the implementation, it should also be prepared for a refusal of that input.
A verdict is given by means of special states $\pass$ or $\fail$.
Lastly, to give consistent verdicts, a tester should be deterministic.
This leads to the following definition of testers.

\begin{definition}
  \label{def:tester}
  A \emph{tester for (an IA or AIA with) inputs $I$ and outputs $O$} is a deterministic, input-enabled IA $t = (Q_t,\; O,\; I \cup \overline{I},\; T,\; q_t^0)$ with $\pass,\fail \in Q_t$, such that $\pass$ and $\fail$ are sink-states with $\out(\pass) = \out(\fail) = \emptyset$, and $a \in \out(q) \iff \overline{a} \in \out(q)$ for all $q \in Q_t$ and $a \in I$.
\end{definition}

Testing is performed by a special form of parallel composition of a tester and an implementation.
If the tester chooses to perform an input while the implementation also chooses to produce an output, this results in a race condition.
In such a case, both the input or the output can occur during test execution.
We assume a synchronous setting, in which the implementation and specification agree on the order in which observed actions are performed (in contrast to e.g. a queue-based setting~\cite{PeYeHu03}, in which all possible orders are accounted for).
These assumptions are in line with the assumptions in e.g. ioco-theory~\cite{Tre08}, and lead to the following definition of test execution.

\begin{definition}
  \label{def:tester-execution}
  Let $i \in \lts$ be non-empty, and let $t$ be a tester for $i$.
  We write $q_t \te q_i$ for $(q_t, q_i) \in Q_t \times Q_i$.
  Then \emph{test execution of $i$ against $t$}, denoted $t \te i$, is defined as $(Q_t \times Q_i,\; \emptyset,\; I \cup \overline{I} \cup O,\; T,\; q_t^0 \te q_i^0) \in \lts$, with
  \begin{align*}
    T(q_t \te q_i, \ell) &= \{q_t' \te q_i' \:\mid\, q_t \xrightarrow{\ell} q_t',\; q_i \xrightarrow{\ell} q_i'\} &\text{for $\ell \in L$}\\
    T(q_t \te q_i,\; \overline{a}) &= \{q_t' \te q_i \:\mid\, q_t \xrightarrow{\overline{a}} q_t',\; q_i \not\xrightarrow{a}\} &\text{for $a \in I$}
  \end{align*}
  We say that $i \fails t$ if $q_t^0 \te q_i^0 \xrightarrow{\sigma} \fail \te q_i$ for some $\sigma$ and $q_i$, and $i \passes t$ otherwise.
\end{definition}

We reuse the notions of \emph{soundness} and \emph{exhaustiveness} from~\cite{Tre08}, to express whether a tester properly tests for a given specification.

\begin{definition}
  \label{def:soundness-exhaustiveness}
  Let $s \in \alts$ and let $t$ be a tester for $s$.
  Then $t$ is \emph{sound} for $s$ if for all $i \in \lts$ with inputs $I$ and outputs $O$, $i \fails t$ implies $i \not\ir s$.
  Moreover, $t$ is \emph{exhaustive} for $s$ if for all $i \in \lts$, $i \passes t$ implies $i \ir s$.
\end{definition}

A simple attempt to translate specification AIA $s$ to a sound and exhaustive tester would be similar to the determinization of $s$, but replacing every occurence of $\bot$ and $\top$ by $\fail$ and $\pass$, respectively.
\[f_t(e) =
  \begin{cases}
    \fail & \text{if $e = \bot$}\\
    \pass & \text{if $e = \top$}\\
    e & \text{otherwise}
  \end{cases}\]
Taking special care of input failures, the function $f_t$ then induces a tester $(\SE(Q_s) \cup \{\pass,\fail\}, O, I \cup \overline{I}, T, f_t(e_s^0))$, with
\begin{align*}
T(e,\ell) &\isdef \{f_t(\after[s]{e}{\ell})\} & \text{for $e \in \SE(Q_s), \ell \in L$}\\
T(v,\ell) &\isdef 
\begin{cases}
  \{v\} &\text{if $\ell \in O$}\\
  \emptyset  &\text{if $\ell \in I$}
\end{cases} 
&\text{for $v \in \{\pass,\fail\}$}\\
T(e,\overline{a}) &\isdef
\begin{cases}
  \{\pass\} & \text{if $(\after[s]{e}{a}) = \top$}\\
  \{\fail\} & \text{otherwise}
\end{cases}& \text{for $e \in \SE(Q_s), a \in I$}
\end{align*}
This tester is sound and complete for $s$: each possible input-failure trace  is in $\irtraces(s)$ if and only if it does not lead to $\fail$, by construction.
Here, we make use of the fact that $\irtraces(\bot) = \emptyset$, meaning that $\bot$ cannot be implemented correctly by a non-empty IA and can thus be replaced by $\fail$.
Likewise, $\irtraces(\top) = \iftracesdom_{I,O}$ means that $\top$ is always implemented correctly, and can be replaced by $\pass$.

However, this tester is quite inefficient.
If a tester reaches $\pass$ after both $\sigma a$ and $\sigma\overline{a}$, then this input $a$ does not need to be tested after $\sigma$.
Specifically, this is the case if and only if trace $\sigma a$ leads to specification configuration $\top$.
We thus improve the tester for a given specifications as follows.

\begin{definition}
  \label{def:tester-induced}
  Let $s \in \alts$.
  Then $\tester(s) \in \lts$ is defined as
  \[
    \tester(s) \isdef (\SE(Q_s) \cup \{\pass,\fail\}, O, I \cup \overline{I}, T, f_t(e_s^0)), \text{ with $f_t$ as before, and}
  \]
  \vspace{-6mm}
  \begin{align*}
    T(e,\ell) &\isdef 
    \begin{cases}
      \{f_t(\after[s]{e}{\ell})\} & \text{if $\ell \in O$, or $\ell \in I$ and $(\after[s]{e}{\ell}) \neq \top$}\\
      \emptyset & \text{if $\ell \in I$ and $(\after[s]{e}{\ell}) = \top$}
    \end{cases} &\text{\hspace{-1mm}for $\ell \in L$}\\
    T(e,\overline{a}) &\isdef
    \begin{cases}
      \emptyset & \text{if $(\after[s]{e}{a}) = \top$}\\
      \{\fail\} & \text{otherwise}
    \end{cases} &\text{\hspace{-38mm}for $e \in \SE(Q_s), a \in I$}\\
    T(v,\ell) &\isdef 
    \begin{cases}
      \{v\} &\text{if $\ell \in O$}\\
      \emptyset  &\text{if $\ell \in I$}
    \end{cases} & \text{\hspace{-38mm}for $v \in \{\pass,\fail\}, \ell \in L$}\\
%
  \end{align*}
\end{definition}

\begin{longversion}
  As expected, this tester tests precisely for input-failure refinement.
\end{longversion}

\begin{lemma}
  \label{lem:failing-is-not-refining}
  For $i \in \lts$ and $s \in \alts$, \;
  $i \fails \tester(s) \iff i \not\ir s$.
\end{lemma}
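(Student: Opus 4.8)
The plan is to reduce the claim to one structural observation about $\tester(s)$. Since $\tester(s)$ is deterministic, along a trace $\sigma\in L^*$ it either follows $\sigma$ into the state $\after{s}{\sigma}$ — which happens precisely when no prefix of $\sigma$ (including $\sigma$ itself) is sent by $\after[s]{e_s^0}{\cdot}$ to $\top$ or $\bot$ — or it has already entered the sink $\pass$, the first time a prefix is sent to $\top$, after which $\fail$ is forever unreachable, or it has already reached $\fail$ by an output step, the first time a prefix is sent to $\bot$; and from a state $\after{s}{\sigma}$ of the first kind the next step goes to $\fail$ exactly on an output $x$ with $\after{s}{\sigma x}=\bot$, or on a refusal $\overline a$ with $\after{s}{\sigma a}\neq\top$. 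This is proved by induction on $|\sigma|$ directly from Definitions~\ref{def:tester-induced} and~\ref{def:sla-after} and Lemma~\ref{lem:after-distributes-over-traces}. Two small facts about the lattice (Definition~\ref{def:free-distributive-lattice}) are used throughout: $e_1\wedge e_2=\bot$ iff $e_1=\bot$ or $e_2=\bot$, and $e_1\vee e_2=\bot$ iff $e_1=e_2=\bot$; since $T_s(q,a)\neq\bot$ for every input $a$, these give $\after[s]{e}{a}\neq\bot$ whenever $e\neq\bot$, which is exactly why a $\fail$-step is never taken on an input letter. Finally, if $e_s^0=\bot$ then $\tester(s)$ starts in $\fail$ and $\irtraces(s)=\emptyset$, so both sides of the equivalence hold for every non-empty $i$; if $e_s^0=\top$ then $\irtraces(s)=\iftracesdom_{I,O}$ and both sides fail; and if $i$ is empty both sides fail vacuously. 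So assume $e_s^0\notin\{\top,\bot\}$ and $i$ non-empty.

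For $i\not\ir s\Rightarrow i\fails\tester(s)$, pick $\rho\in\irtraces(i)\setminus\irtraces(s)$. If $\rho=\sigma\in L^*$, then $\sigma\in\traces(i)$ and $\after{s}{\sigma}=\bot$ (Definition~\ref{def:atraces-alts}); let $\tau x$ be the shortest prefix of $\sigma$ with $\after{s}{\tau x}=\bot$. Then no prefix of $\tau$ is sent to $\bot$, none is sent to $\top$ (as $\after{\top}{\cdot}$ is constantly $\top$, that would force $\after{s}{\tau x}=\top$), and $x\in O$ by the lattice fact above. As $\tau x\in\traces(i)$, fix a run $q_i^0\xRightarrow{\tau}_i q_i'\xrightarrow{x}_i q_i$ in $i$; synchronising it with the tester run along $\tau$ (which ends in state $\after{s}{\tau}$) and then the output step $x$, for which $T_{\tester(s)}(\after{s}{\tau},x)=\{\fail\}$ because $\after{s}{\tau x}=\bot$, produces a run of $\tester(s)\te i$ from $f_t(e_s^0)\te q_i^0$ to $\fail\te q_i$. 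If $\rho=\sigma\overline a$, then $\sigma\in\traces(i)$, $a\notin\inp_i(\after{i}{\sigma})$ and $\after{s}{\sigma a}\neq\top$; if some prefix of $\sigma$ is sent to $\bot$ we apply the previous case to that prefix, and otherwise no prefix of $\sigma$ is sent to $\top$ either, so the tester reaches $\after{s}{\sigma}$ after $\sigma$ while $i$ can reach a state $q_i\in\after{i}{\sigma}$ with $q_i\not\xrightarrow{a}$; the refusal step is then enabled, $T_{\tester(s)}(\after{s}{\sigma},\overline a)=\{\fail\}$, and since $q_i\not\xrightarrow{a}$ it produces $\fail\te q_i$ in $\tester(s)\te i$. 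Hence $i\fails\tester(s)$.

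For the converse, suppose $\tester(s)\te i$ has a run from $f_t(e_s^0)\te q_i^0$ to $\fail\te q_i$ and take a shortest one, with label $\rho$. By the structural observation the tester component never visits $\pass$ and takes at most one $\overline a$-step, which must be the last step; so $\rho$ has the form $\sigma x$ with $\sigma\in L^*$ and $x\in O$, or $\sigma\overline a$ with $\sigma\in L^*$, and in either case no prefix of $\sigma$ is sent to $\top$ or $\bot$, the tester is in $\after{s}{\sigma}$ after $\sigma$, and the implementation component follows $\sigma$ (plus $x$ in the first case), so $\sigma\in\traces(i)$. In case $\sigma x$, the last step forces $f_t(\after{s}{\sigma x})=\fail$, that is $\after{s}{\sigma x}=\bot$, whence $\sigma x\notin\irtraces(s)$ while $\sigma x\in\traces(i)\subseteq\irtraces(i)$. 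In case $\sigma\overline a$, the refusal step being enabled forces $\after{s}{\sigma a}\neq\top$, whence $\sigma\overline a\notin\irtraces(s)$, while $q_i\in\after{i}{\sigma}$ with $q_i\not\xrightarrow{a}$ gives $a\notin\inp_i(\after{i}{\sigma})$ and hence $\sigma\overline a\in\irtraces(i)$. Either way $\irtraces(i)\not\subseteq\irtraces(s)$, that is $i\not\ir s$.

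The main obstacle is the bookkeeping around the absorbing configuration $\top$ and its sink $\pass$: one must show that a witnessing run of $\tester(s)\te i$ can always be trimmed so that it avoids $\pass$, takes its unique refusal step last, and is never blocked along the preceding $L$-prefix, so that the tester faithfully tracks $\after{s}{\cdot}$ there. The remaining ingredients — the lattice facts about $\bot$ and $\top$, prefix-closure of $\traces(i)$ and $\irtraces(i)$, and the fact that $L$-steps of $\te$ synchronise the tester and the implementation while $\overline a$-steps move only the tester — are routine.
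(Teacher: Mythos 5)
Your proof is correct and follows essentially the same route as the paper: both hinge on the characterization that the tester tracks $f_t(\after{s}{\cdot})$ along executed traces, so a failing execution exists exactly when $i$ exhibits a trace $\sigma \in L^*$ with $(\after{s}{\sigma}) = \bot$ or a refusal $\sigma\overline{a}$ with $(\after{s}{\sigma a}) \neq \top$, which by Definition~\ref{def:atraces-alts} is exactly a trace in $\irtraces(i) \setminus \irtraces(s)$. The paper compresses this into a chain of equivalences with the structural facts justified ``formally by induction,'' whereas you spell out those inductions, the lattice facts about $\top$ and $\bot$, the shortest-prefix/shortest-run trimming, and the degenerate initial configurations explicitly.
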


\begin{proof}
  \begin{align*}
         & i \fails \tester(s_1)\\
    \iff
         & \exists \sigma \in \iftracesdom_{I,O}: \tester(s_1) \te i \xrightarrow{\sigma} \fail \te q_i
         \ttag{Definition~\ref{def:tester-execution}}\\
    \iff 
         & \exists \sigma \in L^*: \tester(s_1) \xrightarrow{\sigma} \fail \text{ and } q_i^0 \xrightarrow{\sigma}\\
         & \text{or } \exists \sigma \overline{a} \in L^* \cdot \overline{I}: \tester(s_1) \xrightarrow{\sigma \overline{a}} \fail \text{ and } q_i^0 \xrightarrow{\sigma} q_i \not\xrightarrow{a}
         \ttag{Definitions~\ref{def:tester-execution} and~\ref{def:tester-induced}, formally by induction}\\
    \iff 
         & \exists \sigma \in L^*: (\after{s_1}{\sigma}) = \bot \text{ and } q_i^0 \xrightarrow{\sigma}\\
         & \text{or } \exists \sigma\overline{a} \in L^* \cdot \overline{I}: (\after{s_1}{\sigma a}) \neq \top \text{ and } q_i^0 \xrightarrow{\sigma} q_i \not\xrightarrow{a}
         \ttag{Construction of $\tester$ in Definition~\ref{def:tester-induced}, formally by induction}\\
    \iff 
         & \exists \sigma \in L^*: \sigma \not\in \irtraces(s_1) \text{ and } q_i^0 \xrightarrow{\sigma}\\
         & \text{or } \exists \sigma\overline{a} \in L^* \cdot \overline{I}: \sigma \overline{a} \not\in \irtraces(s_1) \text{ and } q_i^0 \xrightarrow{\sigma} q_i \not\xrightarrow{a}
         \ttag{Definition~\ref{def:atraces-alts}}\\
    \iff 
         & \exists \sigma \in \iftracesdom_{I,O}: \sigma \not\in \irtraces(s_1) \text{ and } \sigma \in \irtraces(i)
         \ttag{Definition~\ref{def:ir-inclusion}}\\
    \iff 
         & \irtraces(i) \not\subseteq \irtraces(s_1)
         \ttag{Set theory}\\
    \iff
         & i \not\ir s_1
         \ttag{Definition~\ref{def:ir-on-ia-aia}}
  \end{align*}
  \strut\qed
\end{proof}

\begin{example}
  The tester for $s_B$ in Figure~\ref{fig:ALTS-coffee} is shown in Figure~\ref{fig:tester}.
  
  \begin{figure}
    \begin{center}
      \begin{tikzpicture}[->,node distance=18mm]
        \tikzlts
        \node[initial left,state] (init) {$q_B^0$};
        \node (init-text) [above left=-2mm and 0mm of init] {$\tester(s_B)$};
        \node[recstate] (on) [below=8mm of init] {$q_B^{1} \wedge q_B^3 \wedge q_B^5 \wedge q_B^8$};
        \node[recstate] (A) [below left=8mm and -4mm of on] {$q_B^{2} \wedge q_B^9$};
        \node[recstate] (B) [below right=8mm and -12mm of on] {$(q_B^4 \wedge (q_B^6 \vee q_B^7) \wedge q_B^9$};
        \node[state] (take) [below=18mm of on] {$q_B^{10}$};
        \node[verdictstate] (fail) [below right=15mm and 5mm of take] {$\fail$};
        \path
        (init) edge node [left] {!on} (on)
        (on) edge node [left=1mm] {!A} (A)
        (on) edge node [right=1mm] {!B} (B)
        (A) edge node [below left=-1mm] {?c} (take)
        (B) edge node [below right=-1mm and -0.5mm] {?t+m} (take)
        (take) edge node [left=-0.8mm] {!take} (on)
        
        (init) edge [in=0,out=-5, looseness=2.9] node [align=left, right] {?O\\ !$\overline{\text{on}}$} (fail)
        (on) edge [in=15,out=-10, looseness=2.6] node [align=left, right] {?O\smallskip\\!$\overline{\text{A}}$\smallskip\\!$\overline{\text{B}}$} (fail)
        (B) edge [in=30,out=-60, looseness=1.2] node [align=left, above right=0mm and 2.5mm] {?t\\?c\\?c+m} (fail)
        (take) edge [bend left=20] node [align=left, above right=-4mm and 0.5mm] {?O\bigskip\\!$\overline{\text{take}}$} (fail)
        (A) edge [in=160,out=-80, looseness=1.0] node [align=right, below left=-10mm and 5mm] {?t\\?t+m\\?c+m} (fail)
        (fail) edge [loop below] node {?O} ()
        ;
      \end{tikzpicture}
      \caption{The tester for the vending machine.
      The label ?O denotes a transition for every label in O.
      Remark that inputs for $s_B$ are outputs for $\tester(s_B)$, and vice versa.}
      \label{fig:tester}
    \end{center}
  \end{figure}
\end{example}

Theorem~\ref{the:soundness-exhaustiveness} shows that soundness and exhaustiveness of a tester corresponds to refinement of the corresponding AIA.

\begin{theorem}
  \label{the:soundness-exhaustiveness}
  Let $s_1, s_2 \in \alts$.
  Then
  
  \def\arraystretch{1.3}
  \begin{tabular}{lrl}
    1 & \multicolumn{2}{l}{\;$\tester(s_1)$ is sound and exhaustive for $\ltsop(s_1)$}\\
    2 &\; $\tester(s_1)$ is sound for $s_2$ &       $\iff s_2 \ir s_1$ \label{the:soundness-exhaustiveness:soundness}\\
    3 &\; $\tester(s_1)$ is exhaustive for $s_2$ &  $\iff s_1 \ir s_2$
    \label{the:soundness-exhaustiveness:exhaustiveness}
  \end{tabular}
\end{theorem}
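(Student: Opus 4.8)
The plan is to reduce all three statements, via Lemma~\ref{lem:failing-is-not-refining}, to purely refinement-theoretic facts. That lemma gives, for every $i \in \lts$, both $i \fails \tester(s_1) \iff i \not\ir s_1$ and its contrapositive $i \passes \tester(s_1) \iff i \ir s_1$. So after unfolding Definition~\ref{def:soundness-exhaustiveness}, soundness of $\tester(s_1)$ for a model $m$ becomes $(\forall i \in \lts:\; i \ir m \implies i \ir s_1)$ and exhaustiveness becomes $(\forall i \in \lts:\; i \ir s_1 \implies i \ir m)$.

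For part~1 I would first record the auxiliary equivalence: for $i \in \lts$, $i \ir \ltsop(s_1) \iff i \ir s_1$. Indeed $i \ir \ltsop(s_1)$ unfolds (Definition~\ref{def:ir-inclusion}) to $\irtraces(i) \subseteq \rcl(\irtraces(\ltsop(s_1)))$, and $\rcl(\irtraces(\ltsop(s_1))) = \irtraces(s_1)$ by Proposition~\ref{pro:induced-ia-preserves-traces}, which is precisely $i \ir s_1$ by Definition~\ref{def:ir-on-ia-aia}. Plugging this into the displayed characterisations with $m = \ltsop(s_1)$ makes both "$\forall i$" implications trivial, so $\tester(s_1)$ is sound and exhaustive for $\ltsop(s_1)$.

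For parts~2 and~3, take $m = s_2$. The $(\Leftarrow)$ directions are immediate from transitivity of $\subseteq$ on input-failure trace sets (via Definitions~\ref{def:ir-on-ia-aia} and~\ref{def:atraces-alts}): $s_2 \ir s_1$ gives soundness, and $s_1 \ir s_2$ gives exhaustiveness. For the $(\Rightarrow)$ directions the idea is to instantiate the universal quantifier over implementations with the induced IA. For soundness $(\Rightarrow)$, take $i := \ltsop(s_2)$: from $\irtraces(\ltsop(s_2)) \subseteq \rcl(\irtraces(\ltsop(s_2))) = \irtraces(s_2)$ (Proposition~\ref{pro:induced-ia-preserves-traces}) we get $\ltsop(s_2) \ir s_2$, so soundness yields $\ltsop(s_2) \ir s_1$, i.e. $\irtraces(\ltsop(s_2)) \subseteq \irtraces(s_1)$; since $\irtraces(s_1)$ is input-failure closed (Proposition~\ref{pro:aia-is-ir-closed}) and $\rcl(\cdot)$ is the smallest closed superset, this forces $\irtraces(s_2) = \rcl(\irtraces(\ltsop(s_2))) \subseteq \irtraces(s_1)$, i.e. $s_2 \ir s_1$. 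For exhaustiveness $(\Rightarrow)$ the symmetric argument with $i := \ltsop(s_1)$, using $\ltsop(s_1) \ir s_1$, gives $\irtraces(\ltsop(s_1)) \subseteq \irtraces(s_2)$ and hence $\irtraces(s_1) = \rcl(\irtraces(\ltsop(s_1))) \subseteq \irtraces(s_2)$, i.e. $s_1 \ir s_2$.

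The only real work is in these $(\Rightarrow)$ directions: because soundness and exhaustiveness quantify over \emph{all} implementations, one must exhibit a single concrete implementation IA whose behaviour pins down the specification, and $\ltsop(\cdot)$ works precisely because closing up its input-failure traces recovers the AIA's traces, while the AIA's traces are already closed. A small routine case to check is the degenerate situation $e^0 = \bot$, where $\ltsop$ yields the empty IA with $\irtraces = \emptyset$ and every claim holds trivially; everything else is bookkeeping with the definitions and the cited propositions.
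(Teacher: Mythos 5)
Your proposal is correct and follows essentially the same route as the paper's proof: both reduce soundness and exhaustiveness to refinement statements via Lemma~\ref{lem:failing-is-not-refining}, and both obtain the nontrivial $(\Rightarrow)$ directions by using the induced IA $\ltsop(\cdot)$ as the maximal implementation witness together with $\rcl(\irtraces(\ltsop(s))) = \irtraces(s)$ and closedness of AIA trace sets. Your explicit treatment of part~1 (showing $i \ir \ltsop(s_1) \iff i \ir s_1$) and of the degenerate $\bot$ case is, if anything, slightly more careful than the paper's own wording, but it is the same argument.
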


\begin{proof}
  Lemma~\ref{lem:failing-is-not-refining} implies that $\tester(s_1)$ is sound and exhaustive for any AIA $s$ with $\irtraces(s) = \irtraces(s_1)$.
  This proves that $\tester(s_1)$ is indeed sound and exhaustive for $s_1$ itself, as well.
  
  Now, we prove the second statement.
  This statement is vacuous if $e_2^0 = \bot$, so assume $e_2^0 \neq \bot$, which implies that $\ltsop(s_2)$ is defined.
  Then
  \begin{align*}
         & \text{$\tester(s_1)$ is sound for $s_2$}\\
    \iff
         & \text{for all IAs $i$}, i \fails \tester(s_1) \implies i \not\ir s_2\\
    \iff
         & \text{for all IAs $i$}, i \not\ir s_1 \implies i \not\ir s_2\\
    \iff
         & \text{for all IAs $i$ with $i \ir s_2$}, \irtraces(i) \subseteq \irtraces(s_1)\\
    \iff
         & \bigcup \{\irtraces(i) \mid i \in \lts, i \ir s_2\} \subseteq \irtraces(s_1)\\
    \iff
         & \irtraces(\ltsop(s_2)) \subseteq \irtraces(s_1)\\
    \iff
         & \irtraces(s_2) \subseteq \irtraces(s_1)\\
    \iff
         & s_2 \ir s_1
  \end{align*}
  The proof of the third statement is analogous.
  \qed
\end{proof}


\subsection{Test Cases for AIA Specifications}

In~\cite{Tre08}, an algorithm was introduced to generate \emph{test cases}.
These are testers as in Definition~\ref{def:tester} with additional restrictions, so that they can be used as unambiguous instructions to test a system.
In particular, states of a test case should have at most one outgoing input transition.
This ensures that no choice between different inputs has to be resolved during test execution.
Additionaly, all paths of a test case lead to $\pass$ or $\fail$ in a finite number of steps, to ensure that test execution terminates with a verdict.

\begin{definition}
  A tester $t$ for $I$ and $O$ is a \emph{test case} if
  \begin{itemize}
    \item
      for all $q_t \in Q_t$, $|\out(q_t)| \le 1$, and
    \item
      there are no infinite sequences $q_t^0, q_t^1,\dots$ for ${q_t^0, q_t^1,\dots} \in Q_t \setminus \{\pass,\fail\}$ such that $q_t^0 \xrightarrow{\ell^0} q_t^1 \xrightarrow{\ell^1} \dots$
  \end{itemize}
\end{definition}

The test case generation algorithm of~\cite{Tre08} is non-deterministic, since it must choose at most one inputs in every state, and it must choose when to stop testing.
We avoid defining a separate test case generation algorithm, and instead use Theorem~\ref{the:soundness-exhaustiveness} to obtain sound test cases.
If specification $s_1$ is weakened to $s_2$, such that $\tester(s_2)$ is a test case, then soundness of $\tester(s_2)$ for $s_1$ is guaranteed by the theorem.
Such a weakened \emph{singular specification} $s_2$ describes a finite, tree-shaped part of the original specification $s_1$.

\begin{definition}
  \label{def:singular-specification}
  Let $s_1, s_2 \in \alts$.
  Then $s_2$ is a \emph{singular specification for $s_1$} if $Q_2$ is a finite subset of $L^*$, with $e_2^0 \in \{\epsilon, \top, \bot\}$, $e_1^0 = \top \implies e_2^0 = \top$ and $e_2^0 = \bot \implies e_1^0 = \bot$,
  and having that for every $\sigma \in Q_2$, the following holds:
  \begin{enumerate}
    \item
      $T_2(\sigma,\ell) = \bot \implies (\after{s_1}{\sigma\ell}) = \bot$ for $\ell \in L$,
    \item
      $(\after{s_1}{\sigma\ell}) = \top \implies T_2(\sigma,\ell) = \top$ for $\ell \in L$
    \item
      $T_2(\sigma,\ell)$ is either $\bot$ or $\top$ or $\sigma\ell$ for $\ell \in L$, and
    \item
      there is at most one $a \in I$ with $T(\sigma, a) \neq \top$.
  \end{enumerate}
\end{definition}

It can be created from $s_1$ similarly to test case generation in~\cite{Tre08}.
In every state $\sigma$ of the tree $s_1$, we either decide to pick one input specified in $s_1$ and also specify that in $s_2$; or we do not specify any input, but only outputs; or we leave any successive behaviour unspecified ($\top$).

Test cases based on singular specifications are inherently sound, and for any incorrect implementation, it is possible to find a singular specification which induces a test case to detects this incorrectness.

\begin{lemma}
  \label{lem:singular-is-weaker}
  If $s_2$ is a singular specification for $s_1$, then $s_1 \ir s_2$.
\end{lemma}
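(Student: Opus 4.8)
The plan is to unfold the goal $s_1 \ir s_2$ into $\irtraces(s_1) \subseteq \irtraces(s_2)$ and, using Definition~\ref{def:atraces-alts}, reduce it to two implications about the $\after{}{}$-function: for every $\sigma \in L^*$, that $(\after{s_1}{\sigma}) \neq \bot$ implies $(\after{s_2}{\sigma}) \neq \bot$ (so that ordinary traces of $s_1$ are traces of $s_2$), and for every $\sigma \in L^*$, $a \in I$, that $(\after{s_1}{\sigma a}) = \top$ implies $(\after{s_2}{\sigma a}) = \top$ (so that input-failure traces of $s_1$ are input-failure traces of $s_2$). Taking contrapositives, it suffices to prove, for all $\tau \in L^*$, property (I) that $(\after{s_2}{\tau}) = \bot$ implies $(\after{s_1}{\tau}) = \bot$, and property (II) that $(\after{s_1}{\tau}) = \top$ implies $(\after{s_2}{\tau}) = \top$.

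Before that, I would record a structural fact about $s_2$, proved by induction on $|\tau|$: every configuration $(\after{s_2}{\tau})$ is either $\top$, or $\bot$, or the embedding $\AP{\tau}$ with $\tau \in Q_2$. The base case is exactly the hypothesis $e_2^0 \in \{\epsilon,\top,\bot\}$; for the step, Lemma~\ref{lem:after-distributes-over-traces} gives $\after{s_2}{\tau\ell} = (\after{s_2}{\tau})[\subst{\ell}]$, and then Definition~\ref{def:substitution} together with condition~3 of Definition~\ref{def:singular-specification} ($T_2(\tau,\ell) \in \{\top,\bot,\AP{\tau\ell}\}$) pins the result to one of the three forms. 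With this in hand I would prove (I) and (II) by a simultaneous induction on $|\tau|$. For $\tau = \epsilon$, property (I) is the hypothesis $e_2^0 = \bot \implies e_1^0 = \bot$ and property (II) is $e_1^0 = \top \implies e_2^0 = \top$. For $\tau = \tau'\ell$, I case-split on $(\after{s_2}{\tau'})$ via the structural fact: if it is $\top$ then $(\after{s_2}{\tau'\ell}) = \top$, making (I) vacuous and (II) immediate; if it is $\bot$ then the induction hypothesis (I) at $\tau'$ yields $(\after{s_1}{\tau'}) = \bot$, hence $(\after{s_1}{\tau'\ell}) = \bot[\subst{\ell}] = \bot$, making (I) hold and (II) vacuous; and if it is $\AP{\tau'}$ with $\tau' \in Q_2$, then $(\after{s_2}{\tau'\ell}) = T_2(\tau',\ell)$, so (I) follows from condition~1 and (II) from condition~2 of Definition~\ref{def:singular-specification}. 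Feeding (I) and (II) back into the reduction above yields $\irtraces(s_1) \subseteq \irtraces(s_2)$, i.e.\ $s_1 \ir s_2$.

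The step I expect to be the crux is isolating the right induction invariant: one must first see that a singular specification is a \emph{deterministic}, tree-shaped AIA whose every reachable configuration has the very restricted form $\top$, $\bot$, or $\AP{\tau}$, since only then can the "local" consistency conditions~1–2 (which speak about $T_2(\sigma,\ell)$ for states $\sigma \in Q_2$) be invoked along an arbitrary trace. Once that structural lemma is in place, the rest is routine substitution bookkeeping. I note in passing that condition~4 of Definition~\ref{def:singular-specification} is not used here; it is needed only to guarantee that $\tester(s_2)$ is a genuine test case, not for the refinement $s_1 \ir s_2$ itself.
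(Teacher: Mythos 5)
Your proof is correct and follows essentially the same route as the paper's: an induction on the length of the trace, with the base case discharged by the conditions on $e_1^0,e_2^0$ and the inductive step by conditions~1--2 of Definition~\ref{def:singular-specification}. The only difference is presentational: you make explicit the structural fact that every reachable configuration of $s_2$ is $\top$, $\bot$, or $\AP{\tau}$ (which the paper's proof uses implicitly when it applies $T_2(\sigma',\ell)$ to the trace $\sigma'$), and you correctly observe that condition~4 plays no role in this lemma.
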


\begin{proof}
  By Definition~\ref{def:atraces-alts}, we prove that any $\sigma \in \irtraces(s_1)$ also has $\sigma \in \irtraces(s_2)$, by induction on the length of $\sigma$.
  
  The base case $\sigma = \epsilon$ trivially holds if $e_2^0 = \top$ or $e_1^0 = \bot$, so assume that $e_2^0 \neq \top$ and $e_1^0 \neq \bot$.
  In that case, Definition~\ref{def:singular-specification} implies $e_2^0 = \epsilon$.
  Then $\epsilon \in \irtraces(s_1)$ and $\epsilon \in \irtraces(s_2)$, proving the base case.
  
  For the inductive step, let $\sigma = \sigma'\ell$ with $\ell \in L \cdot \overline{I}$, and assume as induction hypothesis that $\sigma' \in \irtraces(s_1) \implies \sigma' \in \irtraces(s_2)$ (IH).
  From $\sigma \in \irtraces(s_1)$, we have that $\sigma' \in \irtraces(s_1)$ and therefore also $\sigma' \in \irtraces(s_2)$ (1) by (IH).
  Now, we make a case distinction:
  \begin{itemize}
    \item
      If $\ell \in L$, then $\sigma \in \irtraces(s_1)$ implies $\after{s_1}{\sigma} \neq \bot$ by Definition~\ref{def:atraces-alts}, and therefore $T_2(\sigma',\ell) \neq \bot$ by Definition~\ref{def:singular-specification}.
      This implies $\sigma \in \irtraces(s_2)$ by Definition~\ref{def:atraces-alts}.
    \item
      If $\ell = \overline{a} \in \overline{I}$, then $\sigma \in \irtraces(s_1)$ implies $\after{s_1}{\sigma} = \top$ by Definition~\ref{def:atraces-alts}, and therefore $T_2(\sigma',\ell) = \top$ by Definition~\ref{def:singular-specification}.
      This implies $\sigma \in \irtraces(s_2)$ by Definition~\ref{def:atraces-alts}.
      \qed
  \end{itemize}
\end{proof}

\begin{theorem}
  If $s_2$ is a singular specification for $s_1$, then $\tester(s_2)$ is a sound test case for $s_1$.
\end{theorem}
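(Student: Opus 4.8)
The plan is to separate the statement into its structural half---that $\tester(s_2)$ is a \emph{test case}---and its semantic half---that $\tester(s_2)$ is \emph{sound} for $s_1$. Soundness is immediate from what is already available: Lemma~\ref{lem:singular-is-weaker} gives $s_1 \ir s_2$, and the second statement of Theorem~\ref{the:soundness-exhaustiveness}, instantiated so that the tester is built from $s_2$ and the tested specification is $s_1$, yields that $\tester(s_2)$ is sound for $s_1$ precisely because $s_1 \ir s_2$. For this to make sense one should also recall that $\tester(s)$ is a tester in the sense of Definition~\ref{def:tester} for every $s \in \alts$ (deterministic and input-enabled, with $\pass,\fail$ sink-states of empty output and $a \in \out(q) \iff \overline a \in \out(q)$), which is a direct inspection of Definition~\ref{def:tester-induced}; this is needed since a test case is by definition a tester.

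The work is therefore in checking the two extra test-case requirements. I would first dispose of the degenerate cases $e_2^0 \in \{\top,\bot\}$: then $f_t(e_2^0) \in \{\pass,\fail\}$ is a sink-state, so (restricting attention to the reachable part of $\tester(s_2)$) the tester is a one-state test case. Otherwise $e_2^0 = \AP{\epsilon}$ with $\epsilon \in Q_2$, and the key step is an induction on trace length showing that every reachable non-verdict state of $\tester(s_2)$ is an embedding $\AP{\sigma}$ for some $\sigma \in Q_2$: the base case is $\AP{\epsilon}$, and in the inductive step the third condition of Definition~\ref{def:singular-specification} forces $T_2(\sigma,\ell) \in \{\bot,\top,\sigma\ell\}$, so by Definition~\ref{def:tester-induced} every $L$-successor of $\AP{\sigma}$ is $\fail$, $\pass$ or $\AP{\sigma\ell}$, and every $\overline a$-successor is $\fail$ or absent.

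On these states the two test-case conditions follow directly. For the bound on outgoing input transitions: by Definition~\ref{def:tester-induced}, from $\AP{\sigma}$ the tester has an $a$-transition (and the paired $\overline a$-transition) exactly when $\after[s_2]{\AP{\sigma}}{a} = T_2(\sigma,a) \neq \top$, and the fourth condition of Definition~\ref{def:singular-specification} guarantees at most one such $a \in I$, while $\pass$ and $\fail$ have no outgoing outputs at all. For termination: every non-verdict transition leads from $\AP{\sigma}$ to some $\AP{\sigma\ell}$ with $|\sigma\ell| = |\sigma| + 1$; since $Q_2$ is finite, the lengths of its elements are bounded by some $N$, so no path through non-verdict states can have more than $N+1$ steps before hitting $\pass$ or $\fail$, and in particular there is no infinite such sequence.

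The main obstacle I anticipate is not depth but careful bookkeeping about \emph{which} configurations actually occur as tester states: the conditions of Definition~\ref{def:singular-specification} constrain the transition function on states $\sigma \in Q_2$, and they translate into the test-case conditions only for the embeddings $\AP{\sigma}$, not for arbitrary elements of $\SE(Q_2)$ (for which the ``at most one input'' property can fail), so the inductive claim that every reachable non-verdict state is an embedding is load-bearing and the structural conditions must be read over the reachable part of $\tester(s_2)$. A minor further point to settle is the precise reading of ``$|\out(q_t)| \le 1$'' in the presence of the obligatory paired transitions $a,\overline a$; I would read it, as the surrounding text does, as ``at most one outgoing input transition'', which is exactly what the fourth condition of Definition~\ref{def:singular-specification} delivers.
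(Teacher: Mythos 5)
Your proposal is correct and follows the paper's route: the soundness half is exactly the paper's one-line argument (Lemma~\ref{lem:singular-is-weaker} giving $s_1 \ir s_2$, then statement~2 of Theorem~\ref{the:soundness-exhaustiveness} with the roles of $s_1$ and $s_2$ as there instantiated). The structural half---that $\tester(s_2)$ is indeed a \emph{test case}, via the induction showing reachable non-verdict states are embeddings of traces in $Q_2$, the ``at most one input'' reading of $|\out(q_t)| \le 1$, and the length bound for termination---is left entirely implicit by the paper's proof, so your treatment is in fact more complete than the original, and your caveats (restriction to the reachable part, the paired $a,\overline a$ transitions) are well taken.
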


\begin{proof}
  This follows directly from Lemma~\ref{lem:singular-is-weaker} and Theorem \ref{the:soundness-exhaustiveness}.
  \qed
\end{proof}

\begin{theorem}
  Let $i \in \lts$ and $s_1 \in \alts$.
  If $i \not\ir s_1$, then there is a singular specification $s_2$ for $s_1$ such that $i \fails \tester(s_2)$.
\end{theorem}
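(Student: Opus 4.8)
The plan is to start from $i \not\ir s_1$, which by Definition~\ref{def:ir-on-ia-aia} means $\irtraces(i) \not\subseteq \irtraces(s_1)$. So there is a witness $\sigma \in \iftracesdom_{I,O}$ with $\sigma \in \irtraces(i)$ but $\sigma \notin \irtraces(s_1)$. Since $\irtraces(i)$ is prefix-closed and every strict prefix of $\sigma$ that still lies in $\irtraces(s_1)$ doesn't hurt us, I may as well take $\sigma$ to be a shortest such witness; then every proper prefix of $\sigma$ is in both $\irtraces(i)$ and $\irtraces(s_1)$. Write $\sigma = \ell^1 \cdots \ell^n$ (with a possible final $\overline{a}$ symbol). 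The idea is to build a singular specification $s_2$ whose single branch follows exactly the sequence $\sigma$, records $\top$ everywhere off this branch, and at the very end of the branch records $\bot$ (if $\sigma \in L^*$) or a non-$\top$ value that forces an input-failure check (if $\sigma$ ends in $\overline a$).

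**Constructing $s_2$.** Take $Q_2 = \{\epsilon, \ell^1, \ell^1\ell^2, \dots, \sigma'\}$ where $\sigma'$ is the longest prefix of $\sigma$ lying in $L^*$ (so $\sigma'=\sigma$ if $\sigma \in L^*$, and $\sigma = \sigma'\overline a$ otherwise). Set $e_2^0 = \top$ if $e_1^0 = \top$, $e_2^0 = \bot$ if $e_1^0 = \bot$, and $e_2^0 = \epsilon$ otherwise — note that since $\epsilon$ is a proper prefix of $\sigma$ (assuming $n\ge1$; the degenerate case $\sigma$ = ``$\overline a$'' with $\after{s_1}{} = \top$ contradicts $\sigma \notin\irtraces(s_1)$ only if handled, I'll check $e_1^0 \ne \bot$ forces the $\epsilon$ case consistently). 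For each state $\rho = \ell^1\cdots\ell^k \in Q_2$ and each action $\ell \in L$: if $\ell = \ell^{k+1}$ is the next action along $\sigma$ and $k+1 \le$ (length of $\sigma'$), put $T_2(\rho,\ell) = \rho\ell$; if $(\after{s_1}{\rho\ell}) = \top$, put $T_2(\rho,\ell) = \top$; otherwise — this includes the final step where $\rho\ell = \sigma$ and $(\after{s_1}{\sigma}) = \bot$ — put $T_2(\rho,\ell) = \bot$, except we must respect the AIA constraint that inputs never map to $\bot$, so for $\ell \in I$ with $(\after{s_1}{\rho\ell}) \ne \top$ that isn't the branch action we are free to set $T_2(\rho,\ell) = \top$ instead (it only weakens $s_2$, which is fine). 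The one delicate point is the final input-failure witness: if $\sigma = \sigma'\overline a$, then $\sigma\notin\irtraces(s_1)$ means $(\after{s_1}{\sigma'a}) \ne \top$, and Definition~\ref{def:singular-specification}(2) only forbids $T_2(\sigma',a)=\top$ being wrong when $(\after{s_1}{\sigma'a})=\top$ — which it isn't — so I am allowed to set $T_2(\sigma',a) \ne \top$ (e.g. $\bot$ if $a$ weren't an input, but since it is, I use the tester's input-failure edge: actually in $s_2$ I just need $T_2(\sigma',a) \ne \top$, and condition (1) is vacuous here, condition (3) lets it be $\bot$... but (AIA) forbids that; resolve by noting $(\after{s_1}{\sigma' a})\ne\top$ combined with... hmm). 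I would resolve this by setting $T_2(\sigma', a) = \sigma' a$ and then at state $\sigma' a$ making everything map to $\top$; this keeps $(\after{s_2}{\sigma' a}) = \sigma' a \ne \top$ so $\sigma'\overline a \notin \irtraces(s_2)$, satisfying what we need while obeying Definition~\ref{def:singular-specification}(1)–(3) and the AIA input constraint — but I'd also need to add $\sigma' a$ to $Q_2$, still finite.

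**Verifying the properties.** Next I would check $s_2$ is genuinely a singular specification for $s_1$: condition (1) $T_2(\sigma,\ell)=\bot \implies (\after{s_1}{\sigma\ell})=\bot$ holds because we only ever assign $\bot$ when $(\after{s_1}{\rho\ell})$ is neither $\top$ nor an output worth following — actually I must be careful to only assign $\bot$ precisely when $(\after{s_1}{\rho\ell}) = \bot$, and for the intermediate off-branch outputs where $(\after{s_1}{\rho\ell})$ is some proper configuration, I should assign $\top$ (legal, only weakens), not $\bot$. So: assign $\bot$ only when $s_1$'s after-value is exactly $\bot$. Condition (2) is immediate since we assign $\top$ whenever $s_1$'s after-value is $\top$. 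Condition (3) holds by construction. Condition (4): each branch state $\rho$ has at most one outgoing non-$\top$ input transition, namely the branch action if it happens to be an input, or (at $\sigma'$) the action $a$; all other inputs map to $\top$. The prefix/initial-configuration conditions hold by the $e_2^0$ choice and because for proper prefixes $\rho$ of $\sigma$ we have $\rho \in \irtraces(s_1)$ so $(\after{s_1}{\rho}) \ne \bot$, preventing a premature $\bot$. Then $\tester(s_2)$ is a test case by the same finiteness/tree argument as in the previous theorem. Finally, $i \fails \tester(s_2)$: by Lemma~\ref{lem:failing-is-not-refining} this is equivalent to $i \not\ir s_2$, i.e. $\irtraces(i) \not\subseteq \irtraces(s_2)$; and indeed $\sigma \in \irtraces(i)$ while $\sigma \notin \irtraces(s_2)$ — the latter because along the unique branch $(\after{s_2}{\sigma'}) = \sigma'$ or $\bot$, arranged so that $(\after{s_2}{\sigma}) = \bot$ (when $\sigma\in L^*$) or $(\after{s_2}{\sigma' a}) \ne \top$ (when $\sigma = \sigma'\overline a$), exactly matching Definition~\ref{def:atraces-alts}'s criterion for $\sigma \notin \irtraces(s_2)$.

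**Main obstacle.** The hard part will be the bookkeeping around the AIA well-formedness constraint ``$T(q,a) \ne \bot$ for inputs'' interacting with the final witness when $\sigma$ ends in an input-failure symbol $\overline a$: one cannot simply set $T_2(\sigma', a) = \bot$, so the construction must route through an extra explicit state $\sigma' a$ with all-$\top$ continuations, and I must double-check this does not violate Definition~\ref{def:singular-specification}(2) (it doesn't, since $(\after{s_1}{\sigma' a}) \ne \top$) nor (3) (it satisfies the ``$\sigma\ell$'' case). A secondary subtlety is confirming that off-branch outputs of $s_1$ that are proper configurations get mapped to $\top$ rather than $\bot$ in $s_2$, so that condition (1) is never violated; this is a legitimate weakening and is exactly analogous to the freedom in ioco test-case generation to stop testing at any point. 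Everything else is routine induction and appeal to Lemma~\ref{lem:failing-is-not-refining} and Lemma~\ref{lem:singular-is-weaker}.
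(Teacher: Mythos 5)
Your proposal is essentially the paper's own proof: pick a witness $\sigma \in \irtraces(i)\setminus\irtraces(s_1)$, build a linear singular specification whose states are the prefixes of $\sigma$, send everything off the branch to $\top$, send the final output of $\sigma$ to $\bot$ (or, when $\sigma$ ends in $\overline a$, route input $a$ to one extra state with all-$\top$ continuations), and conclude $i \fails \tester(s_2)$ via Lemma~\ref{lem:failing-is-not-refining}. The one wrinkle you circle around — a witness in $L^*$ whose last action is an input, where the AIA constraint forbids mapping to $\bot$ — is in fact excluded by your shortest-witness choice (an input step from a non-$\bot$ configuration can never yield $\bot$), a case the paper's construction silently skips as well.
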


\begin{proof}
  If $i \not\ir s_1$, then there is some $\sigma \in \iftracesdom_{I,O}$ with $\sigma \in \irtraces(i)\setminus\irtraces(s_1)$.
  We construct $s_2$ as follows:
  \begin{itemize}
    \item
      If $\sigma = \epsilon$, then choose $s_2 = (\emptyset,I,O,T_\emptyset,\bot)$ where $T_\emptyset$ is the empty function.
    \item
      If $\sigma = \ell_0\dots\ell_n$ for some $n \ge 0$, then choose $s_2 = (\{\ell_0\dots\ell_i \mid 0 \le i \le n\},I,O,T,\epsilon)$ with $T(\ell_0\dots\ell_i,\ell_{i+1}) = \ell_0\dots\ell_{i+1}$ for $0 \le i < n$, and 
      \[T(\ell_0\dots\ell_{n-1}, \ell_n) =
        \begin{cases}
          \bot & \text{if $\ell_n \in O$}\\
          \ell_0\dots a & \text{if $\ell_n = \overline{a} \in \overline{I}$}
        \end{cases}\]
      and $T(q,\ell) = \top$ for any other pair $(q,\ell) \in Q_s \times L$.
  \end{itemize}
  By construction, both these definitions of $s_2$ are singular for $s_1$ and have $\sigma \not \in \irtraces(s_2)$.
  Since $\sigma \in \irtraces(i)$, Definition~\ref{def:ir-on-ia-aia} implies that $i \not\ir s_2$, and therefore $\tester(s_2) \fails s_1$ holds by Theorem~\ref{the:soundness-exhaustiveness}.\ref{the:soundness-exhaustiveness:exhaustiveness}.
  \qed
\end{proof}


\begin{example}
  Specification $s_B$ in Figure~\ref{fig:alts-example} can be weakened to singular specification $s_C$ shown in Figure~\ref{fig:test-case}.
  Indeed, $s_B \ir s_C$ holds, which can be established by comparing $s_C$ with $\det(s_B)$ in Figure~\ref{fig:example-determinizations}.
  Therefore $\tester(s_C)$ is a sound test case for $s_B$.
  
  \begin{figure}
    \begin{center}
      \begin{tikzpicture}[->,node distance=10.42mm]
        \tikzlts
        \node[initial left,emptystate] (0) {};
        \node (init-text) [above left=-1mm and 0mm of 0] {$s_C$};
        \node[emptystate] (1) [below of=0] {};
        \node[emptystate] (2) [below of=1] {};
        \node[emptystate] (3) [below of=2] {};
        \node[emptystate] (4) [below of=3] {};
        \node[emptystate] (5) [below of=4] {};
        \node[top] (6) [below of=5] {};
        \path
        (0) edge node [left] {?on} (1)
        (1) edge node [left] {?a} (2)
        (2) edge node [left] {!c} (3)
        (3) edge node [left] {?take} (4)
        (4) edge node [left] {?b} (5)
        (5) edge node [left] {!t+m} (6)
        ;
      \end{tikzpicture}
      \hspace{10mm}
      \begin{tikzpicture}[->,node distance=10.42mm]
        \tikzlts
        \node[initial left,emptystate] (0) {};
        \node (init-text) [above left=-1mm and 0mm of 0] {$\tester(s_C)$};
        \node[emptystate] (1) [below of=0] {};
        \node[emptystate] (2) [below of=1] {};
        \node[emptystate] (3) [below of=2] {};
        \node[emptystate] (4) [below of=3] {};
        \node[emptystate] (5) [below of=4] {};
        \node[verdictstate] (pass) [below of=5] {$\pass$};
        \node[verdictstate] (fail) [right=30mm of 3] {$\fail$};
        \path
        (0) edge node [left] {!on} (1)
        (1) edge node [left] {!a} (2)
        (2) edge node [left] {?c} (3)
        (3) edge node [left] {!take} (4)
        (4) edge node [left] {!b} (5)
        (5) edge node [left] {?t+m} (pass)
        
        (0) edge [in=90,out=0] node [above right=-1mm, align=left] {!$\overline{\text{on}}$\\?O} (fail)
        (1) edge [in=120,out=0] node [pos=0.4,above=1mm, align=left] {!$\overline{\text{a}}$ ?O} (fail)
        (2) edge [in=150,out=0] node [pos=0.3,above=0mm, align=left] {?t ?t+m\\?c+m} (fail)
        (3) edge [in=180,out=0] node [pos=0.3,above=0mm, align=left] {!$\overline{\text{take}}$ ?O} (fail)
        (4) edge [in=210,out=0] node [pos=0.3,above=1mm, align=left] {!$\overline{\text{b}}$ ?O} (fail)
        (5) edge [in=240,out=0] node [pos=0.3,above=1mm, align=left] {?t ?c\\?c+m} (fail)
        (pass) edge [loop right] node {?O} (pass)
        (fail) edge [loop right] node {?O} (fail)
        ;
      \end{tikzpicture}
      \caption{A weakened version $s_C$ of the vending machine, and the test case $\tester(s_C)$.
      Question and exclamation marks are interchanged in $\tester(s_C)$ to indicate that the input and output alphabets have been interchanged with respect to $s_C$.
      }
      \label{fig:test-case}
    \end{center}
  \end{figure}
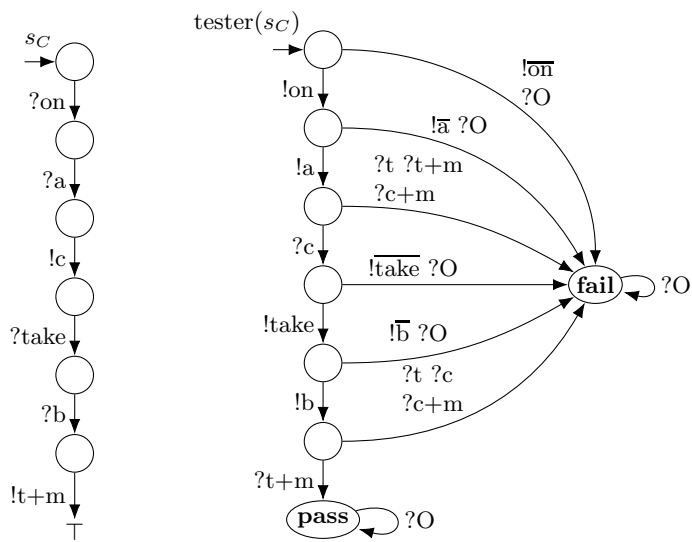
\end{example}

%
%
%
%
%

\section{Conclusion and Future Work}

Alternating interface automata serve as a natural and direct representation for sets of input-failure traces, and therefore also for refinement of systems with inputs, outputs, non-determinism and conjunction.
We have used the observational nature of input-failure traces to define testers, describing an experiment to observationally establish refinement of a black-box system.

The disjunction and conjunction of alternation brings interface automata specifications closer to the realm of logic and lattice theory.
On the theoretical side, a possible direction is to extend configurations from distributive lattices to a full logic.
On the practical side, classical testing techniques acting on logical expressions, such as combinatorial testing, could be translated to our black-box configurations of states.

Possible criticism on our running example of a vending machine $s_B$ in Figure~\ref{fig:ALTS-coffee} may be that its representation as an AIA is not concise, since the determinization $\det(s_B)$ is much smaller and more understandable than $s_B$ itself.
This is because the individual specifications offer a choice between outputs, such as tea with or without milk, whereas the intersection of all choices is singleton.
A more natural encoding for this example is to express the types of drink with data data parameters, and the restrictions on them by logical constraints.
This requires an automaton model in style of symbolic transition systems~\cite{FrTr07}, which could be enriched with the concepts of alternation of AIAs.

Interface automata typically contain internal transitions, and the interaction between internal behaviour and alternation is not immediately clear.
A possible approach to extend AIAs with internal behaviour is to lift the $\epsilon$-closure of~\cite{AlHe01}, the set of states reachable via internal transitions, to the level of configurations.

%



\subsection*{\ackname}
We thank Jan Tretmans and Frits Vaandrager for their valuable feedback.

\FloatBarrier

\bibliographystyle{plain}
\bibliography{bib}{}

\end{document}